%% file: main.tex
\documentclass[12pt]{article}

\usepackage[margin=1in]{geometry}
\usepackage{amsmath,amssymb,amsthm,mathtools}
\usepackage{booktabs}
\usepackage{graphicx}
\usepackage{xcolor}
\usepackage{placeins}
\usepackage{microtype}
\usepackage[colorlinks=true,linkcolor=blue,citecolor=blue,urlcolor=blue,hyperfootnotes=false]{hyperref}
\usepackage[authoryear]{natbib}
\usepackage{setspace}
\newtheorem{theorem}{Theorem}
\newtheorem{proposition}{Proposition}
\newtheorem{corollary}{Corollary}
\newtheorem{lemma}{Lemma}
\theoremstyle{remark}

\newcommand{\R}{\mathbb{R}}
\newcommand{\E}{\mathbb{E}}
\newcommand{\Var}{\mathrm{Var}}
\newcommand{\Pbb}{\mathbb{P}}
\newcommand{\norm}[1]{\left\lVert #1 \right\rVert}
\newcommand{\risk}{\mathcal{R}}

\title{Variance or Standard Deviation? Shell Geometry and Global-Scale Priors in High-Dimensional Shrinkage}
\author{Wayne Yuan Gao\thanks{Department of Economics, University of Pennsylvania, 
 waynegao@upenn.edu.}\,\, and Zhiheng You\thanks{Department of Economics, University of Pennsylvania, 
 zhyou@sas.upenn.edu.}}

\begin{document}
\maketitle

\begin{abstract}
\noindent We study how the choice of default prior for a common Gaussian scale affects high-dimensional shrinkage risk, highlighting the role played by high-dimensional geometry. Formally, we consider a high-dimensional setting in which the near-zero behavior of the common scale prior has first-order consequences for shrinkage risk, and show that priors that are flat on the variance and those flat on the standard deviation allocate markedly different mass near the zero-scale boundary, leading to distinct shrinkage behavior and informing principled default prior selection. Specifically, under a radial-power benchmark, we establish that the SD-flat benchmark has a one-unit asymptotic risk advantage near the origin, crosses over in the critical regime, and is second-order equivalent to the variance-flat benchmark for strong signals. Proper single global-scale hyperpriors and bounded coordinate-multiplier mixtures inherit these limits through the near-zero exponent of their SD-scale density. For heavier-tailed or sparse priors, that exponent still classifies the common global-scale component, while local-scale tails, model-size priors, or allocation priors can also affect risk.\\

\noindent \textbf{Keywords:} Bayesian, high-dimensional geometry, prior, scale, shrinkage
\end{abstract}

\newpage

\section{Introduction}

A recurring default choice in hierarchical Bayes and high-dimensional shrinkage is how to place a prior on a common global Gaussian scale. Should the benchmark default be flat on the variance component or flat on the standard deviation? The two formulations are often treated as nearby modeling choices, but in high dimensions, they induce different scale geometries near the origin. The question is whether these different ways of weighting scales have systematically different frequentist risk consequences in high-dimensional settings, and how this should inform proper prior choices in practice.

We study these questions in the Gaussian normal means problem through a radial-power benchmark that nests the two canonical defaults: one corresponding to Stein's harmonic prior matches a flat prior on the variance, while the other matches a flat prior on the standard deviation. We analyze the generalized Bayes posterior mean across three asymptotic regimes for signal energy, meaning the squared Euclidean size of the unknown mean vector, as the dimension $d$ increases to infinity: weak-signal sequences with bounded energy, a critical regime in which signal energy is on the natural $\sqrt d$ fluctuation scale of Gaussian noise, and strong-signal sequences with energy of order $d$.

High dimension is not only of increasing relevance in applied work, but also provides a theoretical framework that makes the prior-scale question geometrically sharp. In high-dimensional Euclidean space, volume is concentrated in spherical shells: in polar coordinates the volume element contains the factor $r^{d-1}$. Here and below, a shell means a spherical shell or thin annulus: a set of points whose Euclidean norm lies in a small interval. Likewise, a standard Gaussian noise vector in $d$ dimensions typically lies in a thin annulus around radius $\sqrt d$; equivalently, its squared norm is $d$ plus random fluctuations of order $\sqrt d$. Such concentration of measure is one of the central lessons of high-dimensional geometry: mass can concentrate on sets that look negligible from low-dimensional intuition, and small changes in radial parametrization can become first-order under the relevant near-zero scaling \citep{Ball1997,Ledoux2001,Vershynin2018}. Related geometric effects also underlie phase-transition phenomena in high-dimensional data analysis and signal processing \citep{DonohoTanner2009}. 

In the Gaussian normal-means experiment considered here, rotational invariance makes this geometry explicit. The benchmark posterior depends on the data through their squared Euclidean norm, and the asymptotic regimes are organized by the squared Euclidean size of the unknown mean. The two defaults (variance-flat versus SD-flat) differ exactly in how they weight radial shells near the zero-scale boundary. The high-dimensional limit therefore converts a seemingly local reparameterization choice (over variance or SD) into a quantitative risk comparison.

Our main results are as follows. In the weak regime, the SD-flat default improves on the variance-flat benchmark by one asymptotic risk unit. In the critical regime, we obtain an explicit limit for the centered risk and show that the SD-scale advantage holds at lower critical-regime signal strengths but can reverse as signal strength within that regime grows. In the strong regime, the risk expansion is universal across fixed radial exponents up to second order, and thus the two defaults yield asymptotically equivalent risk. 

We then prove that these benchmark results transfer to a much wider range of settings, formalizing the heuristic that the key tradeoff in the choice of variance vs SD is driven by the high-dimensional scale geometry rather than by the particular radial-power benchmark. First, proper single global-scale hyperpriors inherit the same low-signal behavior through the near-zero exponent of their global SD density. Second, the same conclusion holds for bounded coordinate-multiplier normal mixtures. This class contains bounded global--local normal scale mixtures when the multipliers are positive continuous local scales, and Bernoulli spike-and-slab priors with fixed inclusion probability when the multipliers are binary indicators. The theorem separates the contribution of the common scale from additional mechanisms (heavy local-scale tails, dimension-dependent Dirichlet weight vectors, and model-size priors) that govern risk for common sparse priors outside the bounded-multiplier class.

The benchmark comparison belongs to the classical normal-means shrinkage literature because the benchmark rules are generalized Bayes estimators in the same radial family studied by \citet{Stein1956,Brown1971,Strawderman1971,MaruyamaTakemura2008,BrownZhao2012}. Our emphasis is different, however. We do not study admissibility, minimaxity, or other fixed-dimensional decision-theoretic properties. The object here is frequentist risk under high dimension ($d\to\infty$) asymptotics. On the hierarchical-Bayes side, \citet{Gelman2006} emphasizes that default priors for variance components are not innocuous under reparameterization and argues for working on the standard-deviation scale, leading to the half-$t$ family as a default for hierarchical variance parameters. \citet{PolsonScott2012} single out the half-Cauchy as a useful prior for a common global scale, highlighting its heavy tails, substantial mass near zero, and convenient scale-mixture representation. \citet{PiironenVehtari2017Hyperprior,PiironenVehtari2017Sparsity} make a complementary point for the horseshoe: the numerical scale of the global parameter can be calibrated from prior sparsity information, for example through the induced effective number of nonzero coefficients. The present analysis addresses a different component of the prior specification: it does not choose the numerical scale, but identifies the near-zero SD-scale exponent that controls high-dimensional low-signal risk. Thus the prior shape near zero and the numerical scale calibration are distinct components of the global-scale specification.

The comparison also connects to the modern shrinkage literature motivated by sparsity. Global--local priors such as the horseshoe and its variants were developed for nearly-black signals and related sparse normal means problems \citep*{CarvalhoPolsonScott2010,vanDerPasKleijnvanDerVaart2015,BhadraDattaPolsonWillard2016}. Spike-and-slab priors give a second major route to sparse shrinkage and variable selection: the standard specification introduces Bernoulli inclusion indicators, a model-size prior or inclusion-probability prior, and a slab distribution for included coefficients. Fully Bayes and empirical Bayes treatments of the inclusion probability provide multiplicity adjustment in high-dimensional model selection \citep{ScottBerger2010,CastilloSzabo2020}. \citet{Rockova2018Continuous} studies continuous spike-and-slab priors in sparse normal means, while \citet{RockovaGeorge2018SSL} develop the spike-and-slab LASSO for high-dimensional regression with adaptive nonseparable penalties and multiplicity adjustment. In the regression setting with unknown noise level, \citet{MoranRockovaGeorge2019Variance} show that variance-prior form can materially affect high-dimensional Bayesian variable selection. These papers are essential background, but they answer a different question from the one studied here. We deliberately average over directions and focus on the common global-scale component, because many applications feature many weak effects rather than a clean split between a few large coordinates and exact zeros. This viewpoint is compatible with sequence-space work outside exact sparsity \citep{JohnstoneSilverman2004}, with hybrid dense-plus-sparse formulations \citep{ChernozhukovHansenLiao2017}, and with recent econometric arguments that exact sparsity can be empirically fragile \citep{GiannoneLenzaPrimiceri2021,KolesarMullerRoelsgaard2025}.

The practical message is therefore more transferable than a mere ranking of named priors. We identify the near-zero exponent of the common shrinkage scale as the quantity that governs low-signal isotropic risk when the remaining prior components preserve the same local near-zero asymptotic problem. For heavy-tailed global--local priors, Dirichlet--Laplace and R2-D2 shrinkage priors, and sparse model-selection priors, this exponent remains a useful specification of the common scale, while the full frequentist risk also depends on the additional component that defines the prior architecture.

The rest of the paper is organized as follows. Section~\ref{sec:model} introduces the benchmark model, the scale-mixture representation, the three asymptotic risk regimes, and our main results. Section~\ref{sec:proper} transfers the benchmark results to proper priors, to common single global-scale hyperpriors, and to bounded coordinate-multiplier priors. Section~\ref{sec:sim} gives numerical illustrations, and Section~\ref{sec:discussion} concludes. The appendices contain proofs and auxiliary calculations.

\section{Benchmark Setup and Main Results}
\label{sec:model}

\subsection{Setup}
\label{sec:setup}
Consider the following normal means experiment
\[
X_d\sim N_d(\theta_d,I_d),
\]
under a high-dimensional setting where $d\to\infty$. We estimate the normal mean vector $\theta_d\in\R^d$ by shrinkage rules. In the formulation used below, the shrinkage rules are generalized Bayes posterior means induced by appropriate priors on $\theta_d$, so the choice of prior determines the corresponding shrinkage rule.\footnote{We use ``generalized Bayes'' in the standard decision-theoretic sense: the formal posterior mean is computed from an improper prior when the marginal integral is finite, so the resulting estimator need not arise from a proper probability prior. Generalized Bayes shrinkage rules based on improper radial priors are classical in the normal-means problem; see, for example, \citet{Brown1971,Strawderman1971} and \citet{BergerStrawdermanTang2005}.}

Our benchmark family of priors is the (improper) radial prior
\[
\pi_{d,c}(\theta)\propto \norm{\theta}^{-(d-c)},
\quad c>0,
\]
which nests the two commonly used global-scale geometries of interest. The family is a natural benchmark because rotational invariance removes directional modeling choices, and the polar-coordinate calculation below shows that the single exponent $c$ controls the radial-coordinate weighting. It therefore isolates the global-scale coordinate question without mixing it with sparsity, tail, or coordinate-selection mechanisms.

In polar coordinates $\theta=r\omega$, the prior $\pi_{d,c}(\theta)\propto\norm{\theta}^{-(d-c)}$ induces marginal measure
\[
r^{c-1}\,dr
\]
for the radial coordinate $r=\norm{\theta}$. Hence $c=1$ is flat in $r$, corresponding to the SD-flat benchmark. Equivalently, it gives equal mass to spherical shells of equal radial thickness. By contrast, $c=2$ is flat in $r^2=\norm{\theta}^2$, corresponding to the variance-flat benchmark.

Alternatively, we may also view the radial prior from the following perspective.

\begin{proposition}[Scale-mixture representation]
\label{prop:mixture}
For $c>0$ and $d>c$,
\[
\norm{\theta}^{-(d-c)}
\propto
\int_0^\infty N_d(\theta;0,gI_d)\,g^{c/2-1}\,dg.
\]
\end{proposition}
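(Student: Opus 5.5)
The plan is to evaluate the integral on the right-hand side in closed form. Write out the Gaussian density, $N_d(\theta;0,gI_d)=(2\pi g)^{-d/2}\exp\{-\norm{\theta}^2/(2g)\}$. The right-hand side then equals
\[
(2\pi)^{-d/2}\int_0^\infty g^{-(d-c)/2-1}\exp\Bigl\{-\frac{\norm{\theta}^2}{2g}\Bigr\}\,dg .
\]
The exponent is $-d/2+c/2-1=-(d-c)/2-1$. Fix $\theta\neq0$ and set $s=\norm{\theta}^2$. Substitute $u=s/(2g)$, so that $g=s/(2u)$ and $dg=-s/(2u^2)\,du$. This turns the integral into an inverse-gamma-type integral:
\[
\Bigl(\frac{s}{2}\Bigr)^{-(d-c)/2}\int_0^\infty u^{(d-c)/2-1}e^{-u}\,du=\Bigl(\frac{s}{2}\Bigr)^{-(d-c)/2}\,\Gamma\Bigl(\frac{d-c}{2}\Bigr).
\]
Since $s^{-(d-c)/2}=\norm{\theta}^{-(d-c)}$, the right-hand side is $C_{d,c}\norm{\theta}^{-(d-c)}$ with the explicit constant
\[
C_{d,c}=(2\pi)^{-d/2}\,2^{(d-c)/2}\,\Gamma\bigl((d-c)/2\bigr),
\]
which does not depend on $\theta$.

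The only real issue is convergence, and this is where the hypotheses enter. Near $g=0$ the factor $\exp\{-s/(2g)\}$ kills any power of $g$ provided $s>0$. Near $g=\infty$ the integrand behaves like $g^{-(d-c)/2-1}$, which is integrable if and only if $d>c$. Equivalently, $\Gamma((d-c)/2)$ is finite exactly when $d>c$. The condition $c>0$ is needed only for the interpretation of the result: it makes the radial measure $r^{c-1}\,dr$ locally integrable at $0$. It also makes $\norm{\theta}^{-(d-c)}$ locally integrable in $\R^d$, so both sides define the same $\sigma$-finite measure. At $\theta=0$ both sides are $+\infty$ (the integral diverges at $g=0$), which is a null set, so the identity holds almost everywhere and hence as measures.

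As a cross-check on the exponents, I would also note the polar-coordinate reading. The mixing density $g^{c/2-1}\,dg$ becomes $\sigma^{c-1}\,d\sigma$ (up to a factor of 2) under $g=\sigma^2$. This is consistent with the radial marginal $r^{c-1}\,dr$ stated above, so $c=1$ corresponds to a flat prior on the SD and $c=2$ to a flat prior on the variance. No step is a genuine obstacle; the care needed is in getting the power of $g$ right and in stating the $d>c$ integrability condition.
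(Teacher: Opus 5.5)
Your proposal is correct and follows essentially the same route as the paper: both write out the Gaussian kernel, make the substitution $u=\norm{\theta}^2/(2g)$, and arrive at a gamma integral with the same constant $2^{(d-c)/2}\Gamma\bigl((d-c)/2\bigr)$. Your added remarks on convergence — $d>c$ is needed at $g=\infty$, and $c>0$ matters only for local integrability — together with the remark on the null point $\theta=0$ are correct refinements that the paper leaves implicit.
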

Proposition~\ref{prop:mixture} implies that $\pi_{d,c}$ is the marginal prior induced by the conjugate hierarchy $\theta\mid g\sim N_d(0,gI_d)$ with hyperprior $p(g)\propto g^{c/2-1}$ on the global variance component $g$. 

Hence, the two defaults of interest are:
\begin{itemize}
 \item Variance-Flat: $$c=2 \quad\iff\quad p(g)\propto 1.$$
 \item SD-Flat: $$c=1 \quad\iff\quad p(\tau)\propto 1,\text{ where } \tau:=\sqrt g$$
\end{itemize}

Both defaults have been well studied in previous literature. The variance-flat choice is algebraically conjugate and, through Proposition~\ref{prop:mixture}, coincides with the harmonic radial prior that anchors much of the classical Stein-shrinkage literature \citep{Stein1956,Brown1971,Strawderman1971,MaruyamaTakemura2008,BrownZhao2012}. The SD-flat choice is the default advocated in hierarchical variance-component modeling when the scale itself is the interpretable quantity; it is also the near-zero behavior of half-normal, half-$t$, and half-Cauchy scale priors with positive density at zero \citep{Gelman2006,PolsonScott2012}.

\medskip

Motivated by this radial scale geometry, we compare the two defaults by the frequentist squared-error risk of their generalized Bayes posterior means. For an estimator $\delta$, define
\[
\risk(\theta,\delta)=\E_\theta \norm{\delta(X)-\theta}^2.
\]
For later use, write $T_d=\norm{X_d}^2$ for the quadratic statistic, or equivalently the observed squared norm, and write $Y=g/(1+g)\in(0,1)$ for the variance-mixture shrinkage factor. We refer to $\norm{\theta_d}$ as the signal norm and to $\norm{\theta_d}^2$ as the signal energy. The generalized Bayes posterior mean under $\pi_{d,c}$ is the isotropic shrinkage rule
\[
\delta_{d,c}(x)=s_{d,c}(\norm{x}^2)x,
\quad
s_{d,c}(t)=\E[Y\mid T_d=t].
\]

The relevant asymptotic scale is the scale of the quadratic statistic. The same Gaussian sequence model underlies much of the high-dimensional shrinkage and sparse-normal-means literature, where regimes are often defined by sparsity or by signal energy rather than by fixed dimension alone.\footnote{For sparsity-based regimes in the normal-means problem, one typically writes $s_d=|\{j:\theta_{j,d}\ne0\}|$ and distinguishes nearly-black or sparse classes, such as $s_d=o(d)$, from dense classes with $s_d\asymp d$; see \citet{JohnstoneSilverman2004,CarvalhoPolsonScott2010,vanDerPasKleijnvanDerVaart2015}. Detection formulations instead parameterize alternatives by an $\ell_2$ radius or total signal energy relative to Gaussian noise; see \citet{IngsterSuslina2000}.}

We study three asymptotic regimes for the signal energy: as $d\to\infty$,
\[
\begin{aligned}
\text{Weak Regime: }&\ \norm{\theta_d}^2\to \nu,\\
\text{Critical Regime: }&\ \frac{\norm{\theta_d}^2}{\sqrt d}\to \beta,\\
\text{Strong Regime: }&\ \norm{\theta_d}^2=\rho d+\kappa+o(1).
\end{aligned}
\]
The weak regime has bounded total energy, the critical regime has signal energy on the null fluctuation scale of $T_d$, and the strong regime has energy proportional to dimension.

The three regimes are tied to fluctuations of the quadratic statistic rather than chosen ad hoc. When $\theta_d=0$, $T_d\sim\chi^2_d$, so $\E T_d=d$ and $\Var(T_d)=2d$, and $T_d-d$ is on the scale of $O(\sqrt d)$ with high probability \citep{LaurentMassart2000}, consistent with the Gaussian-annulus viewpoint in high-dimensional probability \citep{Ledoux2001,Vershynin2018}. A bounded signal energy (in the weak regime) is therefore below the intrinsic null fluctuation scale, $\norm{\theta_d}^2\asymp \sqrt d$ is the critical scaling at which the signal shifts $T_d$ by the same order as noise, and $\norm{\theta_d}^2\asymp d$ changes the normalized value $T_d/d$ at leading order.

\medskip

The following two lemmas are useful for the frequentist-risk calculations. Lemma~\ref{prop:beta} reduces the posterior shrinkage factor to a one-dimensional conditional distribution given the quadratic statistic, and Lemma~\ref{lem:riskid} rewrites squared-error risk in terms of that shrinkage factor.

\begin{lemma}[Posterior representation]
\label{prop:beta}
If $d>c$, then conditional on $T_d=t$ the posterior density of $Y$ is proportional to
\[
y^{c/2-1}(1-y)^{(d-c)/2-1} e^{ty/2}, \quad 0<y<1.
\]
\end{lemma}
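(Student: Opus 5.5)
We use the conjugate hierarchy from Proposition~\ref{prop:mixture}: $\theta\mid g\sim N_d(0,gI_d)$, $p(g)\propto g^{c/2-1}$, with $X\mid\theta\sim N_d(\theta,I_d)$. Integrating out $\theta$ gives $X\mid g\sim N_d(0,(1+g)I_d)$. This marginal depends on $x$ only through $t=\norm{x}^2$, so the formal posterior of $g$ given $X=x$ equals its posterior given $T_d=t$. It is proportional to
\[
(1+g)^{-d/2}\exp\!\Big(-\frac{t}{2(1+g)}\Big)\,g^{c/2-1},\qquad g>0.
\]
The posterior of $\theta$ given $(g,x)$ is $N_d\big(\tfrac{g}{1+g}x,\tfrac{g}{1+g}I_d\big)$. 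This gives the posterior-mean formula $\delta_{d,c}(x)=\E[Y\mid T_d=t]\,x$, but the lemma itself concerns only the law of $Y$.

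We then change variables to $y=g/(1+g)\in(0,1)$. This gives $g=y/(1-y)$, $1+g=1/(1-y)$, and $dg=(1-y)^{-2}dy$. Substituting term by term:
\begin{itemize}
\item $(1+g)^{-d/2}=(1-y)^{d/2}$;
\item $\exp(-t/(2(1+g)))=\exp(-t(1-y)/2)=e^{-t/2}e^{ty/2}$, where $e^{-t/2}$ is constant in $y$ and is absorbed into the proportionality constant;
\item $g^{c/2-1}=y^{c/2-1}(1-y)^{-(c/2-1)}$;
\item the Jacobian contributes $(1-y)^{-2}$.
\end{itemize}
Collecting the powers of $(1-y)$ gives exponent $d/2-c/2+1-2=(d-c)/2-1$. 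The posterior density is therefore proportional to $y^{c/2-1}(1-y)^{(d-c)/2-1}e^{ty/2}$, as claimed.

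The only step needing care is showing that this generalized posterior is well defined, that is, that the normalizing integral is finite. This is where $d>c$ and $c>0$ enter. Near $y=0$, integrability of $y^{c/2-1}$ requires $c>0$. Near $y=1$, the factor $e^{ty/2}$ is bounded, and integrability of $(1-y)^{(d-c)/2-1}$ requires $d>c$. The integral is therefore finite for every $t\ge 0$.

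A Fubini/Tonelli argument, valid since all integrands are nonnegative, justifies interchanging the $g$- and $\theta$-integrals in the marginal. The same argument shows that the formal posterior computed from the improper prior $\pi_{d,c}$ coincides with the one obtained from the hierarchy.
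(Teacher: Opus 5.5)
Your proposal is correct and follows essentially the same route as the paper. Both arguments take the posterior kernel $g^{c/2-1}(1+g)^{-d/2}\exp\{-t/(2(1+g))\}$ from the conjugate hierarchy of Proposition~\ref{prop:mixture}, change variables to $y=g/(1+g)$ with Jacobian $(1-y)^{-2}$, and collect the powers of $(1-y)$. Your extra check that the normalizing integral is finite ($c>0$ at $y=0$, $d>c$ at $y=1$) makes explicit where the hypothesis $d>c$ enters, which the paper leaves implicit.
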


\begin{lemma}[Exact risk identity]
\label{lem:riskid}
For every $d>c$,
\[
\risk(\theta_d,\delta_{d,c})
=
\norm{\theta_d}^2 + 2c - \E\!\left[T_d s_{d,c}(T_d)^2\right].
\]
\end{lemma}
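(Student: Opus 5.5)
Since $\delta_{d,c}(X)=s(T)X$ with $s=s_{d,c}$ and $T=T_d=\norm{X}^2$, the route is to use Stein's unbiased risk estimate for the weakly differentiable map $g(x)=s(\norm{x}^2)x$. Expand
\[
\norm{s(T)X-\theta}^2=\norm{X-\theta}^2-2(1-s(T))X^\top(X-\theta)+(1-s(T))^2T .
\]
Stein's identity $\E[h(X)^\top(X-\theta)]=\E[\nabla\cdot h(X)]$, applied to $h(x)=(1-s(\norm{x}^2))x$, gives
\[
\nabla\cdot h=d(1-s)-2Ts'(T).
\]
Combining this with $\E\norm{X-\theta}^2=d$ yields the unbiased risk expression
\[
\risk=\E\bigl[d-2d(1-s)+4Ts'+(1-s)^2T\bigr].
\]

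To eliminate $s'$, I will differentiate the posterior representation of Lemma~\ref{prop:beta}. The exponential family in $t$ gives $s'(t)=\tfrac12\Var(Y\mid T=t)=\tfrac12\bigl(\E[Y^2\mid t]-s^2\bigr)$. A second relation should come from integrating by parts in $y$ against the density $y^{c/2-1}(1-y)^{(d-c)/2-1}e^{ty/2}$, after multiplying by $y(1-y)$ so that the boundary terms vanish when $c>0$ and $d>c$. Differentiating $y^{c/2}(1-y)^{(d-c)/2}e^{ty/2}$ gives
\[
\tfrac{c}{2}(1-y)-\tfrac{d-c}{2}y+\tfrac t2\,y(1-y),
\]
whose posterior expectation is zero. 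This yields
\[
c-d\,s+t\bigl(s-\E[Y^2\mid t]\bigr)=0,
\qquad\text{that is,}\qquad
t\,\E[Y^2\mid t]=c-ds+ts .
\]
Hence
\[
4ts'=2t\E[Y^2\mid t]-2ts^2=2c-2ds+2ts-2ts^2 .
\]

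Substituting this into the SURE expression, the integrand becomes
\[
d-2d+2ds+2c-2ds+2ts-2ts^2+T-2Ts+Ts^2=-d+2c+T-Ts^2 .
\]
Taking expectations and using $\E T_d=d+\norm{\theta_d}^2$ gives exactly
\[
\norm{\theta_d}^2+2c-\E[T s^2],
\]
which is the claim.

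The main obstacle is the technical justification: weak differentiability of $g$ and integrability of $T s'(T)$ and $Ts^2$, so that Stein's identity applies, and the vanishing of the boundary terms in the $y$-integration by parts. Since $0<s<1$ and $0\le s'\le\tfrac12\Var(Y\mid t)\le\tfrac18$, all terms have finite moments under the Gaussian, and $s$ is smooth on $(0,\infty)$. The boundary factor $y^{c/2}(1-y)^{(d-c)/2}$ vanishes at $0$ and $1$ because $c>0$ and $d>c$. Near $x=0$, $g$ is bounded and Lipschitz away from the origin, which suffices for $d\ge 2$. For $d=1$ I would check the behavior of $s$ near $t=0$ directly; there $s(t)\to c/d$ smoothly, so there is no issue.
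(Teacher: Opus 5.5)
Your proof is correct and follows essentially the same route as the paper: Stein's identity for the isotropic rule, $s'(t)=\tfrac12\Var(Y\mid T=t)$ from the exponential-family form in Lemma~\ref{prop:beta}, and the integration-by-parts moment identity $t\,\E[Y^2\mid T=t]=c+(t-d)s(t)$. The only differences are cosmetic: you apply Stein to $(1-s)x$ rather than to $\delta_{d,c}(x)-x$, and your case split at the origin is unnecessary, since $s$ is smooth on all of $\mathbb{R}$, so $x\mapsto s(\norm{x}^2)x$ is $C^\infty$ for every $d$.
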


\subsection{Weak Regime}
\label{sec:weak}

We start with the weak regime, where the total signal energy remains bounded. In this regime the posterior mean is evaluated close to the origin, where the radial exponent $c$ directly controls the risk constant.

\begin{theorem}[Weak-signal limit]
\label{thm:bounded}
If $\norm{\theta_d}^2\to \nu<\infty$, then
\[
\risk(\theta_d,\delta_{d,c})\to \nu+c.
\]
In particular,
\[
\risk(\theta_d,\delta_{d,1})-\risk(\theta_d,\delta_{d,2})\to -1.
\]
\end{theorem}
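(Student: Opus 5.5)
Using Lemma~\ref{lem:riskid}, the risk equals $\norm{\theta_d}^2+2c-\E[T_d s_{d,c}(T_d)^2]$. Since $\norm{\theta_d}^2\to\nu$, it suffices to show $\E[T_d s_{d,c}(T_d)^2]\to c$. In the weak regime the signal energy is bounded, so $T_d=d+O_p(\sqrt d)$, and the posterior of $Y$ from Lemma~\ref{prop:beta} should concentrate near $y=0$ on the scale $1/\sqrt d$. The task is therefore a local asymptotic analysis of $s_{d,c}(t)=\E[Y\mid T_d=t]$ for $t=d+z\sqrt d$, followed by taking expectations over $z$.

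\textbf{Step 1 (local limit of the shrinkage factor).} Substitute $y=u/\sqrt d$ in the density of Lemma~\ref{prop:beta}. Then $(1-y)^{(d-c)/2-1}e^{ty/2}=\exp\{\tfrac{(d-c)}{2}\log(1-u/\sqrt d)+\tfrac{t u}{2\sqrt d}+o(1)\}$. With $t=d+z\sqrt d$ this becomes $\exp\{zu/2-u^2/4+o(1)\}$. The posterior of $U=\sqrt d\,Y$ thus converges to the density proportional to $u^{c/2-1}e^{zu/2-u^2/4}$ on $(0,\infty)$. Hence $\sqrt d\, s_{d,c}(d+z\sqrt d)\to m_c(z):=\E[U]$ under that limit law, locally uniformly in $z$. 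Consequently $T_d s_{d,c}(T_d)^2\approx m_c(Z_d)^2$, where $Z_d=(T_d-d)/\sqrt d\Rightarrow N(0,2)$, because $\norm{\theta_d}^2/\sqrt d\to0$.

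\textbf{Step 2 (evaluate the limit).} We need $\E[m_c(Z)^2]=c$ for $Z\sim N(0,2)$. Rather than computing directly, I would reuse Lemma~\ref{lem:riskid} in a Stein-type way. The limit risk expression is $2c-\E[m_c(Z)^2]$, which should equal the risk of the limiting one-dimensional problem. Alternatively, compute directly. Write $m_c(z)=2\,\partial_z\log F(z)$ with $F(z)=\int_0^\infty u^{c/2-1}e^{zu/2-u^2/4}du$. A Gaussian integration by parts with respect to $Z\sim N(0,2)$, using $F$ and its derivatives, should give $\E[m_c^2]=c$. Concretely, the identity $\E[Z\,h(Z)]=2\E[h'(Z)]$ applied to $h=m_c$ links $\E m_c'$ and $\E[Z m_c]$. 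The ODE satisfied by $F$, obtained from integrating $\partial_u(u^{c/2}e^{zu/2-u^2/4})$, gives $F''$-type relations: $\tfrac c2 F+ \tfrac z2 G - \tfrac12 H=0$, where $G=\int u\cdot$ and $H=\int u^2\cdot$. Combining the two should yield $\E[m_c^2]=c$. A sanity check is available for $c=2$: there the estimator behaves like James--Stein near the origin, with risk $\approx 2$ when $\theta=0$, consistent with $\nu+c$.

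\textbf{Step 3 (uniform integrability).} This is the main obstacle. We must justify exchanging the limit and the expectation in $\E[T_d s_{d,c}(T_d)^2]$. I would bound $s_{d,c}(t)\le1$ trivially. I would also show $\sqrt d\, s_{d,c}(t)\le C(1+|z|_+)$ for $t=d+z\sqrt d$, by comparing the Beta-type density with its Gaussian envelope; the log-concavity of $(1-y)^{(d-c)/2-1}$ gives $\log(1-y)\le -y$. This bound yields $T_d s^2\le C(1+Z_d^2)$ on $\{T_d\le 2d\}$. On the complement, $T_d s^2\le T_d$, and the $\chi^2$ tail bounds of \citet{LaurentMassart2000} make the contribution $o(1)$. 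Since $\E Z_d^2$ is bounded, dominated convergence then gives the limit $c$. The difference statement follows by subtracting the limits for $c=1$ and $c=2$.
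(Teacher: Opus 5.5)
Your structure is the paper's. The paper proves this theorem as the $\beta=0$ case of Theorem~\ref{thm:main}, using three ingredients. The first is Lemma~\ref{lem:riskid}. The second is the local limit $\sqrt d\,s_{d,c}(d+\sqrt d\,z)\to h_c(z)$, which is your $m_c$; its dominated-convergence step uses the envelope from $\log(1-r)\le -r-r^2/2$. The third is exactly your Step~2: the $u$-integration by parts gives $h_c'=\tfrac c2+\tfrac z2h_c-\tfrac12h_c^2$, and Stein's identity for $N(0,2)$ then forces $\E[h_c(Z)^2]=c$. The growth bound $0<h_c(z)\le z_++\sqrt c$, needed to discard boundary terms, follows from Jensen applied to $\E U^2=c+zh_c(z)$.

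The step that does not go through as you describe it is Step~3.

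\emph{The envelope bound is not delivered by your tools.} The tangent bound $\log(1-y)\le -y$ only gives the envelope $u^{c/2-1}\exp\{(z+(c+2)/\sqrt d)u/2\}$, which is not confining for $z>0$. The quadratic envelope from $\log(1-y)\le-y-y^2/2$, compared against a lower bound for the normalizer, also cannot give $\sqrt d\,s_{d,c}\le C(1+z_+)$ uniformly on $\{T_d\le 2d\}$. At $z=\zeta\sqrt d$ the true log-normalizer is $\tfrac d2\{\zeta-\log(1+\zeta)\}=\tfrac{z^2}{4}-\tfrac{z^3}{6\sqrt d}+\cdots$, while the envelope gives $\tfrac{z^2}{4}$. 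So the ratio bound is inflated by a factor of order $e^{z^3/(6\sqrt d)}$ and is useless once $z\gg d^{1/6}$.

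\emph{How to repair it.} You could split at $Z_d\le d^{\epsilon}$ with $\epsilon<1/6$ and use $\chi^2$ tail bounds above that, or use a strong log-concavity argument around the mode in $u$. The paper's device is simpler: it is the finite-$d$ version of your own Step~2 identity. Integrating the $y$-derivative of $y^{c/2}(1-y)^{(d-c)/2}e^{ty/2}$ over $(0,1)$ gives exactly $t\,\E[Y^2\mid T_d=t]=c+(t-d)s_{d,c}(t)$. Jensen then yields $t\,s_{d,c}(t)^2\le c+(t-d)s_{d,c}(t)$, hence $t\,s_{d,c}(t)^2\le C_c\{1+((t-d)_+/\sqrt d)^2\}$ for all $t\ge0$, with no splitting (Proposition~\ref{prop:moment}).

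\emph{Uniform integrability.} Boundedness of $\E Z_d^2$ does not make $\{1+Z_d^2\}$ uniformly integrable. You need, for instance, $\sup_d\E Z_d^4<\infty$, which holds since $\E(\chi^2_d-d)^4=12d^2+48d$. Alternatively, use $\E Z_d^2\to 2=\E Z^2$. Either way, convergence in distribution plus uniform integrability then gives $\E[T_d s_{d,c}(T_d)^2]\to c$.
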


Thus the SD-flat benchmark has a one-unit asymptotic risk advantage over the variance-flat benchmark in the weak regime.

Theorem~\ref{thm:bounded} gives the first main message of the paper: in the weak regime, the SD-flat prior is preferable to the variance-flat one as measured by the asymptotic risk $\mathcal{R}$. This complements the argument of \citet{Gelman2006}, who recommends working on the standard-deviation scale because the standard deviation is directly interpretable in the original model, the improper uniform density on that scale can be understood as a limit of proper half-$t$ priors, and common alternatives such as inverse-gamma priors or a flat prior on the variance can be sensitive near zero or miscalibrated toward larger variance components. It also clarifies how the half-Cauchy recommendation of \citet{PolsonScott2012} enters the present comparison: their case for the half-Cauchy is that it is a proper top-level scale prior with substantial mass near zero and heavy tails, giving good frequentist risk near the origin without severe compromises elsewhere. In our benchmark, its positive finite density at $\tau=0$ places its common-scale behavior in the same near-zero $c=1$ class as the SD-flat default.

Theorem~\ref{thm:bounded} is the bottom endpoint of the critical-regime analysis: a bounded signal energy lies below the $\sqrt d$ null-fluctuation scale of $T_d$, so it does not alter the limiting geometry of the statistic, and the result follows from the proof of Theorem~\ref{thm:main}. The key proof heuristic is therefore deferred until after Theorem \ref{thm:main}. Here we only briefly explain why the limit under the weak regime is so clean and simple. In this regime, given the weakness of the signal, the only relevant feature of the prior  is its near-zero exponent $c$: the centered risk converges to a constant that a Gaussian integration-by-parts identity (established for the critical regime) collapses to exactly $c$, with every other term vanishing. The total energy $\norm{\theta_d}^2\to\nu$ then simply adds back, because the estimator barely shrinks this close to the origin, giving the simple additive form $\nu+c$ and the immediate one-unit gap between the two benchmarks.

\begin{figure}[h!]
\caption{Weak-Regime Excess Risk}
\label{fig:weak-benchmark-main}
\begin{center}
\includegraphics[width=0.78\textwidth,trim=0 0 486.7pt 0,clip]{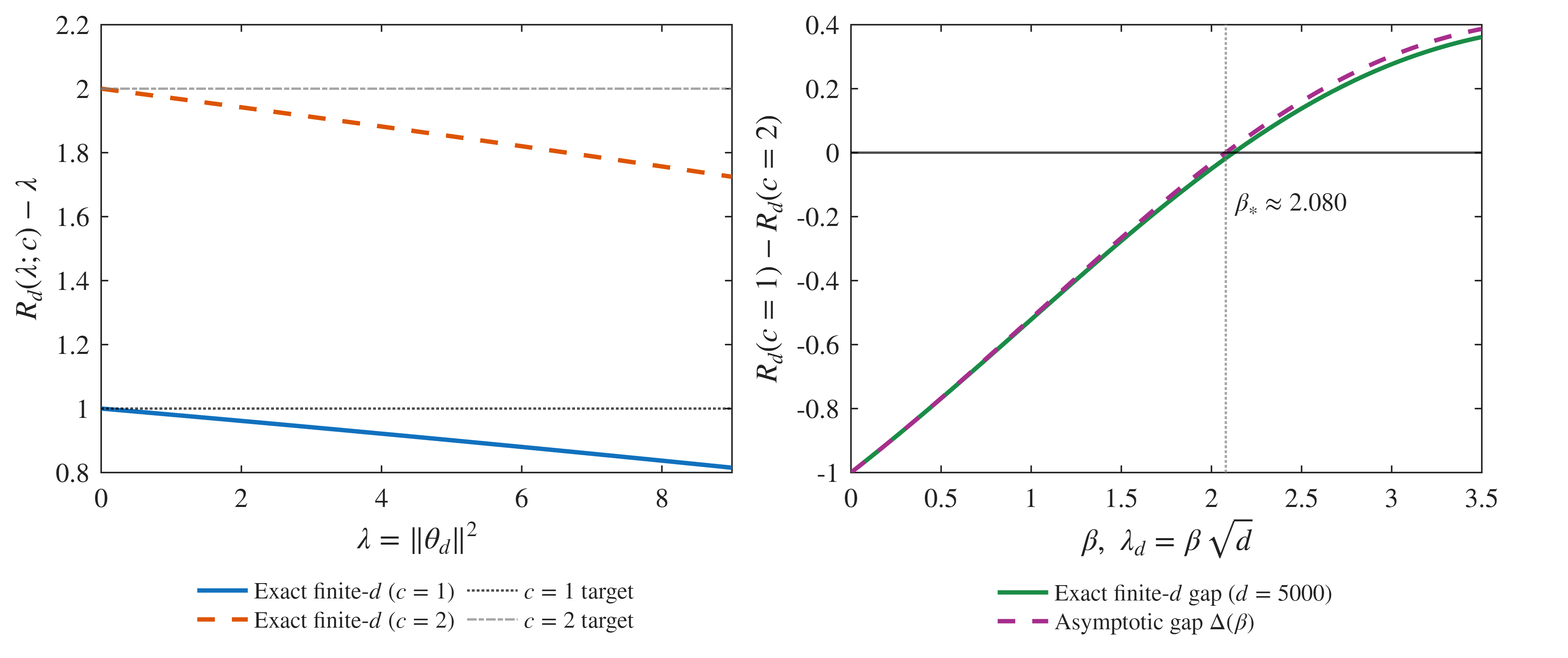}
\end{center}
{\footnotesize {\em Notes}: Weak-regime excess risk $\risk(\theta_d,\delta_{d,c})-\lambda_d$ at $d=2000$, where $\lambda_d=\norm{\theta_d}^2$, with horizontal asymptotic targets $c=1$ and $c=2$.}
\setlength{\baselineskip}{4mm}
\end{figure}

We confirm this asymptotic risk gap numerically. Figure~\ref{fig:weak-benchmark-main} plots the excess risk $\risk(\theta_d,\delta_{d,c})-\lambda_d$, with $\lambda_d=\norm{\theta_d}^2$, for the weak-signal calibration at $d=2000$. The horizontal reference lines are the limiting constants $c=1$ and $c=2$ from Theorem~\ref{thm:bounded}. The finite-$d$ curves are already close to these targets, and the vertical distance between the two curves is close to the one-unit asymptotic gap throughout the displayed bounded-energy range. Section~\ref{sec:sim} gives the simulation design and Monte Carlo calculation.

\subsection{Critical Regime}
\label{sec:shellcritical}

The next regime asks how far this one-unit weak-signal advantage persists once the signal energy changes the quadratic statistic by the same order as its null fluctuation.

The intrinsic fluctuation scale of $T_d=\norm{X_d}^2$ is $\sqrt d$, because
\[
T_d=\norm{\varepsilon_d}^2+2\theta_d^\top\varepsilon_d+\norm{\theta_d}^2,
\quad
\varepsilon_d\sim N_d(0,I_d),
\]
and $\norm{\varepsilon_d}^2-d$ is of order $\sqrt d$. The critical-regime scaling $\norm{\theta_d}^2\asymp \sqrt d$ is therefore the first scaling at which signal energy changes $T_d$ by the same order as its stochastic fluctuations.

\begin{theorem}[Critical-regime risk limit]
\label{thm:main}
Suppose 
$\frac{\norm{\theta_d}^2}{\sqrt d}\to \beta\in[0,\infty)$.
Then
\[
\risk(\theta_d,\delta_{d,c})-\norm{\theta_d}^2\to L_c(\beta),
\]
where 
\begin{equation}\label{eq:L_func}
 L_c(\beta):=\E[h_c(Z_\beta)^2]-2\beta\,\E[h_c(Z_\beta)], 
\end{equation}
with $Z_\beta\sim N(\beta,2)$, $h_c(z):=\E_{q_{c,z}}[U]$, where $U\sim q_{c,z}$, and $q_{c,z}$ denoting the probability law on $(0,\infty)$ with density 
$$q_{c,z}(u)\propto u^{c/2-1}\exp\!\left(\frac{zu}{2}-\frac{u^2}{4}\right).$$
\end{theorem}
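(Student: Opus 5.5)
We start from the exact risk identity of Lemma~\ref{lem:riskid}, so that
\[
\risk(\theta_d,\delta_{d,c})-\norm{\theta_d}^2=2c-\E\!\left[T_d s_{d,c}(T_d)^2\right].
\]
The task then reduces to the distributional behaviour of $T_d$ and to a local asymptotic expansion of $s_{d,c}(t)$ for $t=d+\sqrt d\,z$. Writing $T_d=\norm{\varepsilon_d}^2+2\theta_d^\top\varepsilon_d+\norm{\theta_d}^2$, we have $\Var(2\theta_d^\top\varepsilon_d)=4\norm{\theta_d}^2=O(\sqrt d)=o(d)$. Hence $Z_d:=(T_d-d)/\sqrt d\Rightarrow N(\beta,2)=Z_\beta$ by the CLT for $\chi^2_d$ together with Slutsky.

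\textbf{Local expansion of the shrinkage factor.} By Lemma~\ref{prop:beta}, the posterior of $Y$ given $T_d=t$ is proportional to $y^{c/2-1}(1-y)^{(d-c)/2-1}e^{ty/2}$. Near $y=0$ the exponent is
\[
\tfrac{d}{2}\log(1-y)+\tfrac{ty}{2}\approx \tfrac{(t-d)y}{2}-\tfrac{dy^2}{4}+O(dy^3).
\]
This suggests the substitution $y=u/\sqrt d$ with $t=d+\sqrt d z$. The density of $U=\sqrt d\,Y$ then becomes $u^{c/2-1}\exp(zu/2-u^2/4)\,(1+o(1))$, locally uniformly in $(u,z)$, so $\sqrt d\, s_{d,c}(d+\sqrt d z)\to h_c(z)$. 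We will make this uniform on compact $z$-sets via dominated convergence, using $\log(1-y)\le -y-y^2/2$ to dominate the integrand by $u^{c/2-1}e^{zu/2-u^2/4}$. A separate estimate will be needed near $y=1$, where the factor $(1-y)^{(d-c)/2}e^{ty/2}$ is exponentially small relative to the mode for $z$ bounded.

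\textbf{Splitting the expectation.} The quantity $T_d s_{d,c}(T_d)^2=(T_d/d)\,\big(\sqrt d\, s_{d,c}(T_d)\big)^2$ converges in distribution to $h_c(Z_\beta)^2$. To pass to expectations we need uniform integrability. Here $s\le 1$, so the contribution is at most $T_d$ times an indicator, and this covers the large-$z$ tail where $T_d-d\gg\sqrt d$ has exponentially small probability, via Laurent--Massart applied to the noncentral statistic. On moderate $z$ we use a bound of the form $\sqrt d\,s_{d,c}(d+\sqrt d z)\le C(1+|z|)$, i.e.\ $h_c(z)\lesssim 1+z_+$, obtained from the same domination. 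This yields
\[
\E[T_d s^2]\to \E[h_c(Z_\beta)^2].
\]
However, the limit stated in the theorem is $\E h^2-2\beta\E h$, whereas we would obtain $2c-\E h^2$. We therefore also need the identity
\[
2c-\E h_c(Z_\beta)^2=\E h_c(Z_\beta)^2-2\beta\E h_c(Z_\beta),
\]
that is, $\E h^2-\beta\E h=c$.

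\textbf{The integration-by-parts identity.} Since $Z_\beta\sim N(\beta,2)$, Stein's identity gives $\E[(Z_\beta-\beta)h(Z_\beta)]=2\E h'(Z_\beta)$. Differentiating the exponential family $q_{c,z}$ in $z$ gives $h'(z)=\tfrac12\Var_{q_{c,z}}(U)$. Integrating by parts in $u$ against $q_{c,z}$, i.e.\ using $\int \partial_u\big(u\,q\big)\,du=0$, gives
\[
\tfrac c2+\tfrac z2\E U-\tfrac12\E U^2=0,
\]
so $\E_q U^2=c+zh(z)$, and hence $\Var_q U=c+zh-h^2$. Therefore
\[
\E[(Z-\beta)h]=\E[c+Zh-h^2],
\]
which rearranges exactly to $\E h^2-\beta\E h=c$. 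This closes the argument.

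\textbf{Main obstacle.} We expect the hardest step to be the uniform-integrability and tail control that justifies exchanging limit and expectation. This needs uniform control of $\sqrt d\,s_{d,c}$ over $z$ growing with $d$: for $z\sim\sqrt d$ the mode moves away from $y=0$ and the Gaussian approximation fails. There we simply use $s\le1$ together with exponential tail bounds on $T_d$.
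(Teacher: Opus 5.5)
Your outline matches the paper's everywhere except one step. The reduction via Lemma~\ref{lem:riskid}, the CLT $Z_d\Rightarrow N(\beta,2)$, the boundary-Laplace limit through $y=u/\sqrt d$, and the Stein-plus-Riccati identity $\E h_c(Z_\beta)^2=c+\beta\,\E h_c(Z_\beta)$ are Lemmas~\ref{lem:Zd}, \ref{lem:boundary} and \ref{lem:profile}. No separate estimate near $y=1$ is needed, since $\log(1-r)\le -r-r^2/2$ holds on all of $(0,1)$. Also, the resulting envelope on $|z|\le M$ is $u^{c/2-1+m}e^{Mu/2+u-u^2/8}$, not $u^{c/2-1}e^{zu/2-u^2/4}$.

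The gap is in the uniform-integrability step. You need $v_d(z):=\sqrt d\,s_{d,c}(d+\sqrt d\,z)\le C(1+|z|)$, with $C$ free of $d$, on a range $|z|\le z_d$ where $z_d\to\infty$. You attribute this to ``the same domination, i.e.\ $h_c(z)\lesssim 1+z_+$,'' but neither delivers it.

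\begin{itemize}
\item The bound $h_c(z)\le z_++\sqrt c$ concerns the limit only. It says nothing about $v_d$ uniformly on a growing range.
\item The dominated-convergence envelope discards the exact quadratic coefficient $\tfrac14$. Bounding the numerator $J_{d,1}(z)$ by it and the denominator by its limit gives $v_d(z)$ of order $e^{z^2/4+O(z)}$, not $O(1+z_+)$.
\end{itemize}

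To run your route you would need genuinely two-sided estimates:
\begin{itemize}
\item an upper bound on $J_{d,1}$ that keeps the coefficient $\tfrac14-O(d^{-1})$;
\item a lower bound $J_{d,0}(z)\gtrsim\int_0^{K(1+z_+)}u^{c/2-1}e^{zu/2-u^2/4}\,du\gtrsim I_{c/2}(z)$, obtained from a cubic remainder for $\log(1-r)$, which is uniform only while $z_d^3/\sqrt d$ stays bounded.
\end{itemize}
Taking $z_d=C\sqrt{\log d}$, your $s\le1$ plus exponential-tail argument would then handle $|Z_d|>z_d$. This is doable, but it is the real content of the step, not a consequence of the compact-set argument.

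The paper avoids all of this with the finite-$d$ analogue of your own integration by parts. Let $f$ be the unnormalized posterior density of Lemma~\ref{prop:beta}. Then $\int_0^1\partial_y\{y(1-y)f(y)\}\,dy=0$ gives $t\,\E[Y^2\mid T_d=t]=c+(t-d)s_{d,c}(t)$. By Jensen, $t\,s_{d,c}(t)^2\le c+(t-d)s_{d,c}(t)$, which yields $T_d s_{d,c}(T_d)^2\le C_c\{1+(Z_d^+)^2\}$ for every $d$ and $t$ (Proposition~\ref{prop:moment}). Uniform integrability then follows from $\sup_d\E Z_d^4<\infty$, with no truncation, tail bounds, or uniform Laplace estimates.
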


Through the exact risk identity of Lemma~\ref{lem:riskid}, the proof reduces to the analysis of the limit of the single term $\E[T_d\,s_{d,c}(T_d)^2]$, which is delivered by the following two ingredients. First, at the critical-rigeme scaling the signal displaces the quadratic statistic by the same order as its stochastic fluctuation, so the centered statistic $Z_d=(T_d-d)/\sqrt d$ converges to $N(\beta,2)$: the variance is the $\chi^2$ noise and the mean is the signal energy, while the cross term between signal and noise washes out. Second, because the statistic sits at a distance of order $\sqrt d$ from the origin, the latent variance $g$ is pinned near the boundary $g=0$ on the scale $1/\sqrt d$, and a local boundary-Laplace limit shows the rescaled shrinkage factor converges to the function $h_c$ in which the prior exponent $c$ survives. Combining the two, $T_d\,s_{d,c}(T_d)^2$ behaves like $h_c(Z_\beta)^2$, and centering by $\norm{\theta_d}^2$ leaves $L_c(\beta)$.

\begin{corollary}[Critical-regime risk gap and crossover]
\label{cor:deltatail}
If $\norm{\theta_d}^2/\sqrt d\to \beta\in[0,\infty)$, then
\[
\risk(\theta_d,\delta_{d,1})-\risk(\theta_d,\delta_{d,2})\to \Delta(\beta),
\quad
\Delta(\beta):=L_1(\beta)-L_2(\beta),
\]
and $\Delta(0)=-1$.
As $\beta\to\infty$,
\[
\Delta(\beta)=5\beta^{-2}+O(\beta^{-4}).
\]
Consequently, $\Delta(\beta)>0$ for all sufficiently large $\beta$. Since $\Delta(0)=-1$ and $\Delta$ is continuous, there exists at least one critical-regime crossover $\beta_*\in(0,\infty)$.
\end{corollary}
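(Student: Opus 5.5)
The plan treats the three claims of Corollary~\ref{cor:deltatail} in order: the limit, the value at $\beta=0$, and the large-$\beta$ expansion with the crossover.

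\emph{Limit and value at zero.} The convergence $\risk(\theta_d,\delta_{d,1})-\risk(\theta_d,\delta_{d,2})\to L_1(\beta)-L_2(\beta)$ follows by subtracting the two statements of Theorem~\ref{thm:main}, since the common term $\norm{\theta_d}^2$ cancels. For $\Delta(0)=-1$, the simplest route is to apply Theorem~\ref{thm:bounded} with $\nu=0$, i.e.\ $\theta_d=0$, which also satisfies $\norm{\theta_d}^2/\sqrt d\to 0$. The two theorems describe the same limit, so $L_c(0)=c$ and $\Delta(0)=1-2=-1$.

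As an independent check, one can verify $L_c(0)=\E[h_c(Z_0)^2]=c$ directly. Write $h_c(z)=2\,\partial_z\log M_c(z)$ with $M_c(z)=\int_0^\infty u^{c/2-1}e^{zu/2-u^2/4}\,du$. Gaussian integration by parts against the $N(0,2)$ density then reduces $\E[h_c(Z_0)^2]$ to a boundary term, which an integration by parts in $u$ evaluates as $c$.

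\emph{Continuity.} I would establish continuity of $\beta\mapsto L_c(\beta)$ by dominated convergence. Here $h_c$ is continuous in $z$, being a ratio of integrals depending smoothly on $z$. It also has at most linear growth, $0<h_c(z)\le C(1+|z|_+)$: for $z\le 0$ the factor $e^{zu/2}\le 1$ keeps the mean bounded, and for $z>0$ one compares with the Gaussian centred at $z$. Since $Z_\beta=\beta+\sqrt2\,W$ with $W\sim N(0,1)$, the integrands are dominated uniformly for $\beta$ in compacts.

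\emph{Large-$\beta$ expansion.} The key step is a Laplace expansion of $h_c(z)$ as $z\to\infty$. Substitute $u=z+v$, so the exponent becomes $-v^2/4+(c/2-1)\log(z+v)$ plus a constant. Expanding the logarithm and computing Gaussian moments of $v$ under weight $e^{-v^2/4}$ gives
\[
h_c(z)=z+\frac{a}{z}+\frac{b_c}{z^3}+O(z^{-5}),\qquad a=c-2,
\]
uniformly for $z\ge z_0$. Truncation of the $u$-integral at $0$ contributes only $O(e^{-z^2/8})$. Squaring gives $h_c^2=z^2+2a+(a^2+2b_c)z^{-2}+O(z^{-4})$.

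Next, I split $\E$ over $\{Z_\beta>\beta/2\}$ and its complement. The complement has probability $O(e^{-\beta^2/32})$, and by the linear growth bound and Cauchy--Schwarz it contributes $O(\beta^{-4})$ to both terms. On $\{Z_\beta>\beta/2\}$, I use the negative moments of $Z_\beta=\beta+\sqrt2 W$ restricted there:
\[
\beta\,\E[Z^{-1}]=1+2\beta^{-2}+O(\beta^{-4}),\qquad \E[Z^{-2}]=\beta^{-2}+O(\beta^{-4}),\qquad \beta\,\E[Z^{-3}]=\beta^{-2}+O(\beta^{-4}).
\]
Combining these with $\E[Z^2]=\beta^2+2$ and $\E Z=\beta$ gives
\[
L_c(\beta)=2+2a-2a-\frac{4a}{\beta^2}+\frac{a^2+2b_c}{\beta^2}-\frac{2b_c}{\beta^2}+O(\beta^{-4})
=2+\frac{a^2-4a}{\beta^2}+O(\beta^{-4}).
\]
The unknown coefficient $b_c$ cancels. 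With $a=-1$ for $c=1$ and $a=0$ for $c=2$, this yields $\Delta(\beta)=5\beta^{-2}+O(\beta^{-4})$. Consequently $\Delta(\beta)>0$ for large $\beta$, and the intermediate value theorem together with $\Delta(0)=-1$ gives a crossover $\beta_*\in(0,\infty)$.

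\emph{Main obstacle.} The delicate part is making the Laplace expansion of $h_c$ rigorous with a remainder uniform in $z$ and of order exactly $z^{-5}$. One must also control the region where $Z_\beta$ is small or negative, where $1/Z$ is not integrable. The truncation at $\beta/2$, combined with the global linear-growth bound on $h_c$, is what I would rely on there. The expansion of $h_c$ itself I would justify by Taylor expanding $\log(1+v/z)$ on $|v|\le z^{1/2}$ and bounding the rest by Gaussian tails.
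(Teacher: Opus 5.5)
Your proposal is correct. The convergence of the gap and the continuity of $\Delta$ are handled as in the paper. For $\Delta(0)=-1$, invoking Theorem~\ref{thm:bounded} is logically valid as stated. In the paper, however, that theorem is itself derived from $L_c(0)=c$, via Stein's identity and the Riccati equation for $h_c$ (Lemma~\ref{lem:profile}). That derivation is what your ``independent check'' reproduces.

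The genuine difference is the large-$\beta$ step.
\begin{itemize}
\item The paper uses the exact identity $L_c(\beta)=c-\beta m_c(\beta)$ to reduce to $\Delta(\beta)=-1+\beta\{m_2(\beta)-m_1(\beta)\}$, so only first moments are needed. It must then compute the cubic coefficient $(c-2)(4-c)$ explicitly: $h_2-h_1=z^{-1}+3z^{-3}+O(z^{-5})$, and the constant $5$ arises as $2+3$.
\item You expand $\E[h_c(Z_\beta)^2]-2\beta\E[h_c(Z_\beta)]$ directly. This costs the squared expansion and the extra moment $\E[Z_\beta^{-2}]$. In exchange, the cubic coefficient $b_c$ cancels, so you only need $a=c-2$ and the odd-power structure of the expansion.
\end{itemize}
The two computations agree because $b_c=a(2-a)$, which gives $-(2a+b_c)=a^2-4a$.

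There is one arithmetic slip. $\E[Z_\beta^2]-2\beta\E[Z_\beta]=2-\beta^2$, not $2$, so your expansion should read $L_c(\beta)=2-\beta^2+(a^2-4a)\beta^{-2}+O(\beta^{-4})$. This is consistent with $L_c=c-\beta m_c\approx-\beta^2$ and with the tabulated $L_2(3)=-7.521$. The $-\beta^2$ term is common to $c=1$ and $c=2$, so it cancels in $\Delta$. Hence $\Delta(\beta)=5\beta^{-2}+O(\beta^{-4})$ and the crossover follow as you state.
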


The critical regime is the transition regime in which the signal contribution to the quadratic statistic is comparable to the statistic's null fluctuation. The limiting risk comparison is therefore indexed by the single signal-strength parameter $\beta$: the SD-flat default is favored when the signal energy remains close to the origin ($\beta$ small), but the variance-flat harmonic benchmark can have smaller centered risk once $\beta$ is large enough. Thus the benchmark comparison is not a uniform dominance claim; it describes where each near-zero scale behavior is preferable in frequentist risk.

We numerically plot the finite-$d$ critical-regime risk gap $\risk(\theta_d,\delta_{d,1})-\risk(\theta_d,\delta_{d,2})$ together with the limiting curve $\Delta(\beta)$ in Figure~\ref{fig:shellcritical-gap-main}. Negative values favor the SD-scale benchmark and positive values favor the variance-flat benchmark. In the numerical calculation, the plotted asymptotic curve has a numerically identified zero at $\beta_{*}\approx 2.080$, where the centered risks of the two benchmarks are nearly tied. Section~\ref{sec:sim} gives the detailed design.

\begin{figure}[t!]
\caption{Critical-Regime Risk Gap}
\label{fig:shellcritical-gap-main}
\begin{center}
\includegraphics[width=0.78\textwidth,trim=446.7pt 0 0 0,clip]{figures_new/fig_weaksignal_top.png}
\end{center}
{\footnotesize {\em Notes}: Critical-regime risk gap at \(d=5000\), with signal energy \(\lambda_d=\norm{\theta_d}^2=\beta\sqrt d\), together with the asymptotic limit \(\Delta(\beta)\), which has a numerically identified zero at \(\beta=\beta_{*}\approx 2.080\). Negative values favor the SD-scale benchmark; positive values favor the variance-flat benchmark.}
\setlength{\baselineskip}{4mm}
\end{figure}

The shape of $\Delta(\beta)$ is also informative. It starts at $\Delta(0)=-1$, so the weak-signal ordering survives for small signal energy in the critical regime. The upper-tail expansion $\Delta(\beta)=5\beta^{-2}+O(\beta^{-4})$ then forces the sign to become positive for sufficiently large signals in the critical regime.

\subsection{Strong Regime}
\label{sec:dense}

In the strong regime, $\norm{\theta_d}^2$ is of order $d$. The quadratic statistic $T_d=\norm{X_d}^2$ concentrates around $d+\norm{\theta_d}^2$, and the posterior for the global variance component moves away from the near-zero scaling that drives the critical-regime analysis.

In the following, Theorem~\ref{thm:strongfirst} gives the first-order limit $\risk(\theta_d,\delta_{d,c})/d\to \rho/(1+\rho)$ whenever $\norm{\theta_d}^2/d\to\rho>0$. Theorem~\ref{thm:dense2} refines this statement by identifying the $O(1)$ correction, and Corollary~\ref{cor:densegap} shows that any two fixed $c$ values are asymptotically indistinguishable at the $O(1)$ scale. 

\begin{theorem}[Strong-signal first-order universality]
\label{thm:strongfirst}
If 
$\frac{\norm{\theta_d}^2}{d}\to \rho>0$, 
then
\[
\frac{\risk(\theta_d,\delta_{d,c})}{d}\to \frac{\rho}{1+\rho}.
\]
In particular, the SD-flat and harmonic benchmarks are first-order asymptotically equivalent in the strong regime.
\end{theorem}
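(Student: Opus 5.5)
The plan is to start from the exact risk identity of Lemma~\ref{lem:riskid}:
\[
\risk(\theta_d,\delta_{d,c})=\norm{\theta_d}^2+2c-\E[T_d s_{d,c}(T_d)^2].
\]
After dividing by $d$, the term $2c/d$ vanishes and $\norm{\theta_d}^2/d\to\rho$. It therefore suffices to show
\[
\frac1d\,\E[T_d s_{d,c}(T_d)^2]\to \frac{\rho^2}{1+\rho}.
\]
Then the limit is $\rho-\rho^2/(1+\rho)=\rho/(1+\rho)$. This target is the James--Stein/oracle value: $T_d/d\to 1+\rho$, and the ideal shrinkage factor is $s^*=1-1/(1+\rho)=\rho/(1+\rho)$. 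Hence $(1+\rho)(s^*)^2=\rho^2/(1+\rho)$.

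\textbf{Step 1 (concentration of $T_d/d$).} Write $T_d=\norm{\varepsilon_d}^2+2\theta_d^\top\varepsilon_d+\norm{\theta_d}^2$. The three pieces have variances $O(d)$, $O(\norm{\theta_d}^2)=O(d)$ and zero, so $T_d/d\to 1+\rho$ in $L^2$, and in probability.

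\textbf{Step 2 (pointwise behaviour of $s_{d,c}$).} By Lemma~\ref{prop:beta}, given $T_d=t$ the posterior density of $Y$ is proportional to
\[
\exp\{d\,\phi_d(y)\}\,y^{c/2-1}(1-y)^{-c/2-1}, \qquad \phi_d(y)=\tfrac{t}{2d}\,y+\tfrac12\log(1-y).
\]
When $t/d\to a=1+\rho>1$, the exponent $\phi(y)=\tfrac a2 y+\tfrac12\log(1-y)$ is strictly concave on $(0,1)$. Its unique interior maximizer is $y^*=1-1/a=\rho/(1+\rho)$. A standard Laplace argument, uniform over $t/d$ in a compact subinterval of $(1,\infty)$, gives $s_{d,c}(t)\to 1-d/t$ uniformly there. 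The prefactor $y^{c/2-1}$ is integrable at $0$ since $c>0$. Near $y=1$ the prefactor $(1-y)^{-c/2-1}$ is dominated by $(1-y)^{d/2}$ once $d>c+2$. Both boundary regions carry exponentially small mass relative to a neighbourhood of $y^*$, because $\phi(y^*)>\phi(0)=0$ and $\phi(y)\to-\infty$ as $y\to1$.

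\textbf{Step 3 (passing to expectations).} Combining Steps 1 and 2 gives $s_{d,c}(T_d)\to\rho/(1+\rho)$ in probability, and hence $T_d s_{d,c}(T_d)^2/d\to(1+\rho)\rho^2/(1+\rho)^2$ in probability. For uniform integrability, use $0\le s_{d,c}\le1$. Then $0\le T_d s_{d,c}(T_d)^2/d\le T_d/d$, and $T_d/d$ is bounded in $L^2$. Dominated convergence in the Vitali form then yields convergence of the expectation. The off-compact event $\{|T_d/d-(1+\rho)|>\eta\}$ needs no separate pointwise control: the bound $s_{d,c}\le1$ together with Cauchy--Schwarz and Step 1 makes its contribution negligible. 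The equivalence statement follows because the limit does not depend on $c$.

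I expect the main obstacle to be the uniform Laplace step. Exponential domination near $y=1$ must be uniform in $t$, and the $c$-dependent prefactors must be shown not to move the mode at first order. I would handle this by splitting $(0,1)$ into $(0,\epsilon)$, $[\epsilon,1-\epsilon]$ and $(1-\epsilon,1)$. On each boundary piece I would bound the exponent by $d(\phi(y^*)-\gamma)$ for some $\gamma>0$. On the middle piece, concentration of a log-concave-like density around its mode at rate $d^{-1/2}$ suffices, since only the first-order limit is needed.
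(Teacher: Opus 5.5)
Your proposal is correct and follows essentially the paper's argument: uniform Laplace concentration of the posterior gives $s_{d,c}(d\tau)\to 1-1/\tau$ on compact subsets of $(1,\infty)$ (Lemma~\ref{lem:densepost}), the law of large numbers gives $T_d/d\to 1+\rho$, and the bound $0\le s_{d,c}\le 1$ lets you pass to expectations. The differences are minor. You work with the reduced identity $\risk(\theta_d,\delta_{d,c})=\norm{\theta_d}^2+2c-\E[T_d s_{d,c}(T_d)^2]$, so you never need the cross term $\E[s_{d,c}(T_d)\theta_d^\top X_d]/d$, which the paper's displayed computation handles via Lemma~\ref{lem:densestat}. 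You also run the Laplace step on the $Y$ scale, where the exponent is strictly concave and the paper's $g\to\infty$ tail becomes the $y\to 1$ boundary.
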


The main mechanism underlying Theorem \ref{thm:strongfirst} is that strong signals wash out the prior. When $\norm{\theta_d}^2\asymp d$, the normalized statistic $\tau_d=T_d/d$ obeys a law of large numbers and concentrates at the deterministic value $\tau_0=1+\rho$, so the stochastic fluctuations that drive the critical regime disappear. The exponent $c$ is therefore asymptotically irrelevant at this scale, which is the source of the universality. The random Bayes shrinkage factor accordingly collapses to the deterministic ridge coefficient $s_{d,c}(T_d)\to a(\tau_0)=1-1/\tau_0=\rho/(1+\rho)$, so $\delta_{d,c}$ behaves like constant shrinkage $a(\tau_0)X$. 

\begin{theorem}[Strong-signal second-order universality]
\label{thm:dense2}
If 
$\norm{\theta_d}^2=\rho d+\kappa+o(1)$ with 
$\rho>0$, then
\[
\risk(\theta_d,\delta_{d,c})
=
d\frac{\rho}{1+\rho}
+
\frac{\kappa}{(1+\rho)^2}
+
B(\rho)
+
o(1),
\quad
B(\rho):=\frac{2(1+2\rho+2\rho^2)}{(1+\rho)^3}.
\]
\end{theorem}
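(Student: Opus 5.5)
The plan is to work entirely through the exact risk identity of Lemma~\ref{lem:riskid}, which gives $\risk(\theta_d,\delta_{d,c})=\norm{\theta_d}^2+2c-\E[T_d s_{d,c}(T_d)^2]$. The problem then reduces to an $O(1)$-accurate expansion of $\E[T_d s_{d,c}(T_d)^2]$. Write $\tau=T_d/d$ and $\tau_d:=1+\norm{\theta_d}^2/d=1+\rho+\kappa/d+o(1/d)$, so that $\E\tau=\tau_d$ and $\Var(\tau)=(2d+4\norm{\theta_d}^2)/d^2=(2+4\rho)/d+o(1/d)$. I will proceed in three steps.

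\emph{Step 1: a second-order Laplace expansion of the shrinkage factor.} By Lemma~\ref{prop:beta}, with $t=d\tau$ the posterior density of $Y$ is
\[
\exp\{d\,\phi_\tau(y)\}\,y^{c/2-1}(1-y)^{-c/2-1},
\qquad
\phi_\tau(y)=\tfrac{\tau y}{2}+\tfrac12\log(1-y).
\]
For $\tau>1$ this has an interior maximizer $y_0=a(\tau)=1-1/\tau$ with $\phi''_\tau(y_0)=-\tau^2/2$. A standard second-order Laplace expansion of the ratio $\int y\,e^{d\phi_\tau}w/\int e^{d\phi_\tau}w$ gives
\[
s_{d,c}(d\tau)=a(\tau)+\frac{b_c(\tau)}{d}+R_d(\tau),
\qquad
\sup_{\tau\in K}|R_d(\tau)|=O(d^{-2}),
\]
uniformly on compact $K\subset(1,\infty)$. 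Here $b_c$ is explicit and built from $\phi'''_\tau$ and the log-derivative of the weight $y^{c/2-1}(1-y)^{-c/2-1}$ at $y_0$. It is linear in $c$ through that weight term.

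\emph{Step 2: the delta-method expansion.} Substituting Step 1 gives
\[
T_d s^2=d\,f(\tau)+2\tau a(\tau)b_c(\tau)+O(1/d),
\qquad
f(\tau)=\tau a(\tau)^2=\tau-2+\tau^{-1}.
\]
A second-order Taylor expansion of $f$ about $\tau_d$, with $f''(\tau)=2/\tau^3$, yields
\[
d\,\E f(\tau)=d f(\tau_d)+\frac{2+4\rho}{(1+\rho)^3}+o(1).
\]
The cubic remainder contributes $d\cdot O(\E|\tau-\tau_d|^3)=O(d^{-1/2})$. Expanding $d f(\tau_d)$ with $\tau_d=1+\rho+\kappa/d$ gives
\[
d\,f(\tau_d)=d\frac{\rho^2}{1+\rho}+\kappa\Bigl(1-\frac{1}{(1+\rho)^2}\Bigr)+o(1).
\]
Also, $\E[2\tau a b_c(\tau)]\to 2\tau_0 a(\tau_0)b_c(\tau_0)$ with $\tau_0=1+\rho$, by concentration of $\tau$ and continuity. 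Combining these with $\norm{\theta_d}^2=\rho d+\kappa+o(1)$ produces the leading term $d\rho/(1+\rho)$ and the term $\kappa/(1+\rho)^2$. The constant is
\[
2c-\frac{2+4\rho}{(1+\rho)^3}-2\tau_0a(\tau_0)b_c(\tau_0).
\]
The final algebraic check is that the $c$-linear part of $2\tau_0 a b_c$ exactly equals $2c$, so that $c$ cancels, and that the remainder gives $B(\rho)-$ nothing else. In other words, the constant must reduce to $2(1+2\rho+2\rho^2)/(1+\rho)^3$. I will carry out this computation of $b_c$ explicitly.

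\emph{Step 3: justifying the expansions away from the bulk.} This is the main obstacle: the Laplace expansion is only uniform on compacts of $(1,\infty)$, while $\tau$ ranges over $(0,\infty)$. I would split on the event $A_d=\{|\tau-\tau_0|\le\eta\}$ with $\eta<\rho/2$. On $A_d^c$ the crude bound $0\le T_d s^2\le T_d$ (since $s\in(0,1)$) applies. Combined with the Laurent--Massart noncentral $\chi^2$ tail bounds, giving $\Pbb(A_d^c)\le e^{-c'd}$, and Cauchy--Schwarz, this shows the contribution of $A_d^c$ to both $\E[T_ds^2]$ and the Taylor approximation is $O(d\,e^{-c'd/2})=o(1)$. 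On $A_d$ the uniform $O(d^{-2})$ Laplace remainder, multiplied by $T_d=O(d)$, is $O(1/d)$. The Taylor remainder is controlled by moments of $\sqrt d(\tau-\tau_d)$, which are bounded. The delicate part is making the Laplace remainder genuinely uniform in $\tau\in[\tau_0-\eta,\tau_0+\eta]$ with the boundary singularities of the weight at $y=0,1$ handled. For this I would first show that the mass outside $|y-y_0|\le\delta$ is exponentially small uniformly in $\tau$, and then Taylor-expand on the inner interval.
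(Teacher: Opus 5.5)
Your proposal is correct and follows essentially the paper's route. Both use the exact risk identity, a uniform one-step Laplace expansion $s_{d,c}(d\tau)=a(\tau)+b_c(\tau)/d+o(d^{-1})$ on compacts of $(1,\infty)$ (you work in $y$, the paper in $g$), a second-order delta-method expansion of $d\,\E f(T_d/d)$, and tail control off the compact, where your exponential noncentral-$\chi^2$ bounds with Cauchy--Schwarz replace the paper's Rosenthal bounds plus its inverse-moment Chernoff argument. The computation you defer does close: $w'/w$ at $y_0$ together with $\phi'''_\tau(y_0)=-\tau^3$ gives $b_c(\tau)=(c\tau-2)/\{\tau(\tau-1)\}$, which equals the paper's $2/\tau+(c-2)/(\tau-1)$, so $2\tau a(\tau)b_c(\tau)=2c-4/\tau$ and the constant is $4/(1+\rho)-2(1+2\rho)/(1+\rho)^3=B(\rho)$.
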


The refinement in Theorem \ref{thm:dense2} tracks two corrections that the first-order argument discards, and universality survives because they exactly cancel. We then have the following immediate corollary.

\begin{corollary}[Strong-signal indistinguishability]
\label{cor:densegap}
For any fixed $c_1,c_2>0$,
\[
\risk(\theta_d,\delta_{d,c_1})-\risk(\theta_d,\delta_{d,c_2})=o(1)
\]
whenever $\norm{\theta_d}^2=\rho d+\kappa+o(1)$ with $\rho>0$.
\end{corollary}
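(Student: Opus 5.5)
The corollary follows directly from Theorem~\ref{thm:dense2}, and the plan is to argue exactly that way. Theorem~\ref{thm:dense2} holds for every fixed $c>0$. Its right-hand side is
\[
d\,\frac{\rho}{1+\rho}+\frac{\kappa}{(1+\rho)^2}+B(\rho)+o(1).
\]
None of its explicit terms involve $c$: the leading term, the $\kappa$-correction and $B(\rho)$ depend only on $(\rho,\kappa)$. The dependence on $c$ can therefore enter only through the remainder, which is $o(1)$ for each fixed $c$.

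The steps are as follows.
\begin{enumerate}
\item Fix $c_1,c_2>0$ and a sequence $\theta_d$ with $\norm{\theta_d}^2=\rho d+\kappa+o(1)$, $\rho>0$.
\item Note that $d>\max(c_1,c_2)$ for all large $d$, so both $\delta_{d,c_1}$ and $\delta_{d,c_2}$ are well defined along the sequence.
\item Apply Theorem~\ref{thm:dense2} with $c=c_1$ and with $c=c_2$ to the same sequence, obtaining $\risk(\theta_d,\delta_{d,c_j})=d\rho/(1+\rho)+\kappa/(1+\rho)^2+B(\rho)+r_{d,j}$ with $r_{d,j}\to0$ for $j=1,2$.
\item Subtract the two expansions. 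The common deterministic terms, including the $O(d)$ leading term, cancel exactly, leaving $r_{d,1}-r_{d,2}=o(1)$.
\end{enumerate}

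The only point needing care is that the $o(1)$ in Theorem~\ref{thm:dense2} is a statement about the fixed sequence $\theta_d$ for each fixed $c$. No uniformity in $c$ is required, since only two fixed values are compared. The cancellation of the order-$d$ terms is exact rather than approximate, because they are literally identical expressions. I expect no real obstacle here: all of the difficulty lies in Theorem~\ref{thm:dense2}, which we take as given.
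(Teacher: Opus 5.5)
Your proposal is correct and is exactly the paper's argument: the paper states this corollary as immediate from Theorem~\ref{thm:dense2}, whose expansion $d\rho/(1+\rho)+\kappa/(1+\rho)^2+B(\rho)$ is free of $c$. Subtracting the two fixed-$c$ expansions along the same sequence leaves only $r_{d,1}-r_{d,2}=o(1)$. The $c$-independence is genuine: inside the proof of that theorem, the $+2c$ from Lemma~\ref{lem:riskid} cancels the $2(c-2)$ part of $g_c(\tau_0)$.
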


Figure~\ref{fig:dense-main} illustrates this result numerically, with the details of the numerical exercise available in  Section~\ref{sec:sim}.

\medskip

Taken together, the three regimes give a phase diagram for the default comparison. At bounded energy, the SD-flat prior has a fixed one-unit advantage. Under the critical-regime scaling, that advantage holds only for sufficiently small signal energy and crosses over at a finite threshold on the signal-energy growth rate. At energy proportional to $d$, the posterior is no longer governed by the zero-scale boundary, and the two fixed-exponent benchmarks share the same asymptotic risk through the second-order risk expansion. 

Overall, this three-regime transition gives a geometric risk map for the SD-flat versus variance-flat choice. The SD-flat prior is preferable in the weak regime and for sufficiently small critical-regime signal strength, while the variance-flat prior can have smaller centered risk for larger critical-regime signals. Once signal energy is proportional to $d$, however, the zero-scale geometry no longer affects risk through the second-order expansion.

\section{Proper Defaults and Common Priors}
\label{sec:proper}

Section~\ref{sec:model} compares the two benchmark geometries inside the radial power family. We now transfer those conclusions to proper priors, to common single global-scale hyperpriors, and to bounded coordinate-multiplier priors. We then use the same near-zero scale calculus to interpret common priors beyond a single global scale that fall outside the bounded-multiplier theorem. The key quantity is the near-zero exponent of the relevant standard-deviation scale.

\subsection{Robustness to Proper Default Priors}
\label{subsec:properdefaults}
Fix $c>0$. Let $\ell:[0,\infty)\to[0,\infty)$ be a nonincreasing function that is continuous at $0$ with $\ell(0)\in (0,\infty)$ and
\[
\int_0^\infty g^{c/2-1}\ell(g)\,dg<\infty.
\]
Based on $\ell$, we may define the following \emph{proper} hierarchical prior
\begin{equation}\label{eq:proper}
 \theta\mid g\sim N_d(0,gI_d),
\quad
p_\ell(g)\propto g^{c/2-1}\ell(g), 
\end{equation}
and let $\delta_{d,c,\ell}$ be the corresponding posterior mean.\footnote{Equivalently, on the SD scale, if $\theta\mid\tau\sim N_d(0,\tau^2I_d)$ and $p_m(\tau)\propto \tau^{c-1}m(\tau)$, where $m$ is bounded, nonincreasing, continuous and positive at zero, and $\int_0^\infty \tau^{c-1}m(\tau)\,d\tau<\infty$, then the same transfer conclusions apply.}

\begin{proposition}[Robustness to monotone properizations] 
\label{prop:proper} 
Under the proper hierarchical prior \eqref{eq:proper}, the conclusions of Theorems~\ref{thm:bounded} and~\ref{thm:main} continue to apply under their respective signal regimes to the posterior mean $\delta_{d,c,\ell}$ with the same exponent $c$.
\end{proposition}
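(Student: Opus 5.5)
Under the proper prior \eqref{eq:proper}, we plan to repeat the benchmark argument verbatim and to track only the extra factor $\ell(g)$. Conjugacy gives $X\mid g\sim N_d(0,(1+g)I_d)$. The posterior mean is therefore still isotropic, $\delta_{d,c,\ell}(x)=s_{d,c,\ell}(\norm{x}^2)x$ with $s_{d,c,\ell}(t)=\E_\ell[Y\mid T_d=t]$. Changing variables $y=g/(1+g)$ in the marginal computation underlying Lemma~\ref{prop:beta}, the posterior density of $Y$ given $T_d=t$ becomes proportional to
\[
y^{c/2-1}(1-y)^{(d-c)/2-1}e^{ty/2}\,\ell\!\left(\tfrac{y}{1-y}\right),\qquad 0<y<1 .
\]
Properness (integrability of $g^{c/2-1}\ell$) guarantees that this is integrable for every $d$. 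The exact risk identity of Lemma~\ref{lem:riskid} is Stein's unbiased risk estimate applied to an isotropic rule $s(T)X$. Assuming, as we expect, that its proof uses only the Tweedie/derivative identity $s'(t)=\tfrac12\Var(Y\mid t)$ and not the specific form of the prior, we will recheck that derivation with the weight $\ell$ included. That gives $\risk(\theta_d,\delta_{d,c,\ell})=\norm{\theta_d}^2+d-\E[\,\cdots]$ in a form where the constant $2c$ may be replaced by a term involving $\E[\text{boundary behaviour of }\ell]$. The cleanest route avoids this: we will work directly with SURE, $\risk=\E[\,d-2ds(T)-4Ts'(T)+T(1-s)^2\ldots]$, and show that each ingredient converges to its benchmark counterpart.

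The key step is a local boundary-Laplace statement, mirroring the one behind Theorem~\ref{thm:main}. On the event $T_d=d+\sqrt d\,z$ with $z$ in compacts, rescale $u=\sqrt d\,y$. The benchmark weight then converges, after normalization, to $u^{c/2-1}\exp(zu/2-u^2/4)$. The extra factor is $\ell(u/(\sqrt d-u))\to\ell(0)$ pointwise, since $\ell$ is continuous at $0$. This constant cancels in normalization, so $\sqrt d\,s_{d,c,\ell}(d+\sqrt d z)\to h_c(z)$ locally uniformly, and the same holds for the second moments of $\sqrt d\,Y$ that enter the derivative. Domination is where monotonicity is used. Because $\ell$ is nonincreasing, $\ell\le\ell(0)$, so the weighted integrands are bounded by $\ell(0)$ times the benchmark integrands. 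Moreover, the posterior of $Y$ under $\ell$ is stochastically smaller than under the benchmark, since it is a monotone likelihood-ratio tilt by a nonincreasing function. Hence $0\le s_{d,c,\ell}\le s_{d,c}$ pointwise, and the uniform integrability bounds on $T_d s_{d,c}(T_d)^2$ established for the benchmark transfer directly.

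With these pieces, the limit $\E[T_d s_{d,c,\ell}(T_d)^2]\to\E[h_c(Z_\beta)^2]$ (and the companion terms) follows exactly as in the proof of Theorem~\ref{thm:main}. We use the same CLT $Z_d=(T_d-d)/\sqrt d\Rightarrow N(\beta,2)$, together with dominated convergence justified by the comparison $s_{d,c,\ell}\le s_{d,c}$. The centered risk thus converges to $L_c(\beta)$. Taking $\beta=0$ with bounded $\norm{\theta_d}^2\to\nu$ yields Theorem~\ref{thm:bounded}'s limit $\nu+c$, exactly as that theorem was derived from Theorem~\ref{thm:main}.

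The main obstacle is controlling the region far from the boundary, $y$ of order one (large $g$), and the tails of $Z_d$. The Laplace approximation only identifies the local limit; one must still show that the posterior mass away from $y\asymp d^{-1/2}$ contributes $o(1)$ after multiplication by $T_d\asymp d$. For the benchmark this uses the factor $(1-y)^{(d-c)/2}e^{ty/2}$, which is exponentially small in $d$ away from the boundary when $t\approx d$. With $\ell\le\ell(0)$ and integrable, that exponential bound survives unchanged. On the rare event where $T_d$ exceeds $d$ by much more than $\sqrt d$, the inequality $s_{d,c,\ell}\le 1$ together with $\chi^2$ tail bounds \citep{LaurentMassart2000} suffices. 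We would check this tail step first, since it is the only place where the improper benchmark's explicit form might have been used in a way the properized weight does not automatically inherit.
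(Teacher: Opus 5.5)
Your outline is correct, and its two main lemmas are the paper's. The first is the local boundary limit, in which $\ell(u/(\sqrt d-u))\to\ell(0)$ cancels from the ratio and $\ell\le\ell(0)$ supplies domination (Lemma~\ref{lem:properboundary}). The second is the comparison $0\le s_{d,c,\ell}\le s_{d,c}$ (Lemma~\ref{lem:propercompare}). The paper proves it by a covariance inequality and states only the mean version. Your MLR argument gives full stochastic order, which your route needs.

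The genuine difference is the risk decomposition. Your guess that Lemma~\ref{lem:riskid} rests only on $s'=\tfrac12\Var(Y\mid t)$ is wrong. It also uses the prior-specific identity $t\,q_d(t)=c+(t-d)\mu_d(t)$ of Proposition~\ref{prop:moment}, which picks up an extra term from differentiating $\ell$. So you are right to bypass it.

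The paper avoids SURE altogether. After rotating to $\theta_d=(r_d,0,\dots,0)$ it uses $\risk-\norm{\theta_d}^2=\E[T_ds^2]-2r_d\E[s\,X_{1,d}]$. This needs no differentiability and only first posterior moments. The noise part of the cross term is then removed by Cauchy--Schwarz and $\E[\xi_d^2/T_d]\le 1/(d-3)$.

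Your SURE route instead gives $\risk-\norm{\theta_d}^2=\E[-2Z_d\widetilde V_d+T_ds_{d,c,\ell}(T_d)^2+2T_d\Var_\ell(Y\mid T_d)]$, with $\widetilde V_d=\sqrt d\,s_{d,c,\ell}(T_d)$. It requires two extra ingredients:
\begin{itemize}
\item the boundary lemma also for the second moment ($m=2$), giving $T\Var_\ell(Y\mid T)\to c+zh_c(z)-h_c(z)^2$ at $T=d+\sqrt d\,z$;
\item domination of $T\,\E_\ell[Y^2\mid T]$, which your stochastic order provides via $T\,\E_\ell[Y^2\mid T]\le T q_d(T)=c+(T-d)\mu_d(T)$.
\end{itemize}
The limits then sum to $2c-q_c(\beta)=L_c(\beta)$ by Lemma~\ref{lem:profile}(iii). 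Your route shows why the constant $2c$ survives properization only asymptotically: it is regenerated by the limiting identity $\E_{q_{c,z}}U^2=c+zh_c(z)$. The paper's route is more economical and lands directly on the defining form $\E h_c(Z_\beta)^2-2\beta\,\E h_c(Z_\beta)$.

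Several details need repair:
\begin{itemize}
\item The SURE expression is $\E[T(1-s)^2+2ds+4Ts'-d]$; your signs on $d$, $2ds$ and $4Ts'$ are reversed.
\item The terms $-2Ts$ and $2ds$ are each of order $\sqrt d$, so they must be paired as $-2(T-d)s=-2Z_d\widetilde V_d$ before taking limits. ``Each ingredient converges to its benchmark counterpart'' fails term by term: for $\ell(g)=e^{-g}$, $d\,\E[s_{d,c}(T_d)-s_{d,c,\ell}(T_d)]$ does not tend to zero.
\item For uniform integrability of $Z_d\widetilde V_d$ on $\{Z_d<0\}$, use $(d-T)s_{d,c}(T)\le c$, again from Proposition~\ref{prop:moment}.
\item Your last paragraph misplaces the difficulty. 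The region $y\asymp1$ is already covered by the envelope $u^{c/2-1+m}\exp(Mu/2+u-u^2/8)$, which is valid on all of $0<u<\sqrt d$.
\item Your fallback of bounding $s_{d,c,\ell}\le 1$ on $\{Z_d>M\}$ would lose a factor $d$, since $\E[T_d\mathbf 1\{Z_d>M\}]$ is of order $d\,\Pbb(Z_\beta>M)$. The tails are controlled by the polynomial envelopes in $Z_d^+$ inherited from the comparison with the benchmark, which you already invoke.
\end{itemize}
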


The properizing factor $\ell$ does not affect the limiting low-signal risks because those limits are determined at the variance-scale endpoint $g=0$. In the weak and critical regimes, $T_d=d+O_{\Pbb}(\sqrt d)$, so the posterior contribution to the shrinkage factor comes from $g=O(d^{-1/2})$. On that scale, $\ell(g)=\ell(0)+o(1)$, and the constant $\ell(0)$ cancels from the numerator and denominator of the posterior mean. Thus the local power $g^{c/2-1}$ fixes the local near-zero asymptotic problem, while monotonicity and integrability of $\ell$ only keep the properized prior from contributing mass at larger, nonlocal scales.

Proposition~\ref{prop:proper} covers several proper defaults used in practice, including the half-$t$ family advocated by \citet{Gelman2006} and the half-Cauchy special case emphasized by \citet{PolsonScott2012}:
\[
p(\tau)\propto \mathbf 1_{[0,A]}(\tau)
\quad\Longleftrightarrow\quad
p(g)\propto g^{-1/2}\mathbf 1_{[0,A^2]}(g)
\]
(truncated flat SD),
\[
p(g)\propto g^{-1/2}\Bigl(1+\frac{g}{\nu_0 s^2}\Bigr)^{-(\nu_0+1)/2}
\]
(half-$t$ on SD, including half-Cauchy when $\nu_0=1$), and
\[
p(g)\propto \mathbf 1_{[0,A]}(g)
\]
(truncated flat variance). Thus the one-unit weak-signal advantage of flat on SD over flat on variance is not an artifact of the improper power family; it persists for the corresponding proper defaults as well.

The SD-scale formulation in Proposition~\ref{prop:proper} turns the benchmark radial index $c$ into a classification rule on the standard-deviation scale. For a single global-scale hyperprior, the weak-signal and critical-regime limits depend only on the near-zero exponent of the global SD density.

\subsection{Common Single Global-Scale Hyperpriors}
\label{subsec:globalpriors}

This subsection classifies commonly used single global-scale hyperpriors according to whether their near-zero standard-deviation density has the SD-flat geometry ($c=1$), the variance-flat geometry ($c=2$), or a more general power $c$.
The SD-scale formulation in Proposition~\ref{prop:proper} gives immediate consequences for several priors that are widely used in high-dimensional shrinkage work. Consider the single global-scale Gaussian hierarchy
\[
\theta\mid \tau\sim N_d(0,\tau^2 I_d),
\]
and let $\delta_{d,p}$ denote the posterior mean under hyperprior $p(\tau)$.

\paragraph{Flat-SD, half-\texorpdfstring{$t$}{t}, half-Cauchy, half-normal, exponential, and truncated-flat priors.}
If $p(\tau)$ is half-$t_{\nu_0}(s)$, half-Cauchy$(s)$, half-normal$(s)$, exponential$(b)$, or truncated flat on $[0,A]$, then the density on the SD scale is positive and finite at zero, so Theorems~\ref{thm:bounded} and~\ref{thm:main} continue to apply to $\delta_{d,p}$ with $c=1$.

\paragraph{\texorpdfstring{$R^2$}{R2}-based priors.}
Suppose $W=w_0\tau^2$ for some fixed $w_0>0$ and
\[
R^2=\frac{W}{1+W}\sim \mathrm{Beta}(a,b),
\quad a,b>0.
\]
Then
\[
p(\tau)=\frac{2w_0^a}{\mathrm{B}(a,b)}\tau^{2a-1}(1+w_0\tau^2)^{-(a+b)},
\quad \tau>0,
\]
and therefore the near-zero SD-scale exponent is $c=2a$, so Theorems~\ref{thm:bounded} and~\ref{thm:main} continue to apply to $\delta_{d,p}$ with $c=2a$.
In particular, $a=\tfrac12$ gives the SD-scale class, whereas $a=1$ gives the variance-flat class; for any $a_1<a_2$, Theorem~\ref{thm:bounded} gives the weak-signal gap $2(a_1-a_2)$.

\paragraph{Gamma priors on variance.}
If $g=\tau^2\sim \mathrm{Gamma}(a,b)$, then
\[
p(\tau)=\frac{2b^a}{\Gamma(a)}\tau^{2a-1}e^{-b\tau^2},
\quad \tau>0,
\]
the near-zero SD-scale exponent is again $c=2a$, so gamma variance priors fall under Theorems~\ref{thm:bounded} and~\ref{thm:main} with $c=2a$. In particular, exponential priors on $g$ ($a=1$) belong to the same variance-flat class as the improper flat prior on $g$.

Therefore, by Proposition~\ref{prop:proper}, the Section~\ref{sec:model} results translate directly: SD-scale priors inherit the $c=1$ weak-signal and critical-regime limits, variance-scale priors inherit the $c=2$ limits, and the more general beta or gamma powers use their corresponding exponent $c$.

\paragraph{Scale calibration.}
This exponent classification concerns the local shape of $p(\tau)$ at zero; it does not choose the numerical scale. For the horseshoe in regression, \citet{PiironenVehtari2017Hyperprior,PiironenVehtari2017Sparsity} propose calibrating that scale through prior information about sparsity. In their Gaussian-regression normalization, if $p_0$ is the prior guess for the number of relevant predictors among $D$ predictors, then the natural global-scale level is
\[
\tau_0=\frac{p_0}{D-p_0}\frac{\sigma}{\sqrt n}.
\]
This separates two components of the prior specification: the weak-regime comparison favors the $c=1$ SD-scale shape over the $c=2$ variance-flat benchmark, while the width of that prior can be chosen from substantive sparsity information rather than from a universal unit-scale convention.

\subsection{Beyond a Single Global Scale}
\label{subsec:beyondglobal}

Many high-dimensional priors add coefficient-specific latent structure on top of a common scale. This subsection studies a precise transfer question: after such a component of the hierarchy is added, which part of the Section~\ref{sec:model} comparison still comes from the common scale, and which part can be changed by the additional component?

We separate the problem into two cases. For bounded coordinate-multiplier priors, an exact risk-transfer theorem shows that the same near-zero SD-scale exponent $c$ determines the weak-signal and critical-regime limits, just as in the single-global benchmark. For common sparse or heavy-tailed priors outside this bounded class, the exponent still classifies the common global-scale component, but it does not by itself yield a complete risk theorem: local-scale tails, model-size priors, or dimension-dependent allocation priors can also enter the leading risk calculation.

We use the standard names for the resulting architectures. A \emph{global--local prior} is a continuous shrinkage prior with a common scale and positive local scales, typically
\[
\theta_j\mid \tau,\lambda_j\sim N(0,\tau^2\lambda_j^2),
\quad \lambda_j>0.
\]
A \emph{spike-and-slab prior} is a model-selection prior with inclusion indicators and, in the point-mass case, exact prior mass on coordinate subspaces. Dirichlet--Laplace and R2-D2 priors \citep{BhattacharyaPatiPillaiDunson2015DL,ZhangNaughtonBondellReich2022R2D2} add a different type of structure: a common total scale or model-fit parameter is distributed across coordinates through a Dirichlet weight vector. These architectures all contain a common scale, but the additional component has a different mathematical role in each case.

\paragraph{Exact transfer for bounded coordinate multipliers.}

The formal transfer theorem uses the following coordinate-multiplier envelope. It keeps a single common variance scale $g=\tau^2$, but allows each coordinate to receive a bounded multiplier $A_j$.

Consider the coordinate-multiplier hierarchy
\begin{equation}
\label{eq:coord-mult-model}
X_j\mid \theta_j\sim N(\theta_j,1),
\quad
\theta_j\mid g,A_j\sim N(0,gA_j),
\quad A_j\overset{\mathrm{iid}}{\sim}F,
\end{equation}
with common variance prior
\[
\pi(g)\propto g^{c/2-1}\ell(g),
\quad g=\tau^2.
\]
When $A_j=\lambda_j^2$ with positive continuous local scales, \eqref{eq:coord-mult-model} is a bounded global--local normal scale mixture. When $A_j=Z_j\in\{0,1\}$ is Bernoulli with fixed success probability $q$, it gives the fixed-$q$ point-mass spike-and-slab prior
\[
Z_j\overset{\mathrm{iid}}{\sim}\operatorname{Bernoulli}(q),
\quad
\theta_j\mid g,Z_j\sim (1-Z_j)\delta_0+Z_jN(0,g).
\]
Thus the coordinate-multiplier model is a common mathematical representation; the subclasses retain their usual statistical interpretations.

The posterior mean under \eqref{eq:coord-mult-model} is generally non-radial, because different coordinates can be shrunk by different posterior multiplier information. Proposition~\ref{prop:bounded-mult-transfer} shows that this non-radiality does not change the low-signal local near-zero limit as long as the multipliers are bounded and have positive mean. The analytic posterior-mean identity and the coordinate marginal used in the proof are given in Appendix~\ref{app:coord-mult-proof}.

\begin{proposition}[Bounded coordinate-multiplier transfer]
\label{prop:bounded-mult-transfer}
Assume $0\le A\le A_+<\infty$ almost surely and $\mu:=\E A\in(0,\infty)$. Let
\[
\pi(g)\propto g^{c/2-1}\ell(g)\mathbf 1_{(0,G)}(g),
\]
where $G<\infty$ and $\ell$ is bounded, nonnegative, nonincreasing, continuous at zero, and satisfies $\ell(0)>0$. Let $\delta_d$ be the full posterior mean under \eqref{eq:coord-mult-model}. If
\[
\sup_{d,j}|\theta_{dj}|<\infty,
\]
then, the conclusions of Theorems~\ref{thm:bounded} and~\ref{thm:main} continue to apply under their respective signal regimes to $\delta_d$ with the same exponent $c$.
\end{proposition}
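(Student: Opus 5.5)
Since $\sup_{d,j}|\theta_{dj}|<\infty$ and the multipliers are bounded, I will reduce the problem to the single-global-scale analysis behind Theorems~\ref{thm:bounded} and~\ref{thm:main}. The reduction replaces the coordinate-multiplier posterior by an effective radial posterior in the rescaled variance $\mu g$. The first step is to write the posterior mean coordinatewise:
\[
\delta_{d,j}(x)=\E\!\left[\frac{gA_j}{1+gA_j}\,\Big|\,x\right]x_j .
\]
The relevant $g$ lives on the scale $O(d^{-1/2})$. There $gA_j/(1+gA_j)=gA_j+O(g^2A_+^2)$, so to leading order $\delta_{d,j}(x)\approx \E[gA_j\mid x]\,x_j$.

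The second step integrates out the multipliers using the coordinate marginal. The likelihood of $g$ is $\prod_j m_g(x_j)$ with
\[
m_g(x_j)=\E_A\!\left[(1+gA)^{-1/2}e^{-x_j^2/(2(1+gA))}\right].
\]
Expanding $\log m_g(x_j)$ to second order in $g$ and summing over $j$ gives
\[
\sum_j\log m_g(x_j)=\mathrm{const}+\frac{g\mu}{2}(T_d-d)-\frac{g^2}{4}\Big(\E A^2\sum_j(x_j^2-1)^2/2+\dots\Big)+O(d\,g^3),
\]
up to terms that vanish. On $g=u/\sqrt d$, the cubic remainder is $O(d^{-1/2})$. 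The quadratic coefficient $d^{-1}\sum_j(\cdot)$ obeys a law of large numbers, because bounded $\theta$ keeps the moments of $x_j$ uniformly controlled. I expect its limit to reproduce the $-\mu^2 u^2/4$ term after rescaling $v=\mu u$: the $g^2$ coefficient combines $\tfrac12\mathrm{Var}$-type pieces of $A$, and the terms involving $\E A^2-\mu^2$ pair with $(x_j^2-1)^2-2$, which has mean zero under the null up to $O(\theta_j^2)$ corrections.

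Combining this with the prior factor $g^{c/2-1}\ell(g)\to g^{c/2-1}\ell(0)$, the posterior of $v=\mu\sqrt d\,g$ is asymptotically $q_{c,Z_d}$, exactly as in the boundary-Laplace step of Theorem~\ref{thm:main}. Since $\E[gA_j\mid x]=\E[g\,\E(A_j\mid g,x)\mid x]$ and $\E(A_j\mid g,x)=\mu+O(g)$ uniformly in $j$, it follows that $\delta_{d,j}(x)=d^{-1/2}\big(h_c(Z_d)+o_{\Pbb}(1)\big)x_j$ plus negligible coordinate-specific corrections.

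The third step converts this into risk. I will write
\[
\risk=\norm{\theta}^2+\E\norm{\delta}^2-2\E\,\theta^\top\delta ,
\]
or use a Stein/SURE identity that mirrors Lemma~\ref{lem:riskid}. From there I will show the following:
\begin{itemize}
\item $\E\norm{\delta}^2\to\E h_c(Z_\beta)^2$, since $\norm{x}^2/d\to1$.
\item $2\E\,\theta^\top\delta\to 2\beta\,\E h_c(Z_\beta)$, since $\theta^\top x=\norm{\theta}^2+O_{\Pbb}(\norm{\theta})$, which scales like $\beta\sqrt d$.
\end{itemize}
Together these give $L_c(\beta)$, and specializing to $\beta=0$ with bounded energy gives Theorem~\ref{thm:bounded}. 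Uniform integrability comes from $0\le gA/(1+gA)\le 1$ and from the truncation $g<G$ together with an exponential tail bound on the posterior of $\sqrt d\,g$. That tail bound is the same device as in the benchmark proof, now using $\ell$ bounded and nonincreasing.

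The main obstacle is the second-order expansion of the coordinate marginal. I need the $g^2$ term to produce exactly $-v^2/4$ after rescaling, uniformly on the scale $g\lesssim d^{-1/2}\log d$. The non-radial heterogeneity $\mathrm{Var}(A)$ must contribute only through mean-zero fluctuations of size $O(\sqrt d)\cdot g^2=o(1)$. I would first verify this by computing $\E_\theta$ and $\Var_\theta$ of $\sum_j\partial_g^2\log m_g(x_j)\big|_{g=0}$ exactly in terms of $\E A^2$ and $\mu^2$. If that computation does not give $-\mu^2/2\cdot d/2$, the limit law would instead involve a modified Gaussian exponent, and I would have to recheck whether $h_c$ is still the right limit.
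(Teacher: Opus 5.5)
Your architecture is the paper's. You expand $\log r_g$ (your $m_g$ is the paper's $r_g$ times the $g$-free factor $e^{-x^2/2}$) in $u=\mu\sqrt d\,g$, obtain $V_d:=\E[\mu\sqrt d\,g\mid X_d]\approx h_c(Z_d)$, compare $\delta_d$ with the radial rule $\widetilde\delta_d=V_dX_d/\sqrt d$, and expand the risk.

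The step you single out as the main obstacle is routine and does work. The $g^2$ coefficient of $\log r_g(x)$ is $B_2(x)=\mu_2(x^4-6x^2+3)/8-\mu^2(x^2-1)^2/8$ with $\mu_2=\E A^2$. Since $\E_\theta[X^4-6X^2+3]=\theta^4$ and $\E_\theta(X^2-1)^2=\theta^4+4\theta^2+2$, you get $d^{-1}\sum_jB_2(X_{dj})\to-\mu^2/4$ in $L^2$. The heterogeneity $\mu_2-\mu^2$ enters only through $d^{-1}\sum_j\theta_{dj}^4\le d^{-1}(\sup_{d,j}|\theta_{dj}|)^2\norm{\theta_d}^2\to0$. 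Your stated target is off by a factor of two: $\E\sum_j\partial_g^2\log m_g(X_{dj})|_{g=0}$ must be $-\mu^2d/2+o(d)$, not $-\mu^2d/4$.

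The genuine gaps are elsewhere. First, nothing in the proposal controls posterior mass outside the local window. Your expansion is valid only for small $g$, and the ``same device as in the benchmark'' does not exist here. The benchmark envelope $Mu/2+u-u^2/8$ on all of $0<u<\sqrt d$ came from the exact Beta form of the posterior of $Y$ and the global inequality $\log(1-r)\le-r-r^2/2$. For properized priors it came from $\ell\le\ell(0)$ plus the comparison $s_{d,c,\ell}\le s_{d,c}$. Here $\prod_jr_g(X_{dj})$ is not a function of $T_d$ and admits no such comparison, so two separate arguments are needed:
\begin{itemize}
\item For small $g\le\eta$, you need an explicit remainder $|\rho_g(x)|\le Cg^3(1+|x|^6)e^{Cgx^2}$, not merely ``$O(dg^3)$''. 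That yields $L_d(u)\le uZ_d/2-u^2/8$ for $u\le\eta\mu\sqrt d$ on an event where $d^{-1}\sum_j(1+|X_{dj}|^6)e^{C\eta X_{dj}^2}$ is bounded.
\item For $g\in[\eta,G]$ you need a global separation lemma. For $Z\sim N(0,1)$, $\E\log r_g(Z)$ is minus the Kullback--Leibler divergence of the predictive density $r_g(x)e^{-x^2/2}/\sqrt{2\pi}$ from $N(0,1)$. It is strictly negative because the predictive has variance $1+g\mu>1$, and the gap is uniform on the compact $[\eta,G]$ (this is where $G<\infty$ is used). The signal shifts the mean by only $O(\norm{\theta_d}^2/d)$. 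A grid-plus-Rosenthal argument then gives $\sup_{\eta\le g\le G}d^{-1}\sum_j\log r_g(X_{dj})\le-a_\eta$ with probability $1-O(d^{-N})$, hence an exponentially small $\Pi_d(g>\eta\mid X_d)$.
\end{itemize}

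Second, uniform integrability does not follow from what you cite. The bound $0\le gA/(1+gA)\le1$ only gives $\norm{\delta_d}^2\le T_d=O(d)$. So all approximations must hold on an event $\mathcal E_d$ with $\E[T_d\mathbf 1_{\mathcal E_d^c}]=o(1)$. The paper gets $\Pbb(\mathcal E_d^c)=O(d^{-6})$ from high-order Rosenthal bounds on $Z_d$, the $B_2$-average and the exponential envelope, together with the separation event. On $\mathcal E_d$ you then need $\E[(\mu\sqrt d\,g)^2\mid X_d]\le C\{1+(Z_d^+)^2\}$ \emph{uniformly} in $Z_d$. The boundary-Laplace convergence on $|Z_d|\le M$ leaves $\sup_d\E[V_d^2\mathbf 1_{\{|Z_d|>M\}}]$ uncontrolled.

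In the benchmark this bound came free from the exact moment identity of Proposition~\ref{prop:moment}, which has no analogue for the non-radial posterior. The paper instead runs the local expansion on a moving window $u\le K_*(1+Z_d^+)$ with $|Z_d|\le d^{1/12}$, and compares with $I_b(Z_d)$ through upper and lower integral bounds. These same two ingredients justify your ``negligible coordinate-specific corrections''. The bound $q_g(x)-\mu g=O\{g^2(1+x^2)e^{Cgx^2}\}$ holds only for small $g$. The errors must also be summed over $d$ coordinates in $L^2$ to obtain $\E\norm{\delta_d-\widetilde\delta_d}^2=o(d^{-1/2})$; the paper proves $O(d^{-1})$. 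That is what is needed, since $\risk(\theta_d,\widetilde\delta_d)=O(\sqrt d)$.
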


The proof also explains why the multiplier distribution does not affect the limit. The multiplier distribution enters the local near-zero likelihood through its mean $\mu=\E A$. After the change of variables $u=\mu\sqrt d\,g$, this factor disappears from the limiting likelihood, leaving the same local near-zero asymptotic problem as in Section~\ref{sec:model}. Hence, within the bounded coordinate-multiplier class, an SD-scale common prior retains the one-unit weak-signal advantage over its variance-scale analogue.

\paragraph{What does not transfer as an exact theorem.}

Proposition~\ref{prop:bounded-mult-transfer} is not a universal theorem for every prior that contains a global scale. Outside the bounded-multiplier envelope, the common-scale exponent still says whether the common scale is SD-like or variance-like near zero, but leading risk can also be affected by other prior components: local-scale tails, model-size priors, or dimension-dependent allocation priors. Hence the constants $L_c(\beta)$ should not be read as applying automatically to horseshoe-type, sparse spike-and-slab, Dirichlet--Laplace, or R2-D2 priors. Appendix~\ref{app:beyond-details} gives the corresponding calculations. The simulations in Section~\ref{sec:sim} provide qualitative support outside the theorem: in the many-weak designs considered there, the $c=1$ common-scale law has lower weak-signal excess risk than its $c=2$ analogue within both the global--local and spike-and-slab architectures.

\section{Numerical Illustrations}
\label{sec:sim}

This section gives finite-dimensional illustrations of the risk regimes in Section~\ref{sec:model}, the transfer to representative proper single global-scale hyperpriors, and the weak-signal comparison within two representative architectures beyond a single global scale. For the benchmark family and for single global-scale hyperpriors, the posterior mean is radial, so the displayed risks are exact finite-$d$ evaluations up to numerical integration error. The final numerical comparison is different: the posterior mean is non-radial, and the reported risks combine Monte Carlo evaluation with deterministic quadrature over the common global or slab scale.

To represent many weak effects, the benchmark and single-global-scale calculations use the equal-effects vector
\[
\theta_d^{\mathrm{eq}}(\lambda_d)
=
\sqrt{\lambda_d/d}\,\mathbf 1_d,
\quad
\lambda_d=\norm{\theta_d}^2.
\]
For the radial rules studied in the first three numerical exercises, the risk depends on $\theta_d$ only through $\lambda_d$, so this representation is interpretive rather than restrictive. In the final comparison beyond a single global scale, the displayed many-weak vector is part of the simulation design.

\paragraph{Default priors.}

Figures~\ref{fig:weak-benchmark-main} and~\ref{fig:shellcritical-gap-main} give the benchmark checks next to the corresponding theoretical statements. Figure~\ref{fig:weak-benchmark-main} verifies Theorem~\ref{thm:bounded}: already at $d=2000$, the excess risk $\risk(\theta_d,\delta_{d,c})-\lambda_d$ is close to its limiting constant $c$. In Table~\ref{tab:regime-check-main}, for example, the $c=1$ and $c=2$ risks are $(1.981,2.971)$ at $\lambda=1$ and $(4.922,5.882)$ at $\lambda=4$, so the SD-scale benchmark retains an approximately one-unit advantage throughout the displayed weak-signal range. Figure~\ref{fig:shellcritical-gap-main} visualizes Theorem~\ref{thm:main} and Corollary~\ref{cor:deltatail}: the finite-$d$ critical-regime gap $\risk(\theta_d,\delta_{d,1})-\risk(\theta_d,\delta_{d,2})$ tracks the limit $\Delta(\beta)$ closely, is zero at $\beta=\beta_{*}\approx 2.080$, and reverses sign by $\beta=3$ ($-6.952$ versus $-7.229$).

Figure~\ref{fig:dense-main} verifies Theorem~\ref{thm:dense2}. At both $d=200$ and $d=1000$, the scaled risks $\risk(\theta_d,\delta_{d,c})/d$ for $c=1$ and $c=2$ are nearly indistinguishable, while the second-order approximation materially sharpens the first-order limit when $d$ is moderate. Table~\ref{tab:regime-check-main} records the same collapse at $d=200$: the two scaled risks agree to the displayed precision at $\rho=1$ and $\rho=4$, and differ by only $0.002$ at $\rho=0.25$. The calculations illustrate Corollary~\ref{cor:densegap}: once the signal energy is of order $d$, the distinction between flat on variance and flat on SD disappears at the $O(1)$ scale.

\begin{figure}[t!]
\caption{Strong-signal universality}
\label{fig:dense-main}
\begin{center}
\includegraphics[width=0.95\textwidth]{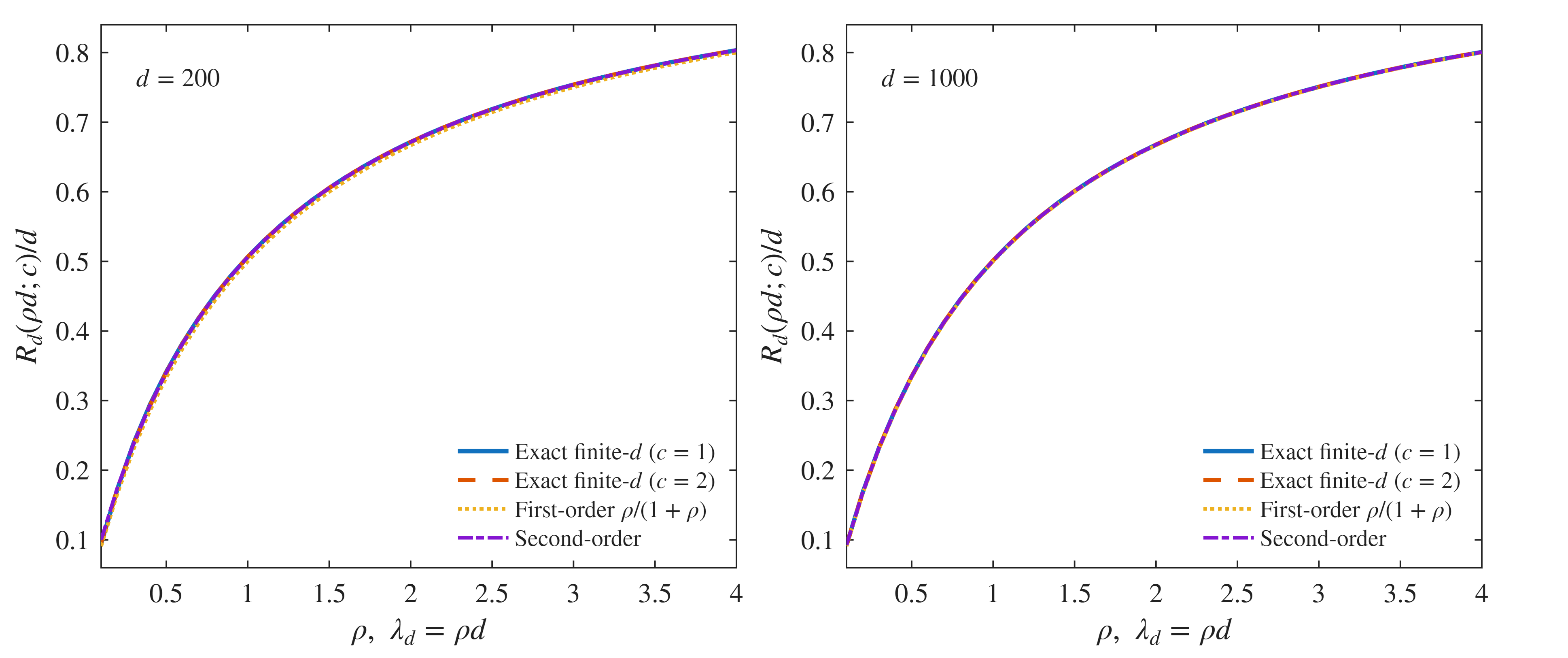}
\end{center}
{\footnotesize {\em Notes}: The exact finite-$d$ scaled risks for $c=1$ and $c=2$ are nearly identical, and the second-order approximation improves substantially on the first-order limit at moderate dimensions.}
\setlength{\baselineskip}{4mm}
\end{figure}

Table~\ref{tab:regime-check-main} gives selected finite-$d$ values from the weak-signal, critical-regime, and strong-signal calibrations. At the null, the risks are exactly $1.000$ and $2.000$ for $c=1$ and $c=2$. At the displayed critical-regime calibration $\beta_{*}\approx2.080$ the two centered risks are nearly tied, while at $\beta=3$ the ordering has reversed. In the strong-signal rows, the scaled risks for $c=1$ and $c=2$ coincide to the displayed precision except for the smallest $\rho$ value.

\input{sim_tables/tab_regime_check_main_20260507.tex}

\paragraph{Proper hyperpriors.}

Figure~\ref{fig:transfer-main} and Table~\ref{tab:transfer-priors-main} verify the transfer results from Section~\ref{sec:proper}, namely Proposition~\ref{prop:proper} and the single-global-scale dictionary in Section~\ref{subsec:globalpriors}. The $c=1$ class---half-Cauchy, half-normal, truncated flat on $\tau$, and Beta-$R^2$ with $a=\tfrac12$---clusters around the SD-flat benchmark: the displayed null risks range from $0.965$ to $1.000$, and the critical-regime values at $\beta=1$ range from $-0.302$ to $-0.272$. The $c=2$ class---Gamma on $g$ with $a=1$, truncated flat on $g$, and Beta-$R^2$ with $a=1$---clusters around the variance-flat benchmark, with null risks from $1.929$ to $2.000$ and $\beta=1$ values from $0.184$ to $0.250$. Finite-$d$ differences within each class remain, as expected for proper priors, but the classwise ordering is stable and becomes especially sharp away from the critical-regime crossover.

\begin{figure}[t!]
\caption{Transfer to representative proper single global-scale hyperpriors at $d=5000$}
\label{fig:transfer-main}
\begin{center}
\includegraphics[width=0.95\textwidth]{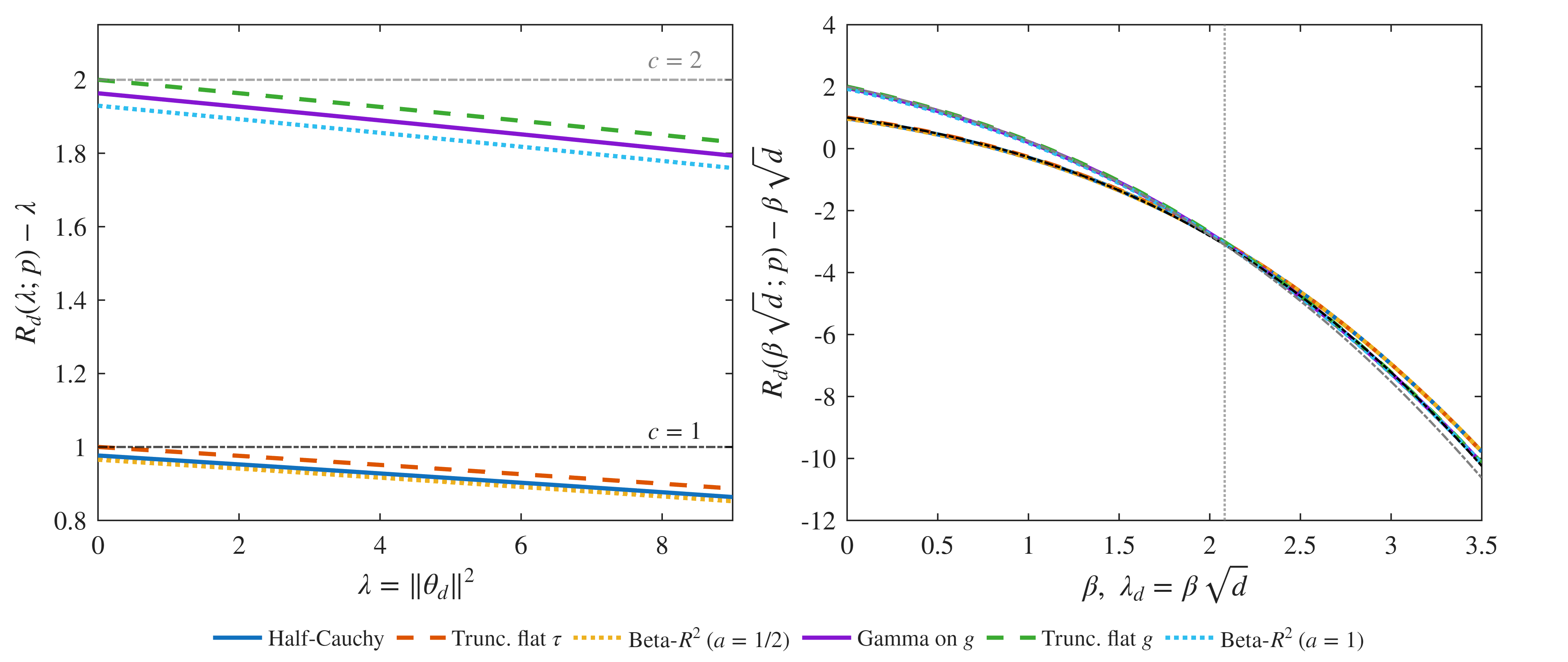}
\end{center}
{\footnotesize {\em Notes}: Left: weak-regime excess risk. Right: critical-regime centered risk. The priors cluster by the near-zero exponent of the SD-scale density, which is the numerical content of the transfer principle.}
\setlength{\baselineskip}{4mm}
\end{figure}

\input{sim_tables/tab_transfer_priors_main.tex}

\paragraph{Beyond a single global scale.}

Section~\ref{subsec:beyondglobal} separates exact transfer results from architecture-specific risk behavior. The numerical exercise here focuses on a narrower comparison: within a fixed architecture, only the common-scale law is changed from the $c=1$ SD-scale class to the $c=2$ variance-scale class.

We consider the many-weak design
\[
m_d=\lfloor d^{3/4}\rfloor,
\quad
\theta_{d,j}^{\mathrm{mw}}(\lambda)=(-1)^{j-1}\sqrt{\lambda/m_d}\,\mathbf 1\{j\le m_d\},
\quad j=1,\ldots,d,
\]
so that $\norm{\theta_d^{\mathrm{mw}}(\lambda)}^2=\lambda$, the number of active coordinates diverges, and each active coordinate remains weak. The simulations use $d=500$ and $\lambda\in\{0,1,2,4\}$.

The first architecture is a horseshoe-type global--local normal scale mixture with half-Cauchy local scales,
\[
\theta_j\mid \lambda_j,\tau\sim N(0,\tau^2\lambda_j^2),
\quad
\lambda_j\sim \mathrm{half\mbox{-}Cauchy}.
\]
Its $c=1$ version assigns a half-Cauchy prior to the common SD scale $\tau$, whereas its $c=2$ version assigns an exponential prior to the common variance $g=\tau^2$. The second architecture is the standard Bernoulli spike-and-slab model,
\[
\theta_j\mid z_j,\gamma\sim (1-z_j)\delta_0+z_jN(0,\gamma^2),
\quad
z_j\overset{\mathrm{iid}}{\sim}\operatorname{Bernoulli}(q_d).
\]
For the simulation we set $q_d=m_d/d$, so that the prior expected model size $dq_d$ equals the number of active coordinates in the design. This oracle calibration is used only to isolate the effect of the common slab-scale law; in data analysis $q$ is typically assigned a hyperprior or estimated by marginal likelihood. The $c=1$ version places a half-Cauchy prior on the common slab SD $\gamma$, and the $c=2$ version replaces that component by an exponential prior on the slab variance $\gamma^2$.

Figure~\ref{fig:beyond-global-twoarch} and Table~\ref{tab:beyond-global-twoarch} show that the within-architecture comparison follows the same direction in both examples. Replacing the $c=1$ common-scale law by the corresponding $c=2$ law increases weak-signal excess risk at every displayed value of $\lambda$. For the global--local architecture, the increase ranges from $0.86$ to $0.94$; for the spike-and-slab architecture, it ranges from $0.64$ to $0.72$. At $\lambda=4$, for instance, the excess risks are $0.97$ and $1.83$ in the global--local architecture, and $0.62$ and $1.26$ in the spike-and-slab architecture. The risk curves remain architecture-specific. Nevertheless, after the architecture is fixed, the common-scale exponent continues to organize the weak-signal comparison in the same direction as in the single-global benchmark.

\begin{figure}[t!]
\caption{Weak-signal many-weak comparison for two representative architectures beyond a single global scale at $d=500$}
\label{fig:beyond-global-twoarch}
\begin{center}
\includegraphics[width=0.95\textwidth]{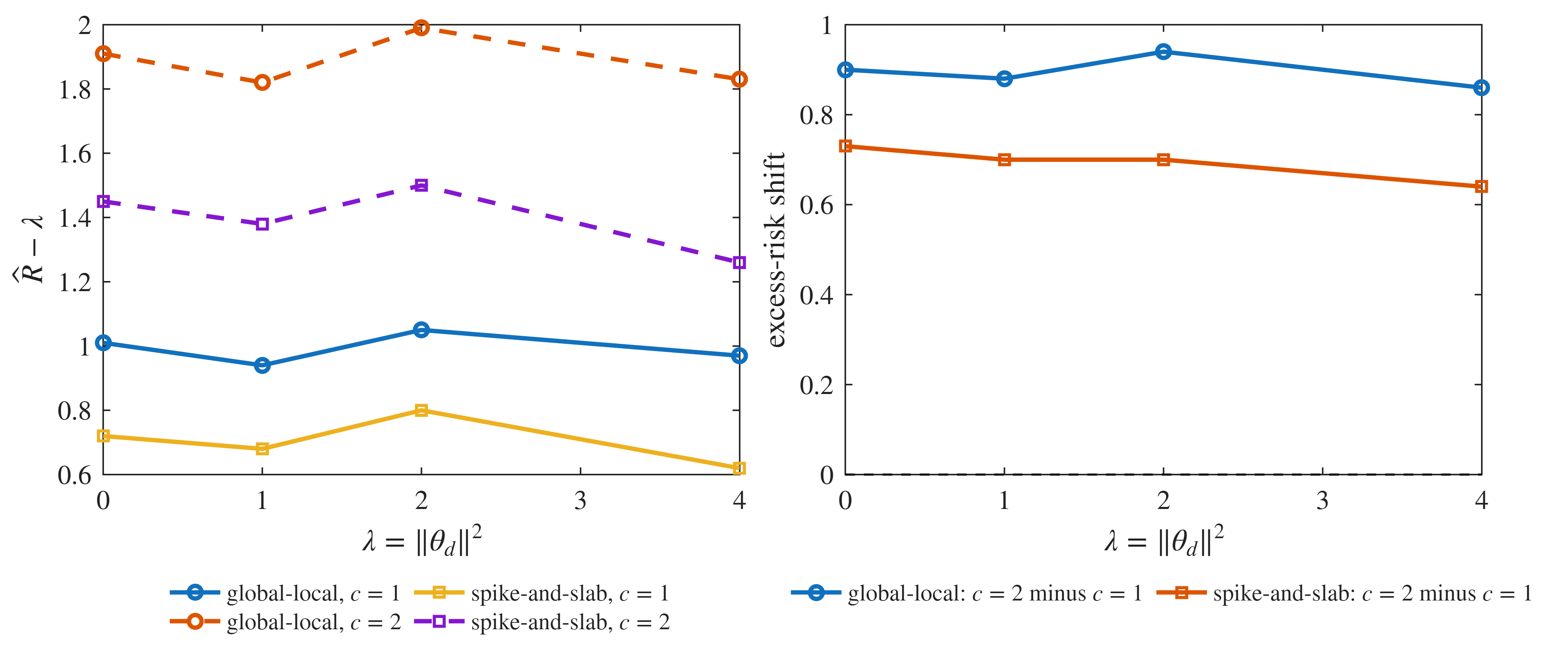}
\end{center}
{\footnotesize {\em Notes}: Left: excess-risk curves for the $c=1$ and $c=2$ variants of a global--local prior and a spike-and-slab prior. Right: within-architecture increase in excess risk from replacing the $c=1$ common-scale law by its $c=2$ analogue.}
\setlength{\baselineskip}{4mm}
\end{figure}

\input{sim_tables/tab_beyond_global_twoarch_20260507.tex}

Overall, the numerical evidence reinforces the main conclusions. The weak-signal one-unit advantage is already visible at moderate dimension, the critical regime exhibits the predicted crossover, strong-signal sequences erase the benchmark distinction up to second order, representative proper single global-scale hyperpriors inherit the same low-signal behavior through their near-zero SD-scale exponent, and the two representative architectures beyond a single global scale display the same within-architecture $c=1$ versus $c=2$ ordering in many-weak signals.

\section{Conclusion}
\label{sec:discussion}

This paper isolates the near-zero geometry of a global Gaussian shrinkage scale as the relevant object in high-dimensional low-signal risk. For the two canonical defaults, flat on the standard deviation improves on flat on the variance by one asymptotic risk unit in the weak regime, the critical regime exhibits a signal-strength crossover, and strong-signal sequences erase the distinction up to second order.

The transfer results show how this benchmark comparison propagates to the priors used in practice. For a single global-scale hyperprior, the near-zero exponent of the global SD density is the only statistic that matters for weak-signal and critical-regime isotropic risk. For bounded coordinate-multiplier normal mixtures, the same limit survives even though the posterior mean is non-radial. This bounded class contains bounded global--local normal scale mixtures and Bernoulli spike-and-slab priors with fixed inclusion probability as distinct subclasses. For commonly used priors outside this class, the classification identifies the common global-scale component and the additional component that enters the asymptotic problem: finite-mean local mixtures have the same first-order small-scale geometry, horseshoe-type local-scale distributions create a heavy-tail near-zero behavior in which the effective small-scale parameter is $\tau$, and sparse spike-and-slab priors are governed partly by the model-size prior.

Several questions remain open. A sharper characterization of the critical-regime crossover set, including uniqueness of the zero of $\Delta(\beta)$, would refine the benchmark phase diagram. On the prior side, the next theoretical steps are to extend the bounded coordinate-multiplier theorem to unbounded finite-moment multipliers, to develop the heavy-tail near-zero theory needed for horseshoe-type local-scale distributions, and to analyze triangular model-size priors, such as sparse spike-and-slab, and dimension-dependent Dirichlet-weight priors, such as Dirichlet--Laplace and R2-D2. These extensions would separate the contribution of the common global scale from the additional local, Dirichlet-weight, or model-size prior that determines risk outside the bounded class.

~

\noindent \textbf{Disclosure of Generative AI Usage}: We used generative AI tools (ChatGPT 5.5, Claude Opus 4.8, Gemini 3.5, and Refine.ink) for assistance with coding, drafting/editing, and critical reviews. We assume full responsibility for the final manuscript.

\bibliographystyle{ecta}
\bibliography{GeoBayesRef}

\newpage
\appendix
\begin{center}
\textbf{\huge Appendix}
\end{center}

\section{Proofs}
\label{app:proofs}

\subsection{Results in Section~\ref{sec:setup}}
\label{app:modelproofs}

\begin{proof}[\textbf{Proof of Proposition~\ref{prop:mixture}}]
Up to constants,
\[
N_d(\theta;0,gI_d)\, g^{c/2-1}
\propto
g^{-d/2} \exp\!\left(-\frac{\norm{\theta}^2}{2g}\right) g^{c/2-1}
=
g^{-(d-c+2)/2}\exp\!\left(-\frac{\norm{\theta}^2}{2g}\right).
\]

Hence, for fixed $\theta\neq 0$,
\[
\int_0^\infty g^{-(d-c+2)/2}\exp\!\left(-\frac{\norm{\theta}^2}{2g}\right)\,dg
=
\int_0^\infty g^{-\alpha}e^{-\beta/g}\,dg,
\quad
\alpha:=\frac{d-c+2}{2},\ \beta:=\frac{\norm{\theta}^2}{2}.
\]
Set $u=\beta/g$ (equivalently $g=\beta/u$, $dg=-\beta u^{-2}du$). Then
\[
\int_0^\infty g^{-\alpha}e^{-\beta/g}\,dg
=\beta^{1-\alpha}\int_0^\infty u^{\alpha-2}e^{-u}\,du
=\beta^{1-\alpha}\Gamma(\alpha-1)
=2^{(d-c)/2}\Gamma\!\left(\frac{d-c}{2}\right)\norm{\theta}^{-(d-c)}.
\]
Thus the marginal prior kernel in $\theta$ is proportional to $\norm{\theta}^{-(d-c)}$.

\end{proof}

\begin{proof}[\textbf{Proof of Lemma~\ref{prop:beta}}]
Starting from Proposition~\ref{prop:mixture}, the posterior density of $g$ given $T_d=t$ is
proportional to
\[
g^{c/2-1}(1+g)^{-d/2}\exp\!\left(-\frac{t}{2(1+g)}\right), \quad g>0.
\]
Applying the change of variable $y=g/(1+g)$ with inverse $g=y/(1-y)$ and Jacobian $(1-y)^{-2}$, we have
\[
\exp\!\left(-\frac{t}{2(1+g)}\right)=\exp\!\left(-\frac{t}{2}(1-y)\right)=e^{-t/2}\,e^{ty/2}.
\]
Noting $p_Y(y)=p_g(g(y))\,|dg/dy|$ and collecting factors,
\[
p_Y(y)
\;\propto\;
\left(\frac{y}{1-y}\right)^{c/2-1}
\big((1-y)^{-1}\big)^{-d/2}
e^{-t/2}\,e^{ty/2}\,
(1-y)^{-2}.
\]
The power of $y$ is $c/2-1$, while the exponents of $(1-y)$ sum to
\[
-(c/2-1)+d/2-2=\frac{d-c}{2}-1.
\]
Absorbing the constant $e^{-t/2}$ into the normalizing constant yields the displayed density
\[
p_Y(y)\;\propto\; y^{c/2-1}(1-y)^{(d-c)/2-1}e^{ty/2}, \quad 0<y<1.
\]

\end{proof}

The remaining setup identity is Lemma~\ref{lem:riskid}. It is proved from the posterior moment identity in the proposition below, which is also used later.

Let
\[
\mu_d(t):=\E[Y\mid T_d=t]=s_{d,c}(t),
\quad
q_d(t):=\E[Y^2\mid T_d=t].
\]

\begin{proposition}[Posterior moment identity]
\label{prop:moment}
For every $d>c$ and every $t\ge 0$,
\[
t\,q_d(t)=c+(t-d)\mu_d(t).
\]
Consequently, there exists $C_c<\infty$ such that
\[
t\,s_{d,c}(t)^2 \le C_c\left(1+\left(\frac{(t-d)_+}{\sqrt d}\right)^2\right)
\]
for all $d$ and all $t\ge 0$.
\end{proposition}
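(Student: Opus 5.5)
The identity will come from integration by parts in $y$ applied to the posterior density of Lemma~\ref{prop:beta}. The bound will then follow by combining the identity with $Y\in(0,1)$ and a posterior concentration estimate for $\mu_d(t)$.

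\textbf{The identity.} Write $f(y)=y^{c/2}(1-y)^{(d-c)/2}e^{ty/2}$. Since $c>0$ and $d>c$, $f$ vanishes at both endpoints of $(0,1)$, so $\int_0^1 f'(y)\,dy=0$. Differentiating,
\[
f'(y)=y^{c/2-1}(1-y)^{(d-c)/2-1}e^{ty/2}\Bigl[\tfrac c2(1-y)-\tfrac{d-c}{2}y+\tfrac t2 y(1-y)\Bigr].
\]
The prefactor is exactly the unnormalized posterior density, so dividing by the normalizing constant gives
\[
0=\tfrac c2-\tfrac c2\mu-\tfrac{d-c}2\mu+\tfrac t2\mu-\tfrac t2 q .
\]
Collecting terms yields $t\,q_d(t)=c+(t-d)\mu_d(t)$. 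Integrability near $y=0$ is guaranteed by $c/2-1>-1$, and near $y=1$ by $(d-c)/2-1>-1$.

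\textbf{The bound.} Jensen gives $\mu^2\le q$, so $t\,s^2\le t\,q=c+(t-d)\mu$.
\begin{itemize}
\item If $t\le d$, then $(t-d)\mu\le 0$ and hence $t\,s^2\le c$.
\item If $t>d$, we have $t\,s^2\le c+(t-d)\mu$, and we need to control $(t-d)\mu$ by $C\bigl(1+(t-d)^2/d\bigr)$.
\item Since $\mu\le 1$, the case $t-d\ge\sqrt d$ is immediate: $(t-d)\le (t-d)^2/\sqrt d\le$ wait — more carefully, for $t-d\ge \sqrt d$ we have $t-d\le (t-d)^2/\sqrt d$, which is not yet of the form $(t-d)^2/d$. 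So the crude bound $\mu\le1$ is insufficient here, and a bound of the form $\mu_d(t)\lesssim \bigl(1+(t-d)_+\bigr)/d$ is needed.
\end{itemize}

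\textbf{Bounding $\mu_d(t)$.} This is the main obstacle. The plan is to derive it from the identity itself. Using $q\ge\mu^2$ again,
\[
t\mu^2\le c+(t-d)\mu .
\]
This is a quadratic inequality in $\mu$, so
\[
\mu\le \frac{(t-d)+\sqrt{(t-d)^2+4ct}}{2t}.
\]
Now use $t\ge d$:
\[
\mu\le \frac{(t-d)}{t}+\sqrt{c/t}\le \frac{t-d}{d}+\sqrt{c/d}.
\]
Multiplying by $t-d$ gives
\[
(t-d)\mu\le \frac{(t-d)^2}{d}+\sqrt c\,\frac{t-d}{\sqrt d}\le \frac{(t-d)^2}{d}+\tfrac12\Bigl(c+\frac{(t-d)^2}{d}\Bigr).
\]
Hence $t\,s^2\le \tfrac32 c+\tfrac32 (t-d)_+^2/d$, and $C_c=\tfrac32\max(c,1)$ works uniformly in $d$ and $t$. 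The quadratic step needs $t>0$. For $t=0$ the claim $0\le C_c$ is trivial.
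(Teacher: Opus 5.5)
Your proposal is correct and follows essentially the same route as the paper. You use the same integration-by-parts identity, then Jensen's $q_d\ge\mu_d^2$ to get the quadratic inequality $t\mu^2\le c+(t-d)\mu$, which you solve to bound $\mu$ before substituting back. The differences are cosmetic: you split at $t\le d$ rather than $t<d/2$, and you solve the quadratic in $\mu$ directly rather than in $v=\sqrt d\,\mu$. This gives the constant $\tfrac32\max(c,1)$ in place of the paper's $2+\sqrt{2c}+c$.
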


\begin{proof}[\textbf{Proof of Proposition~\ref{prop:moment}}]
By Lemma~\ref{prop:beta}, the posterior density of $Y$ given $T_d=t$ is proportional to
\[
h(y):=y^{c/2}(1-y)^{(d-c)/2}e^{ty/2}=y(1-y)\,f(y).
\]
Taking derivative of $h$ yields
\[
h'(y)=f(y)\left[\frac{c}{2}(1-y)-\frac{d-c}{2}\,y+\frac{t}{2}\,y(1-y)\right]
=f(y)\left[\frac{c}{2}+\frac{t-d}{2}\,y-\frac{t}{2}\,y^2\right].
\]
Because $c>0$ and $d>c$, both exponents $c/2$ and $(d-c)/2$ in $h$ are positive, so the boundary
terms vanish: $h(0)=h(1)=0$. Hence $\int_0^1 h'(y)\,dy=h(1)-h(0)=0$, that is,
\[
0=\int_0^1 f(y)\left[\frac{c}{2}+\frac{t-d}{2}\,y-\frac{t}{2}\,y^2\right]dy
=\frac{c}{2}\,\kappa+\frac{t-d}{2}\int_0^1 y f(y)\,dy-\frac{t}{2}\int_0^1 y^2 f(y)\,dy.
\]
where $\kappa:=\int_0^1 f(y)\,dy$. Dividing by $\kappa/2$ and using the definitions of $\mu_d(t)$ and $q_d(t)$ gives
$0=c+(t-d)\mu_d(t)-t\,q_d(t)$, i.e.
\[
t\,q_d(t)=c+(t-d)\mu_d(t).
\]
Since $q_d(t)\ge \mu_d(t)^2$, we have 

\begin{equation}
\label{eq:moment-ineq}
t\,\mu_d(t)^2 \le c+(t-d)\mu_d(t)
\end{equation}
We write $\mu$ in short for $\mu_d(t)$, and note that $0\le\mu\le 1$ because $Y\in(0,1)$. 

We then consider two separate cases:

\emph{Case 1: $t<d/2$.} Here $t-d<0$, so $(t-d)\mu\le 0$ and \eqref{eq:moment-ineq} gives directly
$t\,\mu^2\le c$.

\emph{Case 2: $t\ge d/2$.} Set
$z:=\frac{t-d}{\sqrt d}$, and $v:=\sqrt d\,\mu$, so that $\mu=v/\sqrt d$ and $t-d=\sqrt d\,z$. Then $t\,\mu^2=(t/d)v^2$ and $(t-d)\mu=zv$, so
\eqref{eq:moment-ineq} becomes
\begin{equation}
\label{eq:vz}
\frac{t}{d}\,v^2 \le c+zv.
\end{equation}
Since $t\ge d/2$ we have $t/d\ge 1/2$, and as $v\ge 0$ this yields $\tfrac12 v^2\le (t/d)v^2\le c+zv$,
i.e.\ $v^2-2zv-2c\le 0$. Completing the square and solving the inequality in $v$, we have
\[
v \le z+\sqrt{z^2+2c} \le z+|z|+\sqrt{2c} = 2z_+ + \sqrt{2c},
\quad \text{with } z_+:=\max\{z,0\}
\]
Substituting back into \eqref{eq:vz} and
using $zv\le z_+v$, we obtain
\[
t\,\mu^2=\frac{t}{d}\,v^2 \le c+zv \le c+z_+v \le c+z_+\bigl(2z_+ + \sqrt{2c}\bigr)
= 2z_+^2+\sqrt{2c}\,z_+ + c.
\]
With the elementary bound $z_+\le\tfrac12(1+z_+^2)$ (equivalently $(1-z_+)^2\ge0$), the right-hand
side is at most $C_c(1+z_+^2)$ with $C_c:=2+\sqrt{2c}+c$. Since $z_+=(t-d)_+/\sqrt d$ and the case
$t<d/2$ gives $t\,\mu^2\le c\le C_c(1+z_+^2)$, the stated bound holds for all $t\ge 0$.
\end{proof}

\begin{proof}[\textbf{Proof of Lemma~\ref{lem:riskid}}]

Write $s:=s_{d,c}$ and $T:=T_d=\norm{X_d}^2$, and recall $\delta_{d,c}(x)=s(\norm{x}^2)\,x$. For a
vector field $F=(F_1,\dots,F_d):\mathbb{R}^d\to\mathbb{R}^d$, its \emph{divergence} is
$\mathrm{div}\,F:=\sum_{i=1}^d\partial F_i/\partial x_i$.

By Lemma~\ref{prop:beta} the posterior law of $Y$ given $T=t$ has density proportional to
$y^{c/2-1}(1-y)^{(d-c)/2-1}e^{ty/2}$, an exponential family with natural parameter $t/2$ and
sufficient statistic $y$. Differentiating $s(t)=\E[Y\mid T=t]$ under the integral sign  gives
\begin{equation}\label{eq:s't}
s'(t)=\tfrac12\,\mathrm{Var}(Y\mid T=t)=\tfrac12\bigl(q_d(t)-s(t)^2\bigr).    
\end{equation}
Since $Y\in(0,1)$, its conditional variance lies in $[0,1]$, so $|s'(t)|\le\tfrac12$ and
$s\in(0,1)$. Consequently $\delta_{d,c}$ is weakly differentiable with linear growth and
$\E\norm{\delta_{d,c}(X_d)-X_d}^2<\infty$.

For $X_d\sim N(\theta_d,I_d)$ and any weakly differentiable $g:\mathbb{R}^d\to\mathbb{R}^d$ with
$\E|\partial_i g_i(X_d)|<\infty$, integration by parts against the standard Gaussian density gives
Stein's identity \citep{Stein1981}
\[
\E\bigl[(X_{d,i}-\theta_{d,i})\,g_i(X_d)\bigr]=\E\bigl[\partial_i g_i(X_d)\bigr],
\quad i=1,\dots,d.
\]
Take $g(x):=\delta_{d,c}(x)-x$ and expand the risk by adding and subtracting $X_d$:
\[
\risk(\theta_d,\delta_{d,c})=\E\norm{\delta_{d,c}(X_d)-\theta_d}^2
=\E\norm{g(X_d)}^2
+2\,\E\bigl[(X_d-\theta_d)\cdot g(X_d)\bigr]
+\E\norm{X_d-\theta_d}^2.
\]
The last term equals $d$, and summing Stein's identity over $i$ turns the cross term into
$\E[\mathrm{div}\,g(X_d)]=\E[\mathrm{div}\,\delta_{d,c}(X_d)-d]$. Hence
\[
\risk(\theta_d,\delta_{d,c})
=\E\!\left[\norm{\delta_{d,c}(X_d)-X_d}^2 + 2\,\mathrm{div}\,\delta_{d,c}(X_d) - d\right].
\]

Now, with $t=\norm{x}^2$ and $\delta_{d,c}(x)-x=(s(t)-1)x$, we have
\[
\norm{\delta_{d,c}(x)-x}^2=(s(t)-1)^2\norm{x}^2=(s(t)-1)^2 t .
\]
Note that $\partial_i\bigl(s(\norm{x}^2)x_i\bigr)=s(t)+x_i\,s'(t)\,\partial_i\norm{x}^2
=s(t)+2x_i^2\,s'(t)$, so summing over $i$ and using $\sum_i x_i^2=t$, we obtain
\[
\mathrm{div}\,\delta_{d,c}(x)=d\,s(t)+2t\,s'(t).
\]

Substituting the above into the formula for $\risk(\theta_d,\delta_{d,c})$, we have 
\[
\risk(\theta_d,\delta_{d,c})
=
\E\!\left[(s(T)-1)^2T + 2d\,s(T) + 4T s'(T) - d\right].
\]
By \eqref{eq:s't}, $4T s'(T)=2T\bigl(q_d(T)-s(T)^2\bigr)$, then the moment identity
$T q_d(T)=c+(T-d)s(T)$ of Proposition~\ref{prop:moment}:
\begin{align*}
\risk(\theta_d,\delta_{d,c})
&=
\E\!\left[T-2T s(T)-T s(T)^2+2T q_d(T)+2d\,s(T)-d\right]\\
&=
\E\!\left[T-2T s(T)-T s(T)^2+2c+2(T-d)s(T)+2d\,s(T)-d\right]\\
&=
\E[T]-d + 2c - \E\!\left[T s(T)^2\right].
\end{align*}
Since $\E[T_d]=d+\norm{\theta_d}^2$, the claim follows.
\end{proof}

\subsection{Results in Sections~\ref{sec:weak}--\ref{sec:shellcritical}}
\label{app:weakproofs}

We first prove the more general Theorem \ref{thm:main} in Section \ref{sec:shellcritical}, and then apply it to derive Theorem \ref{thm:bounded} in Section \ref{sec:weak}. The proof of Theorem~\ref{thm:main} uses the exact finite-dimensional identity from Lemma~\ref{lem:riskid}. Thus it is enough to identify the limit of the single nonnegative term
\[
T_d s_{d,c}(T_d)^2
\]
and its expectation. For this purpose, we present three lemmas below, and then combine them to prove Theorem \ref{thm:main}.

Throughout this subsection, write
\[
\nu_d:=\norm{\theta_d}^2,
\quad
\beta_d:=\frac{\nu_d}{\sqrt d},
\quad
Z_d:=\frac{T_d-d}{\sqrt d}.
\]
We start by presenting the following lemma about the limit behavior of $Z_d$.

\begin{lemma}[Limit Distribution of $Z_d$]
\label{lem:Zd}
If $\beta_d\to\beta\in[0,\infty)$, then
\[
Z_d\Rightarrow Z_\beta\sim N(\beta,2),
\quad
\sup_d\E|Z_d|^4<\infty.
\]
\end{lemma}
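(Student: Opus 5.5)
Write $X_d=\theta_d+\varepsilon_d$ with $\varepsilon_d\sim N_d(0,I_d)$. Expanding $T_d=\norm{\varepsilon_d}^2+2\theta_d^\top\varepsilon_d+\nu_d$ gives the decomposition
\[
Z_d=\frac{\norm{\varepsilon_d}^2-d}{\sqrt d}+\frac{2\theta_d^\top\varepsilon_d}{\sqrt d}+\beta_d=:A_d+B_d+\beta_d .
\]
Each piece is then handled separately.

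\emph{The noise term $A_d$.} We have $A_d=d^{-1/2}\sum_{i=1}^d(\varepsilon_{d,i}^2-1)$, a normalized sum of i.i.d.\ centered variables with variance $2$. The classical CLT gives $A_d\Rightarrow N(0,2)$.

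\emph{The cross term $B_d$.} Exactly, $B_d\sim N(0,4\nu_d/d)$. Since $\nu_d/d=\beta_d/\sqrt d\to0$, we get $\E B_d^2\to0$ and hence $B_d\to0$ in probability.

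\emph{The deterministic shift.} By assumption $\beta_d\to\beta$.

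Slutsky's lemma now gives $Z_d\Rightarrow N(\beta,2)$. The limit is a single normal and needs no joint convergence, because $B_d$ and $\beta_d-\beta$ vanish.

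\emph{Fourth-moment bound.} By Minkowski in $L^4$,
\[
\norm{Z_d}_{L^4}\le\norm{A_d}_{L^4}+\norm{B_d}_{L^4}+\beta_d .
\]
The last term is bounded since $\beta_d$ converges. For the Gaussian term, $\E B_d^4=3(4\nu_d/d)^2$, which is bounded.

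For $A_d$, write $\xi_i=\varepsilon_{d,i}^2-1$, which are i.i.d.\ with mean $0$, $\E\xi_i^2=2$ and $\E\xi_i^4<\infty$. Expanding the fourth power of the sum, only terms with every index appearing at least twice survive:
\[
\E\Bigl(\sum_i\xi_i\Bigr)^4=d\,\E\xi_1^4+3d(d-1)(\E\xi_1^2)^2 .
\]
Dividing by $d^2$ gives $\E A_d^4\le \E\xi_1^4+12$, uniformly in $d$. Alternatively, one can use the explicit $\chi^2_d$ central moments. Combining the three bounds yields $\sup_d\E|Z_d|^4<\infty$.

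None of these steps is a genuine obstacle. The only point needing care is the cross term: its variance $4\nu_d/d$ must vanish. This is exactly where the critical scaling $\nu_d=O(\sqrt d)$ is used, since it gives $\nu_d/d=O(d^{-1/2})$. In the weak regime this is only easier, because $\beta=0$ there.
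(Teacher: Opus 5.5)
Your proof is correct and follows essentially the same route as the paper: the decomposition $Z_d=A_d+B_d+\beta_d$, then the CLT plus Slutsky for the distributional limit, then termwise fourth-moment bounds. The differences are cosmetic: you use Minkowski in $L^4$ instead of $(x+y+z)^4\le 27(x^4+y^4+z^4)$, and you expand the fourth moment of the i.i.d.\ sum directly (with $\E\xi_1^4=60$ this reproduces the paper's $\chi^2_d$ central moment $12d^2+48d$) rather than quoting that value.
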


\begin{proof}[\textbf{Proof of Lemma~\ref{lem:Zd}}]
Write $X_d=\theta_d+\varepsilon_d$, where $\varepsilon_d\sim N_d(0,I_d)$. Then
\[
Z_d=A_d+B_d+\beta_d,
\quad
A_d:=\frac{\norm{\varepsilon_d}^2-d}{\sqrt d},
\quad
B_d:=\frac{2\theta_d^\top\varepsilon_d}{\sqrt d}.
\]
Since $\norm{\varepsilon_d}^2 \sim \chi^2_d$, by the central limit theorem we have $A_d\Rightarrow N(0,2)$. Moreover, $B_d$ is Gaussian with mean $0$ and
\[
\Var(B_d)=\frac{4\nu_d}{d}=\frac{4\beta_d}{\sqrt d}\to0,
\]
so $B_d\to0$ in probability. Then, since $\beta_d\to\beta$ too, we have by the Slutsky's theorem 
$$Z_d\Rightarrow Z_\beta\sim N(\beta,2).$$

For the fourth moment, the centered fourth moment of a $\chi_d^2$ variable is $12d^2+48d$, and therefore
\[
\E A_d^4=12+\frac{48}{d}.
\]
Also,
\[
\E B_d^4=3\Var(B_d)^2=\frac{48\beta_d^2}{d}.
\]
The sequence $\{\beta_d\}$ is bounded. Hence, using
$(x+y+z)^4\le27(x^4+y^4+z^4)$,
\[
\sup_d\E|Z_d|^4
\le
27\left(
\sup_d\E A_d^4+
\sup_d\E B_d^4+
\sup_d|\beta_d|^4
\right)<\infty.
\]
\end{proof}

The next lemma establishes that the conditional expectation function $h_c(z)$ defined in Theorem \ref{thm:main} emerges as the limit of $\sqrt d\,s_{d,c}(d+\sqrt d\,z)$, which governs the limit behavior of $\sqrt{T_d} s_{d,c}(T_d)$.

For $b>0$, define
\[
I_b(z):=\int_0^\infty
u^{b-1}\exp\!\left(\frac{zu}{2}-\frac{u^2}{4}\right)du,
\quad
h_c(z):=\frac{I_{c/2+1}(z)}{I_{c/2}(z)}.
\]
This agrees with the definition in Theorem~\ref{thm:main}, since $h_c(z)$ is the mean of the probability density proportional to
$u^{c/2-1}\exp(zu/2-u^2/4)$ on $(0,\infty)$.

\begin{lemma}[Emergence of $h_c(z)$ in the limit]
\label{lem:boundary}
Fix $c>0$. For every $M<\infty$,
\[
\sup_{|z|\le M}
\left|
\sqrt d\,s_{d,c}(d+\sqrt d\,z)-h_c(z)
\right|\to0.
\]
\end{lemma}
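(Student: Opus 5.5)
The plan is to start from Lemma~\ref{prop:beta} and write $s_{d,c}(t)$ as a ratio of two integrals over $y\in(0,1)$. Then rescale to the boundary layer $y=u/\sqrt d$, which is the natural scale because $t-d=\sqrt d\,z$. With $t=d+\sqrt d\,z$ the posterior kernel is $y^{c/2-1}\exp\{\phi_d(y)\}$, where
\[
\phi_d(y)=\frac{d-c-2}{2}\log(1-y)+\frac{ty}{2}.
\]
Substituting $y=u/\sqrt d$ and expanding $\log(1-y)=-y-y^2/2-R(y)$ gives
\[
\phi_d(u/\sqrt d)=\frac{zu}{2}-\frac{u^2}{4}+\frac{(c+2)u}{2\sqrt d}+O\!\left(\frac{u^2}{d}+\frac{u^3}{\sqrt d}\right).
\]
Here the $-u^2/4$ comes from $\tfrac{d}{2}\cdot\tfrac{u^2}{2d}$, and the remainder is uniform for $|z|\le M$ and $u\le d^{1/8}$, say. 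The factor $d^{-(c/2-1)/2}$ from $y^{c/2-1}$, together with the Jacobian, appears in both numerator and denominator. The extra factor $y=u/\sqrt d$ in the numerator therefore yields exactly
\[
\sqrt d\,s_{d,c}(d+\sqrt d z)=\frac{J^{(1)}_d(z)}{J^{(0)}_d(z)},\qquad J^{(k)}_d(z)=\int_0^{\sqrt d}u^{c/2-1+k}e^{\psi_d(u,z)}\,du ,
\]
where $\psi_d(u,z)\to zu/2-u^2/4$ pointwise.

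The second step is to show that $J^{(k)}_d(z)\to I_{c/2+k}(z)$ uniformly in $|z|\le M$. Since $I_{c/2}(z)$ is continuous and strictly positive on $[-M,M]$, the uniform convergence of the ratio to $h_c(z)$ then follows. I would split the integral into $u\le K$, $K<u\le \sqrt d/2$, and $u>\sqrt d/2$. On $u\le K$ the expansion is uniform, so dominated convergence (even uniform convergence of the integrand) works, and the integrability near $u=0$ comes from $u^{c/2-1}$ with $c>0$.

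The main obstacle is the tail domination, which must hold uniformly in $z$. For $0\le y\le 1/2$, $\log(1-y)\le -y-y^2/2$ gives
\[
\psi_d\le \frac{zu}{2}-\frac{(d-c-2)u^2}{4d}+\frac{(c+2)u}{2\sqrt d}\le \frac{(M+1)u}{2}-\frac{u^2}{8}
\]
for large $d$. This is an integrable Gaussian-type envelope, independent of $z$, and it makes the contribution from $K<u\le\sqrt d/2$ small uniformly once $K$ is large. For $y\in(1/2,1)$, the original kernel is bounded by $e^{ty/2}(1-y)^{(d-c)/2-1}$. Since $\tfrac{d-c}{2}\log(1-y)+\tfrac{ty}{2}$ is concave in $y$ and already decreasing past $y=1/2$ when $t\le d+M\sqrt d$, it is at most its value at $1/2$, which is about $-d(\log 2-1/4)/2+O(\sqrt d)$. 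After the $\sqrt d$-rescaling this contributes $e^{-\epsilon d}$ times polynomial factors, so it is negligible relative to $J^{(0)}_d$, which is bounded below. Combining the three pieces gives the uniform convergence and hence the lemma.
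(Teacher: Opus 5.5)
Your proposal is correct and follows essentially the paper's proof: the same ratio $J_{d,1}/J_{d,0}$ after the substitution $y=u/\sqrt d$, the same pointwise limit $zu/2-u^2/4$, and the same $z$-uniform Gaussian-type envelope obtained from $\log(1-r)\le -r-r^2/2$, followed by dominated convergence and positivity of $I_{c/2}$ on $[-M,M]$. The one difference is that you handle $u>\sqrt d/2$ separately by concavity, which is unnecessary: that inequality holds for all $r\in(0,1)$, so the paper uses the single envelope $Mu/2+u-u^2/8$ on all of $(0,\sqrt d)$. If you keep that step, note two small slips, neither of which affects the conclusion: the concave exponent should carry $(d-c)/2-1$ rather than $(d-c)/2$, and its value at $y=1/2$ is $-\tfrac d2(\log 2-\tfrac12)+O(\sqrt d)$, which is still exponentially small.
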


\begin{proof}[\textbf{Proof of Lemma~\ref{lem:boundary}}]
Set $b=c/2$ and $t=d+\sqrt d\,z$. By Lemma~\ref{prop:beta}, the posterior mean
$s_{d,c}(t)=\E[Y\mid T_d=t]$ is the ratio of the first posterior moment to the normalizing
constant:
\[
s_{d,c}(t)
=
\frac{A_{d,1}(t)}{A_{d,0}(t)},
\]
where, for $m=0,1$,
\[
A_{d,m}(t)
:=
\int_0^1
y^{b-1+m}(1-y)^{(d-c)/2-1}
\exp\!\left(\frac{ty}{2}\right)dy.
\]
We now evaluate this ratio on the local scale $t=d+\sqrt d\,z$. Put
$y=u/\sqrt d$, so that $dy=du/\sqrt d$ and $u\in(0,\sqrt d)$. Then
\[
A_{d,m}(d+\sqrt d\,z)
=d^{-(b+m)/2}J_{d,m}(z),
\]
with
\[
J_{d,m}(z)
:=
\int_0^{\sqrt d}
 u^{b-1+m}
\left(1-\frac{u}{\sqrt d}\right)^{(d-c)/2-1}
\exp\!\left(\frac{(d+\sqrt d\,z)u}{2\sqrt d}\right)du.
\]
Therefore
\[
\sqrt d\,s_{d,c}(d+\sqrt d\,z)
=
\frac{J_{d,1}(z)}{J_{d,0}(z)}.
\]

Define, for $0<u<\sqrt d$,
\[
\phi_d(u,z)
:=
\left(\frac{d-c}{2}-1\right)
\log\!\left(1-\frac{u}{\sqrt d}\right)
+
\frac{(d+\sqrt d\,z)u}{2\sqrt d},
\]
which  is the exponent of the only part of the integrand depending on $d$ in a non-polynomial way. For
fixed $u>0$, we use the expansion $\log(1-r)=-r-r^2/2+O(r^3)$ with $r=u/\sqrt d$ to derive
\[
\begin{aligned}
\phi_d(u,z)
&=\left(\frac{d-c}{2}-1\right)
\left(-\frac{u}{\sqrt d}-\frac{u^2}{2d}+O(d^{-3/2})\right)
+\frac{\sqrt d\,u}{2}+\frac{zu}{2} \\
&=\frac{zu}{2}-\frac{u^2}{4}+O(d^{-1/2}),
\end{aligned}
\]
where the linear terms $-\sqrt d\,u/2$ and $\sqrt d\,u/2$ cancel. Hence
\[
\sup_{|z|\le M}
\left|
\phi_d(u,z)-\left(\frac{zu}{2}-\frac{u^2}{4}\right)
\right|\to0.
\]
In fact, the remainder is independent of $z$.

It remains to justify integration uniformly over $|z|\le M$. For sufficiently large $d$, the inequality
$\log(1-r)\le-r-r^2/2$, $0<r<1$, gives
\begin{align*}
\phi_d(u,z)
&\le
\frac{Mu}{2}
+
\frac{c+2}{2\sqrt d}u
-
\frac{d-c-2}{4d}u^2 \\
&\le
\frac{Mu}{2}+u-\frac{u^2}{8},
\quad 0<u<\sqrt d,
\end{align*}
where the last inequality holds once $d$ is large enough in terms of $c$. Extend the integrand defining $J_{d,m}$ to $(0,\infty)$ by setting it equal to zero for $u\ge\sqrt d$. For each fixed $u>0$, its convergence to
\[
u^{b-1+m}\exp\!\left(\frac{zu}{2}-\frac{u^2}{4}\right)
\]
is uniform over $|z|\le M$, and the supremum of the absolute difference is bounded by a constant multiple of
\[
u^{b-1+m}\exp\!\left(\frac{Mu}{2}+u-\frac{u^2}{8}\right),
\]
which is integrable on $(0,\infty)$. By dominated convergence, we have
\[
\sup_{|z|\le M}
\left|J_{d,m}(z)-I_{b+m}(z)\right|\to0,
\quad m=0,1.
\]
The function $I_b$ is continuous and strictly positive, and hence
$\inf_{|z|\le M}I_b(z)>0$. Taking ratios proves the proposition.
\end{proof}

Finally, we present a lemma that summarizes some useful properties of the limit density $h_c$ that will be useful in the proof of Theorem \ref{thm:main}.

\begin{lemma}[Some Properties of $h_c$]
\label{lem:profile}
Let $c>0$.
\begin{enumerate}
\item[(i)] The function $h_c$ is continuously differentiable and satisfies
\begin{equation}\label{eq:ricatti_id}
h_c'(z)
=
\frac c2+\frac z2h_c(z)-\frac12h_c(z)^2.
\end{equation}
\item[(ii)] For every $z\in\mathbb R$,
\[
h_c(z)^2\le c+zh_c(z).
\]
Consequently,
\[
0<h_c(z)\le\sqrt c\quad \text{for } z\le0,
\quad
0<h_c(z)\le z+\sqrt c\quad \text{for }z\ge0,
\]
and therefore $h_c(z)=O(1+z_+)$ and $h_c'(z)=O(1+z^2)$ as $|z|\to\infty$.
\item[(iii)] If $Z_\beta\sim N(\beta,2)$ and
\[
m_c(\beta):=\E[h_c(Z_\beta)],
\quad
q_c(\beta):=\E[h_c(Z_\beta)^2],
\]
then these expectations are finite and
\[
q_c(\beta)=c+\beta m_c(\beta).
\]
In particular,
\[
L_c(\beta)
=q_c(\beta)-2\beta m_c(\beta)
=2c-q_c(\beta)
=c-\beta m_c(\beta),
\quad
L_c(0)=c.
\]
\end{enumerate}
\end{lemma}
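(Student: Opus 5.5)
For (i), I will differentiate $h_c=I_{b+1}/I_b$ with $b=c/2$ under the integral sign. This is justified because the Gaussian factor $e^{-u^2/4}$ dominates $u^{k}e^{zu/2}$ locally uniformly in $z$. Since $\partial_z I_a=\tfrac12 I_{a+1}$, this gives
\[
h_c'=\tfrac12\bigl(I_{b+2}/I_b-h_c^2\bigr)=\tfrac12\Var_{q_{c,z}}(U),
\]
and continuity of $h_c'$ follows in the same way. To express the second moment, I will integrate by parts the derivative of $u^{b}\exp(zu/2-u^2/4)$ over $(0,\infty)$. The boundary terms vanish because $b>0$. This yields
\[
0=b I_b+\tfrac z2 I_{b+1}-\tfrac12 I_{b+2},
\]
that is, $\E_{q}[U^2]=c+z\,h_c(z)$. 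This is the exact limiting analogue of the moment identity in Proposition~\ref{prop:moment}. Substituting it into the variance formula gives \eqref{eq:ricatti_id}.

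For (ii), Jensen gives $h_c^2\le \E_q U^2=c+zh_c$, and $h_c>0$ is clear since $U>0$. For $z\le0$ we have $zh_c\le0$, so $h_c^2\le c$. For $z\ge0$, solving the quadratic inequality gives
\[
h_c\le \tfrac{z+\sqrt{z^2+4c}}{2}\le z+\sqrt c.
\]
This yields $h_c=O(1+z_+)$, and the Riccati identity then gives $|h_c'|\le \tfrac c2+\tfrac{|z|}{2}h_c+\tfrac12h_c^2=O(1+z^2)$.

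For (iii), finiteness of $m_c$ and $q_c$ follows from these polynomial growth bounds and Gaussian moments. The identity is Gaussian integration by parts (Stein's lemma for $N(\beta,2)$): $\E[(Z_\beta-\beta)h_c(Z_\beta)]=2\E[h_c'(Z_\beta)]$, which is legitimate because $h_c'$ has polynomial growth. Inserting \eqref{eq:ricatti_id},
\[
\E[Z_\beta h_c]-\beta m_c=c+\E[Z_\beta h_c]-q_c,
\]
so the $\E[Z_\beta h_c]$ terms cancel and $q_c=c+\beta m_c$. The forms of $L_c$ then follow by algebra from \eqref{eq:L_func}:
\[
L_c=q_c-2\beta m_c=c-\beta m_c=2c-q_c,
\]
and at $\beta=0$ we get $L_c(0)=c$.

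The main obstacle is bookkeeping rather than conceptual: justifying the interchanges (differentiation under the integral, vanishing boundary terms at $u=0$ using $b>0$, and the integrability conditions for Stein's lemma). The growth bounds from (ii) are what make the last step rigorous, so (ii) must be established before (iii).
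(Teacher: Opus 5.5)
Your proposal is correct and follows essentially the same route as the paper's proof. Both arguments proceed in the same steps:
\begin{enumerate}
\item differentiate under the integral sign to get $I_a'=\tfrac12 I_{a+1}$;
\item integrate by parts to get $I_{b+2}=zI_{b+1}+2bI_b$, which yields the Riccati identity;
\item use Jensen and the quadratic inequality for the growth bounds;
\item apply Stein's identity for $N(\beta,2)$, justified by those growth bounds, to get $q_c=c+\beta m_c$.
\end{enumerate}
The only differences are cosmetic. You write $h_c'$ first as half the variance of $U$, while the paper applies the quotient rule directly. The paper also spells out the Stein step through truncation to $[-A,A]$ with vanishing boundary terms and dominated convergence.
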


\begin{proof}[\textbf{Proof of Lemma~\ref{lem:profile}}]
Write $b=c/2$. Differentiating under the integral sign, which is justified locally uniformly in $z$ by Gaussian domination, we obtain
\[
I_b'(z)=\frac12 I_{b+1}(z).
\]
We next record the integration-by-parts identity used below. Since $b>0$, the function
\[
u^b\exp\!\left(\frac{zu}{2}-\frac{u^2}{4}\right)
\]
vanishes as $u\downarrow0$; the Gaussian factor also makes it vanish as $u\to\infty$. Hence the integral of its derivative over $(0,\infty)$ is zero. Computing that derivative gives
\[
0
=\int_0^\infty
\left(bu^{b-1}+\frac z2u^b-\frac12u^{b+1}\right)
\exp\!\left(\frac{zu}{2}-\frac{u^2}{4}\right)\,du.
\]
In terms of the integrals $I_a(z)$, this is
\[
0=bI_b(z)+\frac z2 I_{b+1}(z)-\frac12 I_{b+2}(z),
\]
or equivalently
\[
I_{b+2}(z)=zI_{b+1}(z)+2bI_b(z).
\]
Finally, differentiating $h_c=I_{b+1}/I_b$ by the quotient rule and using $I_a'(z)=I_{a+1}(z)/2$ gives
\[
h_c'(z)
=
\frac12\frac{I_{b+2}(z)}{I_b(z)}
-
\frac12\left(\frac{I_{b+1}(z)}{I_b(z)}\right)^2
=
\frac12\left(z\frac{I_{b+1}(z)}{I_b(z)}+2b\right)
-\frac12h_c(z)^2
=
\frac c2+\frac z2h_c(z)-\frac12h_c(z)^2,
\]
proving part (i).

Let $U_z$ have density proportional to
$u^{b-1}\exp(zu/2-u^2/4)$ on $(0,\infty)$. Then
\[
\E U_z=h_c(z),
\quad
\E U_z^2
=
\frac{I_{b+2}(z)}{I_b(z)}
=c+zh_c(z).
\]
By Jensen's inequality  $h_c(z)^2\le c+zh_c(z)$. If $z\le0$, this implies $h_c(z)^2\le c$. If $z\ge0$, solving the quadratic inequality gives
\[
h_c(z)
\le
\frac{z+\sqrt{z^2+4c}}{2}
\le z+\sqrt c.
\]
Hence, $h_c(z)=O(1+z_+)$, and \eqref{eq:ricatti_id} then yields $h_c'(z)=O(1+z^2)$. This proves part (ii).

The growth bounds above imply that
$\E|Z_\beta h_c(Z_\beta)|$, $\E|h_c'(Z_\beta)|$, and $\E[h_c(Z_\beta)^2]$ are finite. Let
\[
\varphi_\beta(z):=(4\pi)^{-1/2}\exp\!\left\{-\frac{(z-\beta)^2}{4}\right\}
\]
denote the density of $Z_\beta\sim N(\beta,2)$. Then
\[
\varphi_\beta'(z)=-\frac{z-\beta}{2}\varphi_\beta(z),
\quad\text{or equivalently}\quad
(z-\beta)\varphi_\beta(z)=-2\varphi_\beta'(z).
\]
Therefore, for $A>0$,
\[
\int_{-A}^A (z-\beta)h_c(z)\varphi_\beta(z)\,dz
=
-2\int_{-A}^A h_c(z)\varphi_\beta'(z)\,dz
=
-2\big[h_c(z)\varphi_\beta(z)\big]_{-A}^A
+2\int_{-A}^A h_c'(z)\varphi_\beta(z)\,dz.
\]
By part (ii), $h_c$ has at most linear growth and $h_c'$ has at most quadratic growth, while
$\varphi_\beta$ has Gaussian tails. Hence
\[
\big[h_c(z)\varphi_\beta(z)\big]_{-A}^A\to0
\quad\text{as }A\to\infty,
\]
and dominated convergence gives
\[
\E[(Z_\beta-\beta)h_c(Z_\beta)]
=2\E[h_c'(Z_\beta)].
\]
Using part (i) on the right-hand side gives
\[
\E[Z_\beta h_c(Z_\beta)]-\beta m_c(\beta)
=
c+\E[Z_\beta h_c(Z_\beta)]-q_c(\beta)
\]
which implies $q_c(\beta)=c+\beta m_c(\beta)$. The remaining identities in part (iii) follow algebraically.
\end{proof}


We now apply Lemmas \ref{lem:Zd}, \ref{lem:boundary}  and \ref{lem:profile} together to prove Theorem \ref{thm:main}.

\begin{proof}[\textbf{Proof of Theorem~\ref{thm:main}}]
For $z\ge-\sqrt d$, define
\[
v_d(z):=\sqrt d\,s_{d,c}(d+\sqrt d\,z),
\quad
G_d(z):=\left(1+\frac{z}{\sqrt d}\right)v_d(z)^2,
\]
and extend $G_d$ arbitrarily to $z<-\sqrt d$. Since $T_d=d+\sqrt d\,Z_d$ and $Z_d\ge-\sqrt d$ almost surely,
\[
G_d(Z_d)=T_d s_{d,c}(T_d)^2.
\]
Lemma~\ref{lem:boundary} implies that, on every compact interval,
$v_d\to h_c$ uniformly. Hence $G_d\to h_c^2$ uniformly on every compact interval as well.

By Lemma~\ref{lem:Zd}, $Z_d\Rightarrow Z_\beta$, and in particular $\{Z_d\}$ is tight. Fix $\varepsilon>0$ and choose $M$ so that
$\sup_d\Pbb(|Z_d|>M)<\varepsilon$. Then
\[
|G_d(Z_d)-h_c(Z_d)^2|\mathbf 1_{\{|Z_d|\le M\}}
\le
\sup_{|z|\le M}|G_d(z)-h_c(z)^2|
\to0.
\]
Thus $G_d(Z_d)-h_c(Z_d)^2\to0$ in probability. Since $h_c^2$ is continuous and $Z_d\Rightarrow Z_\beta$, the continuous mapping theorem and Slutsky's theorem give
\[
T_d s_{d,c}(T_d)^2
=G_d(Z_d)
\Rightarrow h_c(Z_\beta)^2.
\]

Then, by Proposition~\ref{prop:moment}, we have
\[
0\le T_d s_{d,c}(T_d)^2
\le C_c\{1+(Z_d^+)^2\}.
\]
and, by Lemma~\ref{lem:Zd},
\[
\sup_d\E\left[\left(1+(Z_d^+)^2\right)^2\right]<\infty.
\]
Hence the dominating family $1+(Z_d^+)^2$ is bounded in $L^2$ and is therefore uniformly integrable. Similarly, 
$\{T_d s_{d,c}(T_d)^2\}$ is also uniformly integrable. Convergence in distribution together with uniform integrability yields
\[
\E[T_d s_{d,c}(T_d)^2]
\to
q_c(\beta):=\E[h_c(Z_\beta)^2].
\]

Finally, Lemma~\ref{lem:riskid} gives the exact identity
\[
\risk(\theta_d,\delta_{d,c})-\norm{\theta_d}^2
=
2c-\E[T_d s_{d,c}(T_d)^2].
\]
Therefore
\[
\risk(\theta_d,\delta_{d,c})-\norm{\theta_d}^2
\to
2c-q_c(\beta)
=L_c(\beta),
\]
where the last equality is Lemma~\ref{lem:profile}(iii).
\end{proof}

\begin{proof}[\textbf{Proof of Theorem~\ref{thm:bounded}}]
If $\norm{\theta_d}^2\to\nu<\infty$, then
$\norm{\theta_d}^2/\sqrt d\to0$. Theorem~\ref{thm:main} and Lemma~\ref{lem:profile}(iii) give
\[
\risk(\theta_d,\delta_{d,c})-\norm{\theta_d}^2
\to
L_c(0)=c.
\]
Combining this with $\norm{\theta_d}^2\to\nu$ proves
$\risk(\theta_d,\delta_{d,c})\to\nu+c$.
\end{proof}

\subsection{Results in Section~\ref{sec:dense}}
\label{app:denseproofs}

To prove Theorem~\ref{thm:strongfirst} and Theorem~\ref{thm:dense2}, we first introduce the following notation and lemmas.

Let $\tau$ denote the normalized squared norm $T_d/d$, not the prior SD scale.  Fix a $\tau>1$ and condition on the  event $\{T_d=d\tau\}$. Write $\Pi_{d,c}(\cdot\mid T_d=t)$ for the posterior law of the latent variance $g$ given $T_d=t$ under the prior $\pi_{d,c}$. For $\tau>1$, define
\[
\Phi_\tau(g):=\log(1+g)+\frac{\tau}{1+g},
\quad
g_\tau:=\tau-1,
\quad
a(\tau):=\frac{g_\tau}{1+g_\tau}=1-\frac1\tau.
\]
Since $\Phi_\tau'(g)=(g+1-\tau)/(1+g)^2$, the point $g_\tau$ is the unique minimizer of $\Phi_\tau$ on $(0,\infty)$. 

The next lemma shows that the posterior for $g$ concentrates near this minimizer uniformly over compact $\tau$-sets. Its role is to convert the \emph{random} Bayes shrinkage factor $s_{d,c}(T_d)$ into the \emph{deterministic} function $a(\tau_d)$ of the normalized value $\tau_d:=T_d/d$.
\begin{lemma}[Uniform posterior concentration in the strong regime]
\label{lem:densepost}
Let $K\subset (1,\infty)$ be compact. For every $\varepsilon>0$,
\[
\sup_{\tau\in K}
\Pi_{d,c}\!\left(
\left.
\left|g-g_\tau\right|>\varepsilon
\;\right|\;
T_d=d\tau
\right)
\to 0.
\]
Consequently,
\[
\sup_{\tau\in K}\left|s_{d,c}(d\tau)-a(\tau)\right|\to 0.
\]
\end{lemma}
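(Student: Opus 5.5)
The plan is to write the posterior of $g$ given $T_d=d\tau$ explicitly and apply a uniform Laplace-type argument. From the proof of Lemma~\ref{prop:beta}, this posterior has density proportional to
\[
g^{c/2-1}(1+g)^{-d/2}\exp\!\left(-\frac{d\tau}{2(1+g)}\right)
= g^{c/2-1}\exp\!\left(-\frac d2\,\Phi_\tau(g)\right),\quad g>0 .
\]
The factor $e^{-d\Phi_\tau(g)/2}$ therefore carries all the $d$-dependence. Fix $\varepsilon>0$ small, so that $[g_\tau-\varepsilon,g_\tau+\varepsilon]\subset(0,\infty)$ for all $\tau\in K$; this is possible since $g_\tau=\tau-1\ge \min K-1>0$.

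\textbf{Step 1 (uniform separation gap).} I would show that
\[
\eta:=\inf_{\tau\in K}\ \inf_{|g-g_\tau|>\varepsilon}\bigl(\Phi_\tau(g)-\Phi_\tau(g_\tau)\bigr)>0 .
\]
Since $\Phi_\tau'$ has the sign of $g-g_\tau$, $\Phi_\tau$ is decreasing on $(0,g_\tau)$ and increasing on $(g_\tau,\infty)$. So the inner infimum is $\min\{\Phi_\tau(g_\tau-\varepsilon),\Phi_\tau(g_\tau+\varepsilon)\}-\Phi_\tau(g_\tau)$, which is continuous and positive in $\tau$. Compactness of $K$ then gives $\eta>0$. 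The boundary value $\Phi_\tau(0)=\tau$ also lies at least $\eta$ above the minimum, and $\Phi_\tau(g)\ge \log(1+g)\to\infty$ controls the right tail.

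\textbf{Step 2 (numerator versus denominator).} \emph{Lower bound:} the denominator is at least the integral over $|g-g_\tau|\le \delta$ for a small $\delta<\varepsilon$. On that interval $\Phi_\tau\le \Phi_\tau(g_\tau)+\eta/4$, by uniform continuity of $(\tau,g)\mapsto\Phi_\tau(g)$ on a compact set. So the denominator is at least $C\delta\, e^{-d(\Phi_\tau(g_\tau)+\eta/4)/2}$, with $C$ uniform over $K$.

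\emph{Upper bound:} the mass on $|g-g_\tau|>\varepsilon$ is at most $e^{-d(\Phi_\tau(g_\tau)+\eta/2)/2}$ times $\int g^{c/2-1}e^{-(d/2-d_0/2)(\Phi_\tau(g)-\Phi_\tau(g_\tau)-\eta)_+}\cdots$. More simply, I would split off a fixed number $d_0>c$ of factors, writing $e^{-d\Phi/2}=e^{-(d-d_0)\Phi/2}e^{-d_0\Phi/2}$. The tail integral is then bounded by
\[
e^{-(d-d_0)(\Phi_\tau(g_\tau)+\eta)/2}\int_0^\infty g^{c/2-1}(1+g)^{-d_0/2}\,dg ,
\]
which is finite since $d_0>c$ makes the integrand integrable at infinity, and $c>0$ makes it integrable at $0$.

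\emph{Ratio:} the ratio of the tail mass to the total mass is then $O\bigl(e^{-d\eta/4}\bigr)$ uniformly in $\tau\in K$. The only mildly delicate point is this integrability of $g^{c/2-1}$ at $0$ and infinity, handled by the $d_0$ split. That proves the concentration statement.

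\textbf{Step 3 (transfer to $s_{d,c}$).} The map $g\mapsto y=g/(1+g)$ is $1$-Lipschitz, bounded in $(0,1)$, and satisfies $a(\tau)=g_\tau/(1+g_\tau)$. Hence
\[
|s_{d,c}(d\tau)-a(\tau)|\le \E\bigl[|Y-a(\tau)|\mid T_d=d\tau\bigr]\le \varepsilon+\Pi_{d,c}\bigl(|g-g_\tau|>\varepsilon\mid T_d=d\tau\bigr).
\]
Taking the supremum over $K$ and then letting $d\to\infty$ and $\varepsilon\downarrow0$ gives the uniform convergence.

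I expect the main obstacle to be purely the uniformity in Step 1–2: securing a single gap $\eta$ and a single local lower-bound constant valid over all of $K$. Compactness and joint continuity of $\Phi_\tau(g)$ handle this.
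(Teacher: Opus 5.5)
Your proof is correct and follows the same Laplace-type argument as the paper: you lower-bound the normalizing integral on a small neighborhood of $g_\tau$, show the mass outside the $\varepsilon$-neighborhood is exponentially smaller via a uniform gap $\eta$, and transfer to $s_{d,c}$ through the $1$-Lipschitz map $g\mapsto g/(1+g)$. The differences are only technical. You get the gap on the whole complement from the unimodality of $\Phi_\tau$, and you handle integrability at infinity by peeling off $d_0>c$ likelihood factors; the leftover factor $e^{d_0(\Phi_\tau(g_\tau)+\eta)/2}$ is harmless because $\Phi_\tau(g_\tau)=1+\log\tau$ is bounded on $K$. The paper instead gets the gap only on a compact window $[0,M]$ and bounds the $[M,\infty)$ piece separately, using $\Phi_\tau(g)\ge\log(1+g)$ with $\log M$ chosen large.
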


\begin{proof}[\textbf{Proof of Lemma~\ref{lem:densepost}}]
Fix $\varepsilon>0$. Since $K$ is compact and $\tau\mapsto g_\tau=\tau-1$ is continuous,
$g_\tau$ ranges over a compact interval $[m_K,M_K]\subset(0,\infty)$. We can choose a small enough $\delta>0$ so
that 
$I_\tau:=[g_\tau-\delta,g_\tau+\delta]\subset(0,\infty)$ 
for every $\tau\in K$, and $\delta<\varepsilon/2$.

Conditional on $T_d=d\tau$, the posterior density of $g$ is proportional to
\[
g^{c/2-1}\exp\!\left(-\frac d2 \Phi_\tau(g)\right), \quad g>0.
\]
Since $\Phi_\tau$ is continuous and has the unique minimizer $g_\tau$, there exists $\eta=\eta(K,\varepsilon)>0$ s.t. 
\[
\Phi_\tau(g)\ge \Phi_\tau(g_\tau)+3\eta
\]
for all $\tau\in K$ and all $g\in[0,M]\setminus (g_\tau-\varepsilon,g_\tau+\varepsilon)$, where
$M> M_K+\varepsilon$ is chosen below. Also, since $\Phi_\tau(g)\to\infty$ as $g\to\infty$
uniformly over $\tau\in K$, we may choose $M$ large enough that
\begin{equation}\label{eq:Phi_gap}
 \inf_{\tau\in K}\inf_{g\ge M}\Phi_\tau(g)\ge \sup_{\tau\in K}\Phi_\tau(g_\tau)+3\eta.   
\end{equation}
If necessary, we may set $M$ to be even larger  so that
\[
\log M \ge \sup_{\tau\in K}\Phi_\tau(g_\tau)+2\eta.
\]

Let
\[
N_{d,\tau}(\varepsilon)
:=
\int_{|g-g_\tau|>\varepsilon}
g^{c/2-1}\exp\!\left(-\frac d2 \Phi_\tau(g)\right)\,dg
\]
and
\[
D_{d,\tau}
:=
\int_0^\infty
g^{c/2-1}\exp\!\left(-\frac d2 \Phi_\tau(g)\right)\,dg.
\]

For $D_{d,\tau}$, continuity of $g^{c/2-1}$ on the compact union of the intervals $I_\tau$
implies a uniform lower bound $g^{c/2-1}\ge b_K>0$ there. Also, by continuity of $\Phi_\tau$,
\[
\Phi_\tau(g)\le \Phi_\tau(g_\tau)+\eta
\quad
\text{for } g\in I_\tau,\ \tau\in K
\]
after shrinking $\delta$ if necessary. Hence
\[
D_{d,\tau}
\ge
\int_{I_\tau}
g^{c/2-1}\exp\!\left(-\frac d2 \Phi_\tau(g)\right)\,dg
\ge
2\delta\, b_K \exp\!\left(-\frac d2(\Phi_\tau(g_\tau)+\eta)\right).
\]

For $N_{d,\tau}$, note that
\[
N_{d,\tau}(\varepsilon)\le N_{d,\tau}^{(1)}(\varepsilon)+N_{d,\tau}^{(2)}(\varepsilon),
\]
where the first term integrates over $[0,M]$ and the second over $[M,\infty)$. On $[0,M]$,
$g^{c/2-1}$ is integrable and the uniform gap \eqref{eq:Phi_gap} gives
\[
N_{d,\tau}^{(1)}(\varepsilon)
\le
\left(\int_0^M g^{c/2-1}\,dg\right)
\exp\!\left(-\frac d2(\Phi_\tau(g_\tau)+3\eta)\right).
\]
On $[M,\infty)$, use $\Phi_\tau(g)\ge \log(1+g)$ to get
\[
N_{d,\tau}^{(2)}(\varepsilon)
\le
\int_M^\infty g^{c/2-1}(1+g)^{-d/2}\,dg
\le
\int_M^\infty g^{c/2-1-d/2}\,dg
=
\frac{M^{c/2-d/2}}{d/2-c/2},
\]
which is valid for $d>c$.

We now divide $N_{d,\tau}^{(1)}$ and $N_{d,\tau}^{(2)}$  by the $D_{d,r}$ lower bound.  
For the first piece, write $C_K:=\int_0^M g^{c/2-1}\,dg<\infty$, and obtain
\[
\frac{N_{d,\tau}^{(1)}(\varepsilon)}{D_{d,\tau}}
\le
\frac{C_K}{2\delta\,b_K}\,
\exp\!\left(-\frac d2\bigl[(\Phi_\tau(g_\tau)+3\eta)-(\Phi_\tau(g_\tau)+\eta)\bigr]\right)
=
\frac{C_K}{2\delta\,b_K}\,e^{-d\eta}.
\]
For the second piece, using $M^{(c-d)/2}=M^{c/2}\exp\!\left(-\tfrac d2\log M\right)$,
\[
\frac{N_{d,\tau}^{(2)}(\varepsilon)}{D_{d,\tau}}
\le
\frac{M^{c/2}}{(d-c)\,\delta\,b_K}\,
\exp\!\left(-\frac d2\bigl[\log M-\Phi_\tau(g_\tau)-\eta\bigr]\right).
\]
By the choice $\log M\ge\sup_{\tau\in K}\Phi_\tau(g_\tau)+2\eta$, the bracket satisfies
$\log M-\Phi_\tau(g_\tau)-\eta\ge\eta$ for every $\tau\in K$, so this ratio is at most
$\dfrac{M^{c/2}}{(d-c)\,\delta\,b_K}\,e^{-d\eta/2}$.

In both displayed inequalities above the constants $C_K,\delta,b_K,\eta,M$ depend only on $K$ and $\varepsilon$, not on
$\tau$. Adding the two bounds therefore gives
\[
\sup_{\tau\in K}
\Pi_{d,c}\!\left(
\left.
\left|g-g_\tau\right|>\varepsilon
\;\right|\;
T_d=d\tau
\right)
\le
\frac{C_K}{2\delta\,b_K}\,e^{-d\eta}
+
\frac{M^{c/2}}{(d-c)\,\delta\,b_K}\,e^{-d\eta/2}
\;\to\;0
\]
as $d\to\infty$. 

For the posterior mean, note that $\psi(g):=g/(1+g)$ is bounded by $1$ and Lipschitz with constant
at most $1$. Therefore
\[
\left|s_{d,c}(d\tau)-a(\tau)\right|
=
\left|
\E\!\left[\left.\psi(g)-\psi(g_\tau)\right|T_d=d\tau\right]
\right|
\le
\varepsilon
+
\Pi_{d,c}\!\left(
\left.
\left|g-g_\tau\right|>\varepsilon
\;\right|\;
T_d=d\tau
\right).
\]
Taking the supremum over $\tau\in K$ and then letting $d\to\infty$ proves the second claim.
\end{proof}

Lemma~\ref{lem:densepost} is the strong-signal analogue of the local near-zero approximation in the critical regime: in the strong regime, $g$ is no longer pinned near $0$, and the posterior mean $\E[Y\mid T_d=d\tau]$ behaves like the ridge shrinkage coefficient $a(\tau)=1-1/\tau$.

To turn this conditional approximation into an unconditional risk statement, we also need to know where $\tau_d=T_d/d$ typically lies under the model $X_d\sim N_d(\theta_d,I_d)$. The next lemma is a law of large numbers for the macroscopic observables: it identifies the deterministic normalized value $\tau_0=1+\rho$ around which the Laplace point $g_{\tau_d}$ fluctuates.
\begin{lemma}[Macroscopic observation limits]
\label{lem:densestat}
If $\norm{\theta_d}^2/d\to \rho>0$, then
\[
\frac{T_d}{d}\to 1+\rho
\quad\text{and}\quad 
\frac{\theta_d^\top X_d}{d}\to \rho \text{ in } L^1.
\]

\end{lemma}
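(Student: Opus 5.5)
The plan is to write $X_d=\theta_d+\varepsilon_d$ with $\varepsilon_d\sim N_d(0,I_d)$ and decompose both observables into a deterministic part plus centered Gaussian-chaos pieces. Then compute second moments exactly. Convergence in $L^2$ follows, and $L^2$ convergence implies the stated $L^1$ convergence (and hence convergence in probability for $T_d/d$).

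For the quadratic statistic we use, as in the proof of Lemma~\ref{lem:Zd},
\[
\frac{T_d}{d}=\frac{\norm{\varepsilon_d}^2}{d}+\frac{2\theta_d^\top\varepsilon_d}{d}+\frac{\norm{\theta_d}^2}{d}.
\]
The pieces behave as follows.
\begin{itemize}
\item The last term converges to $\rho$ by hypothesis.
\item The first term satisfies $\E[(\norm{\varepsilon_d}^2/d-1)^2]=\Var(\chi^2_d)/d^2=2/d\to0$.
\item The middle term is Gaussian with mean zero and variance $4\norm{\theta_d}^2/d^2=4(\rho+o(1))/d\to0$.
\end{itemize}
By Minkowski's inequality,
\[
\E\bigl|T_d/d-(1+\rho)\bigr|^2\to0 .
\]
This gives convergence in $L^2$, hence in $L^1$ and in probability. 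This is the sense in which $T_d/d\to1+\rho$ will be used later, together with the tightness needed to localize $\tau_d$ in a compact $K\subset(1,\infty)$ for Lemma~\ref{lem:densepost}.

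For the cross statistic,
\[
\frac{\theta_d^\top X_d}{d}=\frac{\norm{\theta_d}^2}{d}+\frac{\theta_d^\top\varepsilon_d}{d}.
\]
The first term tends to $\rho$. The second is $N(0,\norm{\theta_d}^2/d^2)$, whose variance is $O(1/d)$. Therefore
\[
\E\bigl|\theta_d^\top X_d/d-\rho\bigr|\le\bigl|\norm{\theta_d}^2/d-\rho\bigr|+\bigl(\norm{\theta_d}^2/d^2\bigr)^{1/2}\to0 .
\]

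There is no real obstacle, since every moment is explicit. The only care needed is that the variance bounds use $\sup_d\norm{\theta_d}^2/d<\infty$, which follows from convergence of $\norm{\theta_d}^2/d$. It may also be worth recording the quantitative rate $O_{\Pbb}(d^{-1/2})$ for both deviations, since the second-order expansion in Theorem~\ref{thm:dense2} will need it.
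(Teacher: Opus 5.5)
Your proposal is correct and is essentially the paper's argument. The paper also writes $X_d=\theta_d+\varepsilon_d$, checks that the means $1+\norm{\theta_d}^2/d$ and $\norm{\theta_d}^2/d$ converge, and bounds the variances, $(2d+4\norm{\theta_d}^2)/d^2$ and $\norm{\theta_d}^2/d^2$, by $O(d^{-1})$; this gives $L^2$ and hence $L^1$ convergence of both statistics. Your Minkowski step applied to the three pieces is the same computation written term by term.
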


\begin{proof}[\textbf{Proof of Lemma~\ref{lem:densestat}}]
Write $\nu_d=\norm{\theta_d}^2$. Then
\[
\E\!\left[\frac{T_d}{d}\right]=1+\frac{\nu_d}{d}\to 1+\rho,
\quad
\Var\!\left(\frac{T_d}{d}\right)=\frac{2d+4\nu_d}{d^2}\to 0.
\]
Hence $T_d/d\to 1+\rho$ in $L^2$, and therefore in $L^1$.

Also
\[
\theta_d^\top X_d = \nu_d + \theta_d^\top \varepsilon_d,
\quad
\varepsilon_d\sim N_d(0,I_d),
\]
so
\[
\E\!\left[\frac{\theta_d^\top X_d}{d}\right]=\frac{\nu_d}{d}\to \rho,
\quad
\Var\!\left(\frac{\theta_d^\top X_d}{d}\right)=\frac{\nu_d}{d^2}\to 0.
\]
Again this yields $L^2$, hence $L^1$, convergence.
\end{proof}

Theorem~\ref{thm:strongfirst} follows by combining Lemma~\ref{lem:riskid} with Lemma~\ref{lem:densepost}--\ref{lem:densestat}: on the event $\{T_d/d\in K\}$, the shrinkage factor $s_{d,c}(T_d)$ is close to $a(T_d/d)$, and $T_d/d$ concentrates near $\tau_0=1+\rho$, so the Bayes rule behaves like constant ridge shrinkage with coefficient $a(\tau_0)=\rho/(1+\rho)$.

For the refined second-order expansion, concentration is not sufficient: we need a \emph{one-step} Laplace expansion of $s_{d,c}(d\tau)$ that controls the $1/d$ correction uniformly over compact $\tau$-sets. This expansion is the strong-signal counterpart of the local near-zero limit (Lemma~\ref{lem:boundary}) and is the main input for the $O(1)$ term in Theorem~\ref{thm:dense2}.
\begin{lemma}[Uniform one-step Laplace expansion]
\label{lem:denseLaplace}
Let $K\subset(1,\infty)$ be compact. Then
\[
\sup_{\tau\in K}
d\left|
s_{d,c}(d\tau)-a(\tau)-\frac{b_c(\tau)}{d}
\right|
\to 0,
\quad
b_c(\tau):=\frac{2}{\tau}+\frac{c-2}{\tau-1}.
\]
\end{lemma}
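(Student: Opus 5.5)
The plan is to run a second-order Laplace expansion directly in the shrinkage variable $Y$, not in $g$. The reason is that $\psi(y)=y$ is then linear, so only two correction terms appear. By Lemma~\ref{prop:beta}, conditional on $T_d=d\tau$ the posterior density of $Y$ is proportional to $\exp\{-d\,\varphi_\tau(y)+w(y)\}$ on $(0,1)$, where
\[
\varphi_\tau(y):=-\tfrac12\log(1-y)-\tfrac{\tau}{2}y,
\qquad
w(y):=\left(\tfrac c2-1\right)\log y-\left(\tfrac c2+1\right)\log(1-y).
\]
The leading part $\varphi_\tau$ has the unique critical point $y_\tau=1-1/\tau=a(\tau)$, with
\[
\varphi_\tau''(y_\tau)=\tau^2/2,
\qquad
\varphi_\tau'''(y_\tau)=\tau^3 .
\]
The standard one-step Laplace formula for the mean gives
\[
\E[Y\mid T_d=d\tau]=y_\tau+\frac1d\left(\frac{w'(y_\tau)}{\varphi_\tau''(y_\tau)}-\frac{\varphi_\tau'''(y_\tau)}{2\varphi_\tau''(y_\tau)^2}\right)+o(d^{-1}).
\]
Substituting $w'(y_\tau)=(c/2-1)\tau/(\tau-1)+(c/2+1)\tau$ yields
\[
\frac{c-2}{\tau(\tau-1)}+\frac{c+2}{\tau}-\frac2\tau=\frac2\tau+\frac{c-2}{\tau-1}=b_c(\tau),
\]
which confirms that the target constant is consistent. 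The work is to make this rigorous with an $o(d^{-1})$ remainder that is uniform over $\tau\in K$.

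\textbf{Step 1 (localization).} Fix a small $\delta>0$ such that $[y_\tau-\delta,y_\tau+\delta]\subset(0,1)$ for all $\tau\in K$, which is possible because $y_\tau$ ranges over a compact subset of $(0,1)$. Exactly as in the proof of Lemma~\ref{lem:densepost}, continuity and the unique minimizer give a uniform gap: $\varphi_\tau(y)\ge\varphi_\tau(y_\tau)+\eta$ off that window. The factor $e^{w}=y^{c/2-1}(1-y)^{-(c/2+1)}$ is integrable near $y=0$. Near $y=1$ it is absorbed by $(1-y)^{d/2}$ once $d>c+2$, since the Lemma~\ref{prop:beta} exponent is $(d-c)/2-1$. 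Hence the contributions of the numerator and denominator outside the window are $O(e^{-d\eta/2})$ relative to the denominator lower bound. Because $|Y|\le 1$, these pieces contribute $o(d^{-1})$ to the mean, uniformly on $K$.

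\textbf{Step 2 (local expansion).} Inside the window, substitute $y=y_\tau+v/\sqrt{d\,\varphi''_\tau}$ with $\varphi''_\tau:=\varphi_\tau''(y_\tau)$. Taylor-expand $d[\varphi_\tau(y)-\varphi_\tau(y_\tau)]$ to fourth order and $w$ to second order. This gives
\[
e^{-v^2/2}\Big(1-\tfrac{\kappa_3 v^3}{6\sqrt d}+\tfrac{w' v}{\sqrt{d\varphi''_\tau}}+R_d(v,\tau)\Big),
\]
where $\kappa_3=\varphi'''_\tau/(\varphi''_\tau)^{3/2}$ and $R_d=O(d^{-1}(1+|v|)^6)$. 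The constants in $R_d$ are uniform in $\tau\in K$ and $|v|\le\delta\sqrt{d}$, because all derivatives of $\varphi_\tau$ and $w$ are jointly continuous on the compact set $\{(\tau,y):\tau\in K,|y-y_\tau|\le\delta\}$.

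\textbf{Step 3 (ratio).} Expand numerator ($\int y\cdots$) and denominator ($\int\cdots$) using the Gaussian moments $\E v=\E v^3=0$, $\E v^4=3$. In the ratio the $O(d^{-1/2})$ odd terms vanish in the denominator. In the numerator they produce exactly $d^{-1}\big(w'/\varphi''_\tau-\varphi'''_\tau/(2(\varphi''_\tau)^2)\big)$, which is the $b_c(\tau)/d$ computed above.

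The step I expect to be the main obstacle is controlling the remainder to order $o(d^{-1})$ rather than $O(d^{-1})$, uniformly. I would expand to one more order and bound the $d^{-3/2}$ terms. To do this I would use the elementary inequality $|e^{x}-1-x-x^2/2|\le|x|^3e^{|x|}$. I would also use a uniform quadratic lower bound $d[\varphi_\tau(y)-\varphi_\tau(y_\tau)]\ge v^2/4$ on the window, obtained by shrinking $\delta$ so that $\varphi_\tau''\ge\varphi''_\tau/2$ there. Together these give a dominating Gaussian envelope, and the error becomes $O(d^{-3/2})$ uniformly on $K$, which is more than enough.
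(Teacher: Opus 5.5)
Your proposal is correct. It uses the same method as the paper, a one-step Laplace expansion of a ratio of integrals, but carries it out in a different coordinate.

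The paper expands in the variance $g$. Its integrand involves the nonlinear target $\psi(g)=g/(1+g)$, the prior factor $A(g)=g^{c/2-1}$ and the phase $\phi_\tau=\tfrac12\Phi_\tau$. It takes localization from the proof of Lemma~\ref{lem:densepost} and truncates at $|u|\le L\sqrt{\log d}$. It obtains $b_c$ as the three-term combination $\psi_\tau''/(2\phi_{\tau,2})+(A_\tau'/A_\tau)\psi_\tau'/\phi_{\tau,2}-\phi_{\tau,3}\psi_\tau'/(2\phi_{\tau,2}^2)=-2/\tau+(c-2)/(\tau-1)+4/\tau$.

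You expand in $y=g/(1+g)$ directly from Lemma~\ref{prop:beta}. The target is then linear and the Jacobian is absorbed into $w$, so only the two terms $(c-2)/(\tau(\tau-1))+c/\tau$ appear. Since $\varphi_\tau(y)=\tfrac12\Phi_\tau(g)-\tau/2$, the agreement of the two computations is a genuine cross-check.

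What your coordinate buys:
\begin{itemize}
\item $\varphi_\tau''(y)=\tfrac12(1-y)^{-2}\ge\tfrac12$ on all of $(0,1)$, which gives a global, $\tau$-free quadratic gap. In $g$, $\Phi_\tau$ is not convex, which is why the paper needs a compactness argument and a separate bound for large $g$.
\item All derivatives of $\varphi_\tau$ of order at least two, and $w$ itself, are independent of $\tau$. So uniformity over $K$ reduces to $y_\tau$ ranging over a compact subset of $(0,1)$.
\end{itemize}
What the paper's version buys: a general ratio formula valid for any smooth $\psi$, stated in the same coordinate as Lemma~\ref{lem:densepost}.

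One small repair is needed in the step you flagged as the main obstacle. Write the local integrand as $e^{-v^2/2+x}$.
\begin{itemize}
\item Your claim in Step 2 that $R_d=O(d^{-1}(1+|v|)^6)$ cannot hold uniformly on $|v|\le\delta\sqrt d$. For $v<0$ near the edge of the window, $x$ is of order $d\delta^3$, so $e^x$ is exponentially large there. What you actually need is a bound on $e^{-v^2/2}R_d$.
\item For the same reason, the inequality $|e^x-1-x-x^2/2|\le|x|^3e^{|x|}$, combined only with the lower bound $d[\varphi_\tau(y)-\varphi_\tau(y_\tau)]\ge v^2/4$, does not give an integrable envelope when $x<0$.
\item Use instead the Lagrange form $|e^x-1-x-x^2/2|\le|x|^3e^{\max(x,0)}/6$. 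Then $e^{-v^2/2}e^{\max(x,0)}\le e^{-v^2/2}+e^{-d[\varphi_\tau(y)-\varphi_\tau(y_\tau)]+w(y)-w(y_\tau)}\le e^{-v^2/2}+Ce^{-v^2/4}$ on the window.
\item Alternatively, shrink $\delta$ until $\varphi_\tau''$ stays within a factor $1\pm\epsilon$ of $\varphi''_\tau$ there.
\end{itemize}
With either fix, $|x|^3\le C(1+|v|)^9d^{-3/2}$ integrates against this envelope, and your $O(d^{-3/2})$ remainder follows uniformly on $K$.
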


\begin{proof}[\textbf{Proof of Lemma~\ref{lem:denseLaplace}}]
Fix compact $K\subset(1,\infty)$. Write
\[
\psi(g):=\frac{g}{1+g},
\quad
A(g):=g^{c/2-1},
\quad
\phi_\tau(g):=\frac12\Phi_\tau(g)=\frac12\log(1+g)+\frac{\tau}{2(1+g)}.
\]
Then
\[
s_{d,c}(d\tau)
=
\frac{N_{d,\tau}}{D_{d,\tau}},
\quad
N_{d,\tau}:=\int_0^\infty \psi(g)A(g)e^{-d\phi_\tau(g)}\,dg,
\quad
D_{d,\tau}:=\int_0^\infty A(g)e^{-d\phi_\tau(g)}\,dg.
\]
As in the proof of Lemma~\ref{lem:densepost}, $\phi_\tau$ has the unique minimizer
$g_\tau=\tau-1$, and there exist $\eta,\omega_0>0$ such that, uniformly over $\tau\in K$, the
contributions to $N_{d,\tau}$ and $D_{d,\tau}$ from $|g-g_\tau|>\eta$ are
$O(e^{-d\omega_0})$ relative to the main Laplace factor.

Hence it suffices to integrate over $|g-g_\tau|\le \eta$ and set
\[
g=g_\tau+\frac{u}{\sqrt d}.
\]
Write $\phi_{\tau,j}:=\phi_\tau^{(j)}(g_\tau)$ and similarly let $A_\tau$, $A_\tau'$, $\psi_\tau$,
$\psi_\tau'$, and $\psi_\tau''$ denote derivatives evaluated at $g_\tau$.

Let $M_d=L\sqrt{\log d}$, with $L>0$ chosen below. After decreasing $\eta$ if necessary,
compactness of $K$ gives constants $a,C>0$ such that, uniformly in $\tau\in K$,
\[
\phi_\tau(g)-\phi_\tau(g_\tau)\ge a(g-g_\tau)^2,
\quad |g-g_\tau|\le\eta,
\]
and the derivatives appearing below are uniformly bounded on this neighborhood. Hence, for every
fixed $m$,
\[
\int_{M_d<|u|\le\eta\sqrt d}(1+|u|^m)
\exp\!\left[-d\left\{\phi_\tau\!\left(g_\tau+\frac{u}{\sqrt d}\right)
-\phi_\tau(g_\tau)\right\}\right]du
\le C\int_{|u|>M_d}(1+|u|^m)e^{-a u^2}\,du
=o(d^{-1})
\]
uniformly in $\tau\in K$ once $L$ is sufficiently large. On $|u|\le M_d$, we have,
uniformly in $\tau\in K$,
\begin{align*}
d\left\{\phi_\tau\!\left(g_\tau+\frac{u}{\sqrt d}\right)-\phi_\tau(g_\tau)\right\}
&=
\frac{\phi_{\tau,2}u^2}{2}
+\frac{\phi_{\tau,3}u^3}{6\sqrt d}
+\frac{\phi_{\tau,4}u^4}{24d}
+O_K\!\left(\frac{|u|^5}{d^{3/2}}\right),\\
A\!\left(g_\tau+\frac{u}{\sqrt d}\right)
&=
A_\tau+\frac{A_\tau' u}{\sqrt d}
+\frac{A_\tau''u^2}{2d}
+O_K\!\left(\frac{|u|^3}{d^{3/2}}\right),\\
\psi\!\left(g_\tau+\frac{u}{\sqrt d}\right)
&=
\psi_\tau+\frac{\psi_\tau' u}{\sqrt d}
+\frac{\psi_\tau''u^2}{2d}
+O_K\!\left(\frac{|u|^3}{d^{3/2}}\right).
\end{align*}
Since $M_d=O(\sqrt{\log d})$, the first remainder is
$O_K((\log d)^{5/2}d^{-3/2})=o_K(d^{-1})$ and the remaining two remainders are
$O_K((\log d)^{3/2}d^{-3/2})=o_K(d^{-1})$. The quadratic lower bound gives a uniform Gaussian
envelope, so these expansions may be integrated term by term.

Multiplying the expansions and integrating wrt $e^{-\phi_{\tau,2}u^2/2}$ yields
\[
\begin{aligned}
N_{d,\tau}
&=
C_{d,\tau}
\left[
A_\tau\psi_\tau M_{0,\tau}
+
\frac{\Gamma_{N}(\tau)}{d}
+
o_K(d^{-1})
\right],\\
D_{d,\tau}
&=
C_{d,\tau}
\left[
A_\tau M_{0,\tau}
+
\frac{\Gamma_{D}(\tau)}{d}
+
o_K(d^{-1})
\right],
\end{aligned}
\]
uniformly over $\tau\in K$, where
\[
C_{d,\tau}:=d^{-1/2}e^{-d\phi_\tau(g_\tau)}
\]
and
\[
M_{k,\tau}:=\int_{\mathbb R}u^k e^{-\phi_{\tau,2}u^2/2}\,du.
\]
Expanding the quotient first gives
\[
\frac{N_{d,\tau}}{D_{d,\tau}}
=
\psi_\tau
+
\frac{1}{d}
\frac{\Gamma_N(\tau)-\psi_\tau\Gamma_D(\tau)}{A_\tau M_{0,\tau}}
+o_K(d^{-1}).
\]
In the difference $\Gamma_N-\psi_\tau\Gamma_D$, the only  surviving terms are
\[
\Gamma_N(\tau)-\psi_\tau\Gamma_D(\tau)
=
\left(A_\tau'\psi_\tau'+\frac{A_\tau\psi_\tau''}{2}\right)M_{2,\tau}
-
\frac{A_\tau\psi_\tau'\phi_{\tau,3}}{6}M_{4,\tau}.
\]
Using the Gaussian moments $M_{2,\tau}/M_{0,\tau}=1/\phi_{\tau,2}$ and
$M_{4,\tau}/M_{0,\tau}=3/\phi_{\tau,2}^2$, we deduce  the one-step Laplace ratio formula
\[
s_{d,c}(d\tau)
=
\psi_\tau
+
\frac{1}{d}
\left\{
\frac{\psi_\tau''}{2\phi_{\tau,2}}
+
\frac{A_\tau'}{A_\tau}\frac{\psi_\tau'}{\phi_{\tau,2}}
-
\frac{\phi_{\tau,3}\psi_\tau'}{2\phi_{\tau,2}^2}
\right\}
+
o_K(d^{-1}).
\]
Now
\[
\psi_\tau=a(\tau)=1-\frac1\tau,
\quad
\psi_\tau'=\frac1{\tau^2},
\quad
\psi_\tau''=-\frac{2}{\tau^3},
\]
while
\[
\phi_{\tau,2}=\frac{1}{2\tau^2},
\quad
\phi_{\tau,3}=-\frac{2}{\tau^3},
\quad
\frac{A_\tau'}{A_\tau}=\frac{c/2-1}{\tau-1}.
\]
Substituting these derivatives gives
\[
\frac{\psi_\tau''}{2\phi_{\tau,2}}
+
\frac{A_\tau'}{A_\tau}\frac{\psi_\tau'}{\phi_{\tau,2}}
-
\frac{\phi_{\tau,3}\psi_\tau'}{2\phi_{\tau,2}^2}
=
-\frac{2}{\tau}
+
\frac{c-2}{\tau-1}
+
\frac{4}{\tau}
=
\frac{2}{\tau}+\frac{c-2}{\tau-1},
\]
which is exactly $b_c(\tau)$.
\end{proof}

Lemma~\ref{lem:denseLaplace} isolates the only place where the prior exponent $c$ enters at order $1/d$, via the explicit term $(c-2)/(\tau-1)$. When combined with the $+2c$ term in Lemma~\ref{lem:riskid}, this structure makes the strong-signal $O(1)$ risk correction tractable.

We now present the main proofs of Theorems~\ref{thm:strongfirst} and \ref{thm:dense2}. 

\begin{proof}[\textbf{Proof of Theorem~\ref{thm:strongfirst}}]
Set $\tau:=1+\rho$ and $a_\rho:=\rho/(1+\rho)=a(\tau)$. Let $K\subset (1,\infty)$ be a compact interval
 containing $\tau$. By Lemma~\ref{lem:densestat}, 
$T_d/d\to \tau$ 
in probability. Therefore
\[
\Pbb\!\left(\frac{T_d}{d}\notin K\right)\to 0.
\]
On the event $\{T_d/d\in K\}$,
\[
\left|s_{d,c}(T_d)-a\!\left(\frac{T_d}{d}\right)\right|
\le
\sup_{u\in K}\left|s_{d,c}(du)-a(u)\right|,
\]
which converges to $0$ by Lemma~\ref{lem:densepost}. Since $a(u)=1-1/u$ is continuous,
$a(T_d/d)\to a_\rho$ in probability, so 
$$s_{d,c}(T_d)\to a_\rho$$ 
in probability. Since $0\le s_{d,c}(T_d)\le 1$, this convergence also holds in $L^1$ for both
$s_{d,c}(T_d)$ and $s_{d,c}(T_d)^2$.

Now divide the exact risk decomposition in Lemma \ref{lem:riskid} by $d$:
\[
\frac{\risk(\theta_d,\delta_{d,c})}{d}
=
\E\!\left[s_{d,c}(T_d)^2\frac{T_d}{d}\right]
-2\E\!\left[s_{d,c}(T_d)\frac{\theta_d^\top X_d}{d}\right]
+
\frac{\norm{\theta_d}^2}{d}.
\]
For the first term,
\[
\E\!\left|s_{d,c}(T_d)^2\frac{T_d}{d}-a_\rho^2\tau\right|
\le
\E\!\left|\frac{T_d}{d}-\tau\right|
+
\tau\,\E\!\left|s_{d,c}(T_d)^2-a_\rho^2\right|
\to 0,
\]
using $0\le s_{d,c}(T_d)^2\le 1$ and Lemma~\ref{lem:densestat}. Hence
\[
\E\!\left[s_{d,c}(T_d)^2\frac{T_d}{d}\right]\to a_\rho^2\tau.
\]
Similarly,
\[
\E\!\left|s_{d,c}(T_d)\frac{\theta_d^\top X_d}{d}-a_\rho \rho\right|
\le
\E\!\left|\frac{\theta_d^\top X_d}{d}-\rho\right|
+
\rho\,\E\!\left|s_{d,c}(T_d)-a_\rho\right|
\to 0,
\]
so
\[
\E\!\left[s_{d,c}(T_d)\frac{\theta_d^\top X_d}{d}\right]\to a_\rho\rho.
\]
Finally, $\norm{\theta_d}^2/d\to \rho$ by assumption. Therefore
\[
\frac{\risk(\theta_d,\delta_{d,c})}{d}
\to
a_\rho^2(1+\rho)-2a_\rho\rho+\rho
=
\frac{\rho}{1+\rho},
\]
which is exactly the constant-shrinkage risk with $a_\rho=\rho/(1+\rho)$.
\end{proof}

\begin{proof}[\textbf{Proof of Theorem~\ref{thm:dense2}}]
Write
\[
\nu_d=\norm{\theta_d}^2=\rho d+\kappa+o(1),
\quad
\tau_0:=1+\rho,
\quad
\tau_d:=\frac{T_d}{d}.
\]
By Lemma~\ref{lem:riskid},
\[
\risk(\theta_d,\delta_{d,c})
=
\nu_d+2c-\E\!\left[T_d s_{d,c}(T_d)^2\right].
\]
Choose a compact interval $K=[\tau_-,\tau_+]\subset(1,\infty)$ with $\tau_0\in(\tau_-,\tau_+)$. Since
\[
T_d-(d+\nu_d)=\sum_{j=1}^d
\left\{\varepsilon_{j,d}^2-1+2\theta_{j,d}\varepsilon_{j,d}\right\},
\quad \varepsilon_d\sim N_d(0,I_d),
\]
is a sum of independent centered variables and \(\nu_d=O(d)\), Rosenthal's inequality gives, for every fixed integer \(m\ge 1\),
\[
\E\!\left|T_d-(d+\nu_d)\right|^{2m}=O(d^m).
\]
Indeed, if \(Z_{j,d}:=\varepsilon_{j,d}^2-1+2\theta_{j,d}\varepsilon_{j,d}\), then
\[
\sum_{j=1}^d \E Z_{j,d}^2=2d+4\nu_d=O(d),
\quad
\sum_{j=1}^d \E |Z_{j,d}|^{2m}
\le C_m\left(d+\sum_{j=1}^d |\theta_{j,d}|^{2m}\right)
\le C_m(d+\nu_d^m)=O(d^m).
\]
Thus, with \(\mu_d:=\E\tau_d=1+\nu_d/d=\tau_0+\kappa/d+o(d^{-1})\),
\[
\E\!\left|\tau_d-\mu_d\right|^{2m}=O(d^{-m})
\]
Because \(\mu_d\to\tau_0\) and \(\tau_0\in(\tau_-,\tau_+)\), there is \(\eta>0\) such that, for all large \(d\),
\[
\{\tau_d\notin K\}\subseteq \{|\tau_d-\mu_d|\ge \eta\}.
\]
Taking \(m=2\) in the preceding moment bound gives
\[
\Pbb(\tau_d\notin K)=O(d^{-2}),
\quad
\E\!\left[|\tau_d-\mu_d|\mathbf 1_{\{\tau_d\notin K\}}\right]=O(d^{-2}),
\]
where the second bound uses
\(|x|\mathbf 1_{\{|x|\ge\eta\}}\le \eta^{-3}|x|^4\). Since \(\mu_d\) is bounded,
\[
\E\!\left[\tau_d\mathbf 1_{\{\tau_d\notin K\}}\right]
\le
\mu_d\Pbb(\tau_d\notin K)
+
\E\!\left[|\tau_d-\mu_d|\mathbf 1_{\{\tau_d\notin K\}}\right]
=O(d^{-2}).
\]
Consequently, since \(0\le s_{d,c}\le 1\) and \(T_d=d\tau_d\),
\[
\E\!\left[T_d s_{d,c}(T_d)^2\mathbf 1_{\{\tau_d\notin K\}}\right]
\le
d\,\E\!\left[\tau_d\mathbf 1_{\{\tau_d\notin K\}}\right]
=o(1).
\]
Therefore
\[
\E[T_d s_{d,c}(T_d)^2]
=
\E\!\left[T_d s_{d,c}(T_d)^2\mathbf 1_{\{\tau_d\in K\}}\right]
+o(1).
\]

Set
\[
f(\tau):=\tau a(\tau)^2=\tau-2+\tau^{-1},
\quad
g_c(\tau):=2\tau a(\tau)b_c(\tau).
\]
For the tail estimates below, use the algebraically equivalent extension
\[
g_c(\tau)=2(c-2)+\frac{4(\tau-1)}{\tau},
\quad \tau>0,
\]
which agrees with \(2\tau a(\tau)b_c(\tau)\) on \((1,\infty)\).
On $\{\tau_d\in K\}$, Lemma~\ref{lem:denseLaplace} implies
\[
T_d s_{d,c}(T_d)^2
=
d\,f(\tau_d)+g_c(\tau_d)+o(1)
\]
uniformly in probability and hence in $L^1$. Consequently,
\[
\E[T_d s_{d,c}(T_d)^2]
=
d\,\E\!\left[f(\tau_d)\mathbf 1_{\{\tau_d\in K\}}\right]
+
\E\!\left[g_c(\tau_d)\mathbf 1_{\{\tau_d\in K\}}\right]
+
o(1).
\]

We next remove the indicators in the two deterministic expansion terms. Write
\[
A_d^+:=\{\tau_d>\tau_+\},
\quad
A_d^-:=\{\tau_d<\tau_-\}.
\]
On \(A_d^+\), we have \(\tau_d>1\), and hence
\[
0\le f(\tau_d)=\frac{(\tau_d-1)^2}{\tau_d}\le \tau_d,
\quad
|g_c(\tau_d)|\le C_{c,K}.
\]
The preceding moment argument gives
\[
d\,\E\!\left[f(\tau_d)\mathbf 1_{A_d^+}\right]
\le d\,\E\!\left[\tau_d\mathbf 1_{A_d^+}\right]=o(1),
\quad
\E\!\left[|g_c(\tau_d)|\mathbf 1_{A_d^+}\right]=o(1).
\]
It remains only to control the inverse powers on the lower tail \(A_d^-\). On \((0,\tau_-)\),
\[
f(\tau)\le C_K+\tau^{-1},
\quad
|g_c(\tau)|\le C_{c,K}(1+\tau^{-1}).
\]
Let \(T_d\sim\chi^2_d(\nu_d)\). We claim that
\[
\E\!\left[\tau_d^{-1}\mathbf 1_{A_d^-}\right]
=O(e^{-\eta_0 d})
\]
for some \(\eta_0>0\). To prove this, use the Poisson-mixture representation
\(T_d\mid N\sim\chi^2_{d+2N}\), \(N\sim\mathrm{Poisson}(\nu_d/2)\). If \(W_m\sim\chi^2_m\), then, for \(m>2\),
\[
\E\!\left[W_m^{-1}\mathbf 1_{\{W_m\le x\}}\right]
=\frac{1}{m-2}\Pbb(W_{m-2}\le x).
\]
Therefore, for \(x>0\) and \(d>2\),
\[
\E\!\left[T_d^{-1}\mathbf 1_{\{T_d\le x\}}\right]
\le
\frac{1}{d-2}\Pbb(V_d\le x),
\quad
V_d\sim\chi^2_{d-2}(\nu_d).
\]
For any fixed \(u>0\), Chernoff's bound and the Laplace transform of a noncentral chi-square variable give
\[
\Pbb(V_d\le d\tau_-)
\le
\exp\!\left\{
u d\tau_-
-\frac{d-2}{2}\log(1+2u)
-\frac{\nu_d u}{1+2u}
\right\}.
\]
Because \(\tau_-<\tau_0=1+\rho\) and \(\nu_d/d\to\rho\), the exponent divided by \(d\) has derivative
\(\tau_--(1+\rho)<0\) at \(u=0\). Hence we may choose \(u>0\) small enough that, for all large \(d\),
\[
\Pbb(V_d\le d\tau_-)
\le e^{-\eta_0 d}
\]
for some \(\eta_0>0\). Taking \(x=d\tau_-\) in the inverse-moment bound yields
\[
\E\!\left[\tau_d^{-1}\mathbf 1_{A_d^-}\right]
=d\,\E\!\left[T_d^{-1}\mathbf 1_{\{T_d\le d\tau_-\}}\right]
\le \frac{d}{d-2}e^{-\eta_0d}
=O(e^{-\eta_0d}).
\]
Combining this inverse-tail estimate with \(\Pbb(A_d^-)=O(d^{-2})\), we obtain
\[
d\,\E\!\left[|f(\tau_d)|\mathbf 1_{A_d^-}\right]=o(1),
\quad
\E\!\left[|g_c(\tau_d)|\mathbf 1_{A_d^-}\right]=o(1).
\]
Together with the upper-tail bounds, this proves
\[
d\,\E\!\left[|f(\tau_d)|\mathbf 1_{\{\tau_d\notin K\}}\right]
+
\E\!\left[|g_c(\tau_d)|\mathbf 1_{\{\tau_d\notin K\}}\right]
=o(1).
\]

Next,
\[
\Var(\tau_d)=\frac{2d+4\nu_d}{d^2}
=
\frac{2(1+2\rho)}{d}+o(d^{-1}),
\quad
\E[|\tau_d-\tau_0|^3]=O(d^{-3/2}).
\]
The preceding tail estimate gives
\(\E[|f(\tau_d)|\mathbf 1_{\{\tau_d\notin K\}}]=o(d^{-1})\). On \(K\), the third derivative of \(f\) is bounded, so Taylor's theorem gives
\[
f(\tau_d)
=
f(\tau_0)
+
f'(\tau_0)(\tau_d-\tau_0)
+
\frac12 f''(\tau_0)(\tau_d-\tau_0)^2
+
R_d,
\quad
|R_d|\le C_K|\tau_d-\tau_0|^3,
\]
on \(\{\tau_d\in K\}\). The same moment bound also shows that, for \(j=0,1,2\), replacing
\(\E[(\tau_d-\tau_0)^j\mathbf 1_{\{\tau_d\in K\}}]\) by \(\E[(\tau_d-\tau_0)^j]\) changes the expression by \(O(d^{-2})\). Therefore
\[
\E[f(\tau_d)]
=
f(\tau_0)
+
f'(\tau_0)\E[\tau_d-\tau_0]
+
\frac12 f''(\tau_0)\E[(\tau_d-\tau_0)^2]
+
o(d^{-1}),
\]
and hence
\[
d\,\E[f(\tau_d)]
=
d\,f(\tau_0)
+
\kappa f'(\tau_0)
+
(1+2\rho)f''(\tau_0)
+
o(1).
\]
Similarly, the outside-\(K\) contribution to \(\E[g_c(\tau_d)]\) is \(o(1)\), while \(g_c\) is Lipschitz on \(K\) and \(\tau_d\to\tau_0\) in \(L^1\). Thus
\[
\E[g_c(\tau_d)] = g_c(\tau_0)+o(1).
\]

Therefore
\[
\E[T_d s_{d,c}(T_d)^2]
=
d\,f(\tau_0)
+
\kappa f'(\tau_0)
+
(1+2\rho)f''(\tau_0)
+
g_c(\tau_0)
+
o(1).
\]

Now
\[
f(\tau_0)=\tau_0-2+\tau_0^{-1},
\quad
f'(\tau_0)=1-\tau_0^{-2},
\quad
f''(\tau_0)=2\tau_0^{-3},
\]
and
\[
g_c(\tau)=2\tau\left(1-\frac1\tau\right)\left(\frac{2}{\tau}+\frac{c-2}{\tau-1}\right)
=
2(c-2)+\frac{4(\tau-1)}{\tau}.
\]
Substituting these identities into the exact risk formula gives
\begin{align*}
\risk(\theta_d,\delta_{d,c})
&=
\rho d+\kappa+2c
-
d\,f(\tau_0)
-
\kappa f'(\tau_0)
-
(1+2\rho)f''(\tau_0)
-
g_c(\tau_0)
+
o(1)\\
&=
d\frac{\rho}{1+\rho}
+
\frac{\kappa}{(1+\rho)^2}
+
\frac{4}{1+\rho}
-
\frac{2(1+2\rho)}{(1+\rho)^3}
+
o(1).
\end{align*}
Since
\[
\frac{4}{1+\rho}
-
\frac{2(1+2\rho)}{(1+\rho)^3}
=
\frac{2(1+2\rho+2\rho^2)}{(1+\rho)^3},
\]
this is exactly the claimed expansion.
\end{proof}

\subsection{Results in Sections~\ref{subsec:properdefaults}--\ref{subsec:globalpriors}}
\label{app:properproofs}
\subsubsection{Key ingredients for Proposition~\ref{prop:proper}}

As in Section~\ref{sec:model}, the posterior mean under the properized hierarchy is an isotropic shrinkage rule. Writing $Y=g/(1+g)$ as before,
\[
\delta_{d,c,\ell}(x)=s_{d,c,\ell}(\norm{x}^2)x,
\qquad
s_{d,c,\ell}(t):=\E_\ell[Y\mid T_d=t],
\]
where $\E_\ell$ denotes expectation under the mixing density
$p_\ell(g)\propto g^{c/2-1}\ell(g)$. After rotating coordinates so that
$\theta_d=(r_d,0,\ldots,0)$, every isotropic rule
$\widehat\theta_d=s(T_d)X_d$ with $\E[s(T_d)^2T_d]<\infty$ satisfies
\begin{align}
\risk(\theta_d,\widehat\theta_d)-\norm{\theta_d}^2
&=\E[s(T_d)^2\norm{X_d}^2]
   -2\E[s(T_d)\theta_d^\top X_d]\notag\\
&=\E[s(T_d)^2T_d]-2r_d\E[s(T_d)X_{1,d}].
\label{eq:proper-risk-decomp}
\end{align}
The comparison lemma below controls the properized shrinkage factor by the pure-power rule, and the boundary lemma identifies its local limit. We use these two facts separately for the norm and cross terms in \eqref{eq:proper-risk-decomp}.

\begin{lemma}[Monotone properization lowers shrinkage]
\label{lem:propercompare}
For every $d>c$, every $t\ge0$, and every bounded nonnegative nonincreasing function
$\ell:[0,\infty)\to[0,\infty)$ that is positive on a neighborhood of zero,
\[
0\le s_{d,c,\ell}(t)\le s_{d,c}(t).
\]
\end{lemma}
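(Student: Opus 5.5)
The plan is a monotone likelihood ratio, or covariance, argument. Conditional on $T_d=t$, write the posterior of $Y=g/(1+g)$ under the pure-power prior as $\Pi_{d,c}(dy\mid t)$; by Lemma~\ref{prop:beta} it has density proportional to $y^{c/2-1}(1-y)^{(d-c)/2-1}e^{ty/2}$ on $(0,1)$. The properized prior $p_\ell(g)\propto g^{c/2-1}\ell(g)$ differs from $g^{c/2-1}$ only by the factor $\ell(g)$, and the likelihood in $g$ given $T_d=t$ is the same for both. Hence the properized posterior of $Y$ is obtained from $\Pi_{d,c}(\cdot\mid t)$ by reweighting with
\[
w(y):=\ell\!\left(\frac{y}{1-y}\right).
\]
Since $y\mapsto y/(1-y)$ is increasing and $\ell$ is nonincreasing, $w$ is nonincreasing on $(0,1)$. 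It is also bounded and nonnegative.

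The normalizer $\E_{d,c}[w(Y)\mid T_d=t]$ is strictly positive. Indeed, $\ell$ is positive on a neighborhood $[0,\epsilon)$ of zero, so $w>0$ on $(0,\epsilon/(1+\epsilon))$, and $\Pi_{d,c}(\cdot\mid t)$ has a positive density there. The normalizer is finite because $w$ is bounded and the pure-power posterior is proper for $d>c$. Therefore
\[
s_{d,c,\ell}(t)=\frac{\E_{d,c}[Y\,w(Y)\mid T_d=t]}{\E_{d,c}[w(Y)\mid T_d=t]}.
\]
The lower bound $s_{d,c,\ell}(t)\ge 0$ is immediate because $Y\in(0,1)$.

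For the upper bound, apply Chebyshev's association inequality to the increasing function $y\mapsto y$ and the nonincreasing function $w$ under $\Pi_{d,c}(\cdot\mid t)$:
\[
\E_{d,c}[Yw(Y)\mid t]\le \E_{d,c}[Y\mid t]\,\E_{d,c}[w(Y)\mid t].
\]
Dividing by the positive normalizer gives $s_{d,c,\ell}(t)\le s_{d,c}(t)$. The inequality itself follows from the symmetrization
\[
\tfrac12\iint (y-y')\bigl(w(y)-w(y')\bigr)\,\Pi(dy)\,\Pi(dy')\le 0,
\]
since each integrand is nonpositive.

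The only delicate point is the setup: confirming that the reweighting representation is legitimate and that the normalizer is positive. Both follow directly from the hypotheses that $\ell$ is bounded, nonnegative and positive near zero, and that $d>c$. Measurability and integrability are trivial because $Y$ and $w$ are bounded, so I expect no genuine obstacle.
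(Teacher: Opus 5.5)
Your proposal is correct and matches the paper's proof: both write $s_{d,c,\ell}(t)$ as a ratio of expectations under the pure-power posterior of $Y$ with the nonincreasing weight $\ell(y/(1-y))$. Both then conclude by the symmetrized covariance (Chebyshev association) inequality, and your explicit check that the normalizer is positive on $(0,\epsilon/(1+\epsilon))$ fills in a detail the paper only asserts.
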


\begin{proof}[\textbf{Proof of Lemma~\ref{lem:propercompare}}]
Fix $t\ge0$. Let $P_{d,c,t}$ denote the posterior law of $Y$ given $T_d=t$ under the pure-power prior $\pi_{d,c}$. By Lemma~\ref{prop:beta},
\[
s_{d,c,\ell}(t)
=
\frac{\E_{P_{d,c,t}}\!\left[Y\,W(Y)\right]}
{\E_{P_{d,c,t}}\!\left[W(Y)\right]},
\qquad
W(y):=\ell\!\left(\frac{y}{1-y}\right).
\]
The denominator is strictly positive because $\ell$ is positive near zero. Since
$y\mapsto y/(1-y)$ is increasing and $\ell$ is nonincreasing, $W$ is nonincreasing on $(0,1)$. If $Y'$ is an independent copy of $Y$ under $P_{d,c,t}$, then
$$
\operatorname{Cov}_{P_{d,c,t}}(Y,W(Y)) =\frac12\E\!\left[(Y-Y')\{W(Y)-W(Y')\}\right] \le0,
$$
because the two factors in the product have opposite signs. Hence
\[
\E_{P_{d,c,t}}[Y W(Y)]
\le
\E_{P_{d,c,t}}[Y]\,\E_{P_{d,c,t}}[W(Y)].
\]
Dividing by the positive denominator gives
$s_{d,c,\ell}(t)\le s_{d,c}(t)$. Nonnegativity is immediate.
\end{proof}

\begin{lemma}[Robustness of the local shrinkage limit under bounded monotone properizations]
\label{lem:properboundary}
Fix $c>0$ and let $h_c$ be the profile from Theorem~\ref{thm:main}. For every $M<\infty$,
\[
\sup_{|z|\le M}\left|
\sqrt d\, s_{d,c,\ell}(d+\sqrt d\,z)-h_c(z)
\right|\to 0.
\]
\end{lemma}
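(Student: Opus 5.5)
We will mirror the proof of Lemma~\ref{lem:boundary}, carrying the extra weight $\ell$ through the local change of variables. With $b=c/2$ and $t=d+\sqrt d\,z$, the posterior of $g$ given $T_d=t$ under $p_\ell$ has density proportional to $g^{b-1}\ell(g)(1+g)^{-d/2}\exp(-t/(2(1+g)))$. Changing variables to $y=g/(1+g)$ as in Lemma~\ref{prop:beta} gives
\[
s_{d,c,\ell}(t)=\frac{A^\ell_{d,1}(t)}{A^\ell_{d,0}(t)},
\qquad
A^\ell_{d,m}(t):=\int_0^1 y^{b-1+m}(1-y)^{(d-c)/2-1}e^{ty/2}\,W(y)\,dy,
\]
where $W(y)=\ell(y/(1-y))$. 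Substituting $y=u/\sqrt d$ as before yields
\[
\sqrt d\,s_{d,c,\ell}(d+\sqrt d z)=\frac{J^\ell_{d,1}(z)}{J^\ell_{d,0}(z)},
\]
where $J^\ell_{d,m}$ is the integrand of $J_{d,m}$ multiplied by $W(u/\sqrt d)=\ell\bigl(u/(\sqrt d-u)\bigr)$.

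\textbf{Pointwise limit.} For fixed $u>0$ we have $u/(\sqrt d-u)\to0$. Continuity of $\ell$ at $0$ then gives $W(u/\sqrt d)\to\ell(0)$, and this convergence does not depend on $z$. Combined with the uniform pointwise convergence of the remaining factors already shown in Lemma~\ref{lem:boundary}, the full integrand converges, uniformly over $|z|\le M$, to
\[
\ell(0)\,u^{b-1+m}\exp\!\left(\frac{zu}{2}-\frac{u^2}{4}\right).
\]

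\textbf{Domination.} Since $\ell$ is bounded, nonnegative and nonincreasing, $0\le W\le\ell(0)$ on $(0,1)$. The Gaussian envelope $u^{b-1+m}\exp(Mu/2+u-u^2/8)$ from Lemma~\ref{lem:boundary}, multiplied by $\ell(0)$, therefore still dominates on $(0,\sqrt d)$. Dominated convergence, applied uniformly in $|z|\le M$ exactly as before, gives
\[
\sup_{|z|\le M}\bigl|J^\ell_{d,m}(z)-\ell(0)I_{b+m}(z)\bigr|\to0,\qquad m=0,1.
\]
Because $\ell(0)>0$ and $\inf_{|z|\le M}I_b(z)>0$, the denominators are bounded away from zero for large $d$. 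The factor $\ell(0)$ then cancels in the ratio, so $J^\ell_{d,1}/J^\ell_{d,0}\to I_{b+1}/I_b=h_c$ uniformly on $|z|\le M$.

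\textbf{Main obstacle.} The delicate point is uniformity of the convergence $W(u/\sqrt d)\to\ell(0)$. It holds only pointwise in $u$, and near $u=\sqrt d$ the argument $u/(\sqrt d-u)$ blows up. This causes no difficulty, since dominated convergence needs only pointwise convergence together with an integrable bound, and the bound $W\le\ell(0)$ holds everywhere. Continuity of $\ell$ at $0$ is used only at fixed $u$.

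If the properization instead carries an indicator $\mathbf 1_{(0,G)}$, as in Proposition~\ref{prop:bounded-mult-transfer}, then $W(u/\sqrt d)=\ell(0)+o(1)$ still holds for fixed $u$ once $u/(\sqrt d-u)<G$, that is, for large $d$. The same argument therefore applies verbatim.
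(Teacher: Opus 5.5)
Your proposal is correct and follows the paper's proof essentially step for step. Both arguments use the same weighted integrals $J^{(\ell)}_{d,m}$ carrying the factor $\ell(u/(\sqrt d-u))$, get convergence of that factor to $\ell(0)$ from continuity at zero, dominate by $\ell(0)$ times the Gaussian envelope from Lemma~\ref{lem:boundary}, and cancel $\ell(0)$ in the ratio. The only difference is cosmetic: the paper states the integrand convergence as uniform on compacts $[0,U]\times[-M,M]$ rather than pointwise in $u$, which makes no difference for dominated convergence.
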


\begin{proof}[\textbf{Proof of Lemma~\ref{lem:properboundary}}]
Put $b=c/2$ and, for $m=0,1$, define
\[
J^{(\ell)}_{d,m}(z)
:=
\int_0^{\sqrt d}
 u^{b-1+m}
\left(1-\frac{u}{\sqrt d}\right)^{(d-c)/2-1}
\exp\!\left(\frac{(d+\sqrt d\,z)u}{2\sqrt d}\right)
\ell\!\left(\frac{u}{\sqrt d-u}\right)du.
\]
The change of variables $u=\sqrt d\,y$ in the posterior integrals gives
\[
\sqrt d\,s_{d,c,\ell}(d+\sqrt d\,z)
=
\frac{J^{(\ell)}_{d,1}(z)}{J^{(\ell)}_{d,0}(z)}.
\]
Fix $M<\infty$. For fixed $U<\infty$, Taylor's formula for
$\log(1-u/\sqrt d)$ yields
$$
\sup_{\substack{0\le u\le U\\ |z|\le M}}
\left|
\left(\frac{d-c}{2}-1\right)\log\left(1-\frac{u}{\sqrt d}\right)
+\frac{(d+\sqrt d\,z)u}{2\sqrt d}
-\left(\frac{zu}{2}-\frac{u^2}{4}\right)
\right| \longrightarrow0.
$$
Continuity of $\ell$ at zero also gives
\[
\sup_{0\le u\le U}
\left|\ell\!\left(\frac{u}{\sqrt d-u}\right)-\ell(0)\right|
\longrightarrow0.
\]
Thus the integrands converge uniformly on $[0,U]\times[-M,M]$ to
\[
\ell(0)u^{b-1+m}\exp\!\left(\frac{zu}{2}-\frac{u^2}{4}\right).
\]

To control the tails, use $\log(1-x)\le-x-x^2/2$. For all sufficiently large
$d$, all $0<u<\sqrt d$, and all $|z|\le M$,
$$
\left(\frac{d-c}{2}-1\right)\log\left(1-\frac{u}{\sqrt d}\right)
+\frac{(d+\sqrt d\,z)u}{2\sqrt d} \le
\frac{Mu}{2}+u-\frac{u^2}{8}.
$$
Since $0\le\ell(g)\le\ell(0)$, the finite-$d$ integrands, extended by zero
for $u\ge\sqrt d$, are dominated by a constant multiple of
\[
u^{b-1+m}\exp\!\left(\frac{Mu}{2}+u-\frac{u^2}{8}\right),
\]
which is integrable for $m=0,1$. Dominated convergence therefore gives
\[
\sup_{|z|\le M}
\left|J^{(\ell)}_{d,m}(z)-\ell(0)I_{b+m}(z)\right|
\longrightarrow0,
\qquad m=0,1,
\]
where $I_b(z)=\int_0^\infty
u^{b-1}\exp\!\left(\frac{zu}{2}-\frac{u^2}{4}\right)du$ as defined in Appendix \ref{app:weakproofs}.
Because $I_b$ is continuous and strictly positive,
$\ell(0)\inf_{|z|\le M}I_b(z)>0$; hence $J^{(\ell)}_{d,0}$ is bounded away
from zero uniformly on $[-M,M]$ for all large $d$. Taking ratios and using
$I_{b+1}(z)/I_b(z)=h_c(z)$ proves the result.
\end{proof}

\begin{proof}[\textbf{Proof of Proposition~\ref{prop:proper}}]
It is enough to prove the critical-regime assertion. Write
\[
\nu_d:=\norm{\theta_d}^2,
\qquad
\beta_d:=\frac{\nu_d}{\sqrt d}\longrightarrow\beta,
\qquad
Z_d:=\frac{T_d-d}{\sqrt d},
\]
and rotate coordinates so that $\theta_d=(r_d,0,\ldots,0)$, where
$r_d^2=\nu_d$. Define
\[
\widetilde v_d(z):=\sqrt d\,s_{d,c,\ell}(d+\sqrt d\,z),
\qquad
\widetilde V_d:=\widetilde v_d(Z_d).
\]
By Lemma~\ref{lem:properboundary}, $\widetilde v_d\to h_c$ locally uniformly,
and Lemma~\ref{lem:Zd} gives $Z_d\Rightarrow Z_\beta$. For every $M<\infty$
and $\varepsilon>0$,
$$
\Pbb\left(|\widetilde v_d(Z_d)-h_c(Z_d)|>\varepsilon\right)
\le \Pbb(|Z_d|>M) +
\mathbf 1\left\{
\sup_{|z|\le M}|\widetilde v_d(z)-h_c(z)|>\varepsilon
\right\}.
$$
Tightness of $\{Z_d\}$ and local uniform convergence make the right-hand
side arbitrarily small. Since $h_c$ is continuous, Slutsky's lemma yields
\begin{equation}
\label{eq:proper-V-dist}
\widetilde V_d\Rightarrow h_c(Z_\beta).
\end{equation}

We next obtain the domination needed for expectations. Let
$v_d(z)=\sqrt d\,s_{d,c}(d+\sqrt d\,z)$ denote the pure-power quantity. We
claim that, for all sufficiently large $d$,
\begin{equation}
\label{eq:pure-power-envelope}
v_d(z)\le K_c(1+z_+),
\qquad z\ge-\sqrt d,
\end{equation}
for a constant $K_c$ independent of $d$ and $z$.
Indeed, put $t=d+\sqrt d\,z$, $\mu_d=s_{d,c}(t)$,
$v=\sqrt d\,\mu_d$, and $\tau=t/d$. Proposition~\ref{prop:moment} and
$q_d(t)\ge\mu_d^2$ imply
\[
\tau v^2\le c+zv.
\]
If $z>1$, then $\tau\ge1$ and
\[
v\le\frac{z+\sqrt{z^2+4c}}2\le z+\sqrt c.
\]
If $z<-1$, the identity $tq_d(t)=c+(t-d)\mu_d\ge0$ gives
$v\le c/(-z)\le c$. If $|z|\le1$, Lemma~\ref{lem:boundary} bounds $v_d$
uniformly for all large $d$. These three cases prove
\eqref{eq:pure-power-envelope}.

Lemma~\ref{lem:propercompare} now gives
$0\le\widetilde V_d\le v_d(Z_d)$. Therefore, by
\eqref{eq:pure-power-envelope} and Lemma~\ref{lem:Zd},
\[
\sup_d\E\left[v_d(Z_d)^2\right]
\le K_c^2\sup_d\E\left[(1+Z_d^+)^2\right]<\infty.
\]
Thus $\{\widetilde V_d\}$ is uniformly integrable, and
\begin{equation}
\label{eq:proper-V-mean}
\E\widetilde V_d\longrightarrow\E h_c(Z_\beta).
\end{equation}

For the norm term, define
\[
\widetilde G_d(z)
:=\left(1+\frac{z}{\sqrt d}\right)\widetilde v_d(z)^2,
\qquad z\ge-\sqrt d.
\]
The local uniform convergence of $\widetilde v_d$ implies
$\widetilde G_d\to h_c^2$ locally uniformly. Hence, for every $M$ and
$\varepsilon>0$,
$$
\Pbb\left(|\widetilde G_d(Z_d)-h_c(Z_d)^2|>\varepsilon\right)
\le \Pbb(|Z_d|>M) +
\mathbf 1\left\{
\sup_{|z|\le M}|\widetilde G_d(z)-h_c(z)^2|>\varepsilon
\right\},
$$
so
\[
T_d s_{d,c,\ell}(T_d)^2
=\widetilde G_d(Z_d)
\Rightarrow h_c(Z_\beta)^2.
\]
Moreover, Lemma~\ref{lem:propercompare} and Proposition~\ref{prop:moment}
give
\[
0\le T_d s_{d,c,\ell}(T_d)^2
\le C_c\{1+(Z_d^+)^2\}.
\]
The right-hand side is bounded in $L^2$ by Lemma~\ref{lem:Zd}; consequently,
\begin{equation}
\label{eq:proper-norm-limit}
\E[T_d s_{d,c,\ell}(T_d)^2]
\longrightarrow
\E[h_c(Z_\beta)^2].
\end{equation}

It remains to control the cross term in \eqref{eq:proper-risk-decomp}. Write
$X_{1,d}=r_d+\xi_d$, where $\xi_d\sim N(0,1)$. Since
$s_{d,c,\ell}(T_d)=\widetilde V_d/\sqrt d$, \eqref{eq:proper-V-mean} implies
\[
r_d^2\E[s_{d,c,\ell}(T_d)]
=\beta_d\E\widetilde V_d
\longrightarrow
\beta\E h_c(Z_\beta).
\]
Let
$W_d=\sum_{j=2}^dX_{j,d}^2$. Then $W_d\sim\chi^2_{d-1}$, it is independent
of $\xi_d$, and $T_d\ge W_d$. Thus, for $d>3$,
\[
\E\left[\frac{\xi_d^2}{T_d}\right]
\le \E[\xi_d^2]\ \E[W_d^{-1}]
=\frac1{d-3}.
\]
By Cauchy--Schwarz and \eqref{eq:proper-norm-limit},
\begin{align*}
\left|r_d\E[s_{d,c,\ell}(T_d)\xi_d]\right|
&\le r_d
\{\E[T_d s_{d,c,\ell}(T_d)^2]\}^{1/2}
\left\{\E\left[\frac{\xi_d^2}{T_d}\right]\right\}^{1/2}\\
&=O(d^{1/4})O(1)O(d^{-1/2})=o(1).
\end{align*}
Therefore
\[
r_d\E[s_{d,c,\ell}(T_d)X_{1,d}]
\longrightarrow
\beta\E h_c(Z_\beta).
\]
Substitution into \eqref{eq:proper-risk-decomp}, together with
\eqref{eq:proper-norm-limit}, gives
\begin{align*}
\risk(\theta_d,\delta_{d,c,\ell})-\norm{\theta_d}^2
&=\E[T_d s_{d,c,\ell}(T_d)^2]
 -2r_d\E[s_{d,c,\ell}(T_d)X_{1,d}]\\
&\longrightarrow
\E[h_c(Z_\beta)^2]-2\beta\E h_c(Z_\beta)
=L_c(\beta).
\end{align*}
If $\norm{\theta_d}^2\to\nu<\infty$, then $\beta=0$ and
Lemma~\ref{lem:profile}(iii) gives $L_c(0)=c$; hence
$\risk(\theta_d,\delta_{d,c,\ell})\to\nu+c$.

For the equivalent SD-scale formulation, set $g=\tau^2$ and
$\ell(g)=m(\sqrt g)$. Since $d\tau/dg=(2\sqrt g)^{-1}$,
\[
p_g(g)=\frac{p_\tau(\sqrt g)}{2\sqrt g}
\propto g^{c/2-1}\ell(g).
\]
The assumptions on $m$ imply the corresponding assumptions on $\ell$, so
the variance-scale result applies.
\end{proof}

\subsubsection{Single-global-scale transfer and change-of-variables formulas}
\label{app:hyperpriors}

This subsection collects the auxiliary change-of-variables formulas used in Section~\ref{subsec:globalpriors}. Because the SD-scale formulation is already built into Proposition~\ref{prop:proper}, it remains only to verify the near-zero exponents and regularity conditions for the named single global-scale hyperpriors.

\begin{proposition}[Auxiliary change-of-variables dictionary for single global-scale hyperpriors]
\label{prop:commonhyperapp}
For the hierarchy $\theta\mid \tau\sim N_d(0,\tau^2 I_d)$, the following statements hold.
\begin{enumerate}
\item[(i)] If $\tau$ has a half-$t_{\nu_0}(s)$, half-Cauchy$(s)$, half-normal$(s)$,
exponential$(b)$, or truncated-flat density on $[0,A]$, then $p(\tau)=K+o(1)$ as
$\tau\downarrow 0$ for some $K\in(0,\infty)$, so the local index is $c=1$.

\item[(ii)] If $g=\tau^2\sim \mathrm{Gamma}(a,b)$, then
\[
p(\tau)=\frac{2b^a}{\Gamma(a)}\tau^{2a-1}e^{-b\tau^2},
\quad \tau>0,
\]
so the local index is $c=2a$.

\item[(iii)] If $W=w_0\tau^2$ for some fixed $w_0>0$ and $W\sim\mathrm{Beta\text{-}prime}(a,b)$,
equivalently $R^2=W/(1+W)\sim\mathrm{Beta}(a,b)$, then
\[
p(\tau)=\frac{2w_0^a}{\mathrm{B}(a,b)}\tau^{2a-1}(1+w_0\tau^2)^{-(a+b)},
\quad \tau>0,
\]
so the local index is $c=2a$. In particular, $a=\tfrac12$ is the SD-scale class, whereas $a=1$
is the variance-flat class.

\end{enumerate}
In cases (i)--(iii), the density admits the representation $p(\tau)=K\tau^{c-1}m(\tau)$ with $m$ bounded, nonnegative, nonincreasing, continuous at zero, and $m(0)=1$. Consequently, Proposition~\ref{prop:proper} applies with the indicated exponent $c$.
\end{proposition}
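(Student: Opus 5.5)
The plan is to verify, case by case, the factorization $p(\tau)=K\tau^{c-1}m(\tau)$ with $m$ bounded, nonnegative, nonincreasing, continuous at zero, $m(0)=1$, and $\int_0^\infty\tau^{c-1}m(\tau)\,d\tau<\infty$. Once this is in place, the footnote to \eqref{eq:proper} and the SD-scale paragraph at the end of the proof of Proposition~\ref{prop:proper} apply directly. That paragraph sets $\ell(g)=m(\sqrt g)$, which inherits monotonicity, boundedness, continuity at zero and $\ell(0)=1>0$, and $p_g(g)\propto g^{c/2-1}\ell(g)$. Integrability is automatic because each $p(\tau)$ is a probability density. The work is therefore only to identify $K$, $c$ and $m$ and to check monotonicity.

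For (i), take $c=1$, so $\tau^{c-1}=1$ and $m(\tau)=p(\tau)/p(0)$. In each case $p(0)\in(0,\infty)$ and $m$ is nonincreasing on $[0,\infty)$:
\begin{itemize}
\item half-$t_{\nu_0}(s)$: $p\propto(1+\tau^2/(\nu_0s^2))^{-(\nu_0+1)/2}$, with the half-Cauchy case $\nu_0=1$;
\item half-normal: $e^{-\tau^2/(2s^2)}$;
\item exponential: $e^{-b\tau}$;
\item truncated flat: $\mathbf 1_{[0,A]}$.
\end{itemize}
Each $m$ is bounded by $1$ and continuous at $0$. The indicator is continuous at $0$ because $A>0$, so continuity is needed only at zero. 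This also gives $p(\tau)=K+o(1)$.

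For (ii), I will use the change of variables $g=\tau^2$, $dg=2\tau\,d\tau$. This turns the Gamma$(a,b)$ density $\frac{b^a}{\Gamma(a)}g^{a-1}e^{-bg}$ into $\frac{2b^a}{\Gamma(a)}\tau^{2a-1}e^{-b\tau^2}$. Hence $c=2a$, $K=2b^a/\Gamma(a)$ and $m(\tau)=e^{-b\tau^2}$.

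For (iii), I will first compute the Beta-prime density of $W$, namely $W^{a-1}(1+W)^{-(a+b)}/\mathrm B(a,b)$. This follows from $R^2=W/(1+W)$ with Jacobian $(1+W)^{-2}$. Substituting $W=w_0\tau^2$ with $dW=2w_0\tau\,d\tau$ gives the displayed density. Then $c=2a$, $K=2w_0^a/\mathrm B(a,b)$ and $m(\tau)=(1+w_0\tau^2)^{-(a+b)}$, which is decreasing because $a+b>0$. The two special cases are $a=\tfrac12$, giving $c=1$, and $a=1$, giving $c=2$.

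No step is a genuine obstacle. The only point needing care is to confirm that the hypotheses of Proposition~\ref{prop:proper} are stated on $\ell$ (nonincreasing, continuous at $0$, $\ell(0)\in(0,\infty)$, and integrability of $g^{c/2-1}\ell$), and that the transfer $m\mapsto\ell$ preserves each of them. In particular, the integrability condition $\int g^{c/2-1}\ell(g)\,dg<\infty$ is equivalent, up to the factor $2$, to $\int\tau^{c-1}m(\tau)\,d\tau<\infty$, which is the finiteness of the normalizing constant of $p$.
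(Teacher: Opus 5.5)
Your proposal is correct and follows essentially the same route as the paper's proof: a case-by-case factorization $p(\tau)=K\tau^{c-1}m(\tau)$ using the Jacobians $2\tau$ and $2w_0\tau$ for (ii) and (iii), the Beta-prime density via $W=U/(1-U)$, and a check that the residual factor $m$ has the required regularity, with integrability inherited from normalization. The one addition is that you make explicit the transfer $\ell(g)=m(\sqrt g)$ and the factor-of-two equivalence of the integrability conditions; the paper leaves these to the SD-scale paragraph at the end of the proof of Proposition~\ref{prop:proper}.
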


\begin{proof}[\textbf{Proof of Proposition~\ref{prop:commonhyperapp}}]
For part~(i), it is enough to evaluate the density at the origin. A half-$t_{\nu_0}(s)$ density equals
$2s^{-1}t_{\nu_0}(\tau/s)$ for $\tau>0$, a half-Cauchy is the case
$\nu_0=1$, and a half-normal density equals
$2s^{-1}\phi(\tau/s)$. Each is continuous at zero with a finite positive value. The exponential density has value $b$ at zero, and the truncated-flat density has value $1/A$ on $[0,A]$. Hence in every case
$p(\tau)=p(0)+o(1)$ with $p(0)\in(0,\infty)$, which is the local form
$K\tau^{c-1}$ with $c=1$.

For part~(ii), if $g\sim\mathrm{Gamma}(a,b)$ with rate $b$, then
\[
p_g(g)=\frac{b^a}{\Gamma(a)}g^{a-1}e^{-bg},
\qquad g>0.
\]
Under $g=\tau^2$, the Jacobian is $dg/d\tau=2\tau$, so
\[
p_\tau(\tau)
=p_g(\tau^2)2\tau
=
\frac{2b^a}{\Gamma(a)}\tau^{2a-1}e^{-b\tau^2}.
\]
The factor $e^{-b\tau^2}$ tends to one at zero; therefore the local index is
$c=2a$.

For part~(iii), write $U=R^2$. The transformation
$W=U/(1-U)$ has inverse $U=W/(1+W)$ and derivative
$dU/dW=(1+W)^{-2}$. Thus
\begin{align*}
p_W(w)
&=
\frac1{\mathrm B(a,b)}
\left(\frac{w}{1+w}\right)^{a-1}
\left(\frac1{1+w}\right)^{b-1}
\frac1{(1+w)^2}\\
&=
\frac1{\mathrm B(a,b)}w^{a-1}(1+w)^{-(a+b)},
\qquad w>0.
\end{align*}
Now set $w=w_0\tau^2$, whose Jacobian is $2w_0\tau$. Then
\begin{align*}
p_\tau(\tau)
&=p_W(w_0\tau^2)2w_0\tau\\
&=
\frac{2w_0^a}{\mathrm B(a,b)}
\tau^{2a-1}(1+w_0\tau^2)^{-(a+b)}.
\end{align*}
The last factor tends to one at zero, so again $c=2a$.

It remains to check the regularity conditions used by Proposition~\ref{prop:proper}. In part~(i), after division by $p_\tau(0)$, the residual factor is bounded, nonnegative, nonincreasing, continuous at zero, and equal to one there for each named density. In parts~(ii) and~(iii), the corresponding factors are $m(\tau)=e^{-b\tau^2}$ and $m(\tau)=(1+w_0\tau^2)^{-(a+b)}$, respectively; both have the same properties. Integrability follows because the displayed expressions are probability densities. Thus Proposition~\ref{prop:proper} applies with $c=1$ in part~(i) and $c=2a$ in parts~(ii)--(iii), completing the single-global-scale dictionary.
\end{proof}

\subsection{Section~\ref{subsec:beyondglobal}: Beyond a single global scale}
\label{app:beyond-details}

\subsubsection{Proof of Proposition~\ref{prop:bounded-mult-transfer}: bounded coordinate-multiplier transfer}
\label{app:coord-mult-proof}

Throughout, let $X_{dj}=\theta_{dj}+\varepsilon_{dj}$ with independent
$\varepsilon_{dj}\sim N(0,1)$, and write
\[
T_d=\norm{X_d}^2,
\qquad
Z_d=\frac{T_d-d}{\sqrt d}.
\]
Under either signal regime in the proposition,
$Z_d\Rightarrow Z_\beta\sim N(\beta,2)$, where
$\beta=\lim \norm{\theta_d}^2/\sqrt d$. Moreover,
\[
Z_d
=
\frac1{\sqrt d}\sum_{j=1}^d(\varepsilon_{dj}^2-1)
+
\frac{2}{\sqrt d}\theta_d^\top\varepsilon_d
+
\frac{\norm{\theta_d}^2}{\sqrt d}.
\]
For every fixed $k\ge2$, Rosenthal's inequality gives
\[
\sup_d\E\left|
\frac1{\sqrt d}\sum_{j=1}^d(\varepsilon_{dj}^2-1)
\right|^k<\infty.
\]
The second term is Gaussian with variance
$4\norm{\theta_d}^2/d=O(d^{-1/2})$, and the last term is bounded. Hence
\begin{equation}
\label{eq:app-Zmom-gl}
\sup_d\E|Z_d|^k<\infty
\qquad\text{for every fixed }k<\infty.
\end{equation}

For fixed $g$, define the one-coordinate marginal density ratio
\[
r_g(x)=\E_A\left[(1+gA)^{-1/2}
\exp\left\{\frac{x^2gA}{2(1+gA)}\right\}\right]
\]
and the corresponding conditional shrinkage factor
\[
q_g(x)=
\frac{\E_A\left[\frac{gA}{1+gA}(1+gA)^{-1/2}
\exp\left\{\frac{x^2gA}{2(1+gA)}\right\}\right]}
{r_g(x)}.
\]
Normal conjugacy gives
\begin{equation}
\label{eq:app-exact-postmean-gl}
\delta_{dj}(X_d)=X_{dj}\E[q_g(X_{dj})\mid X_d],
\end{equation}
where the posterior density of $g$ is proportional to
\[
g^{c/2-1}\ell(g)\mathbf 1_{(0,G)}(g)
\prod_{j=1}^d r_g(X_{dj}).
\]
The next two lemmas identify the local boundary experiment and exclude
posterior mass at scales bounded away from zero. The posterior localization, radial approximation, and risk calculation are
carried out directly in the proof of the proposition.

\begin{lemma}[Local likelihood expansion]
\label{lem:mult-local-likelihood}
Let
\[
\alpha_1(x)=\frac{x^2-1}{2},
\qquad
\alpha_2(x)=\frac{x^4-6x^2+3}{8},
\qquad
\mu_2=\E[A^2].
\]
There are constants $\eta_0>0$ and $C<\infty$ such that, uniformly in $x$
and $0\le g\le\eta_0$,
\begin{align}
\log r_g(x)
&=\mu\alpha_1(x)g+B_2(x)g^2+\rho_g(x),
\label{eq:app-logr-gl}\\
B_2(x)
&=\mu_2\alpha_2(x)-\frac{\mu^2}{2}\alpha_1(x)^2,
\nonumber\\
|\rho_g(x)|
&\le Cg^3(1+|x|^6)e^{Cgx^2},
\label{eq:app-rho-gl}\\
q_g(x)
&=\mu g+O\{g^2(1+x^2)e^{Cgx^2}\}.
\label{eq:app-qg-exp}
\end{align}
Moreover, if
\[
L_d(u)=\sum_{j=1}^d
\log r_{u/(\mu\sqrt d)}(X_{dj}),
\qquad
0\le u\le\eta_0\mu\sqrt d,
\]
then, for every fixed $M<\infty$,
\begin{equation}
\label{eq:app-boundary-gl}
\sup_{0\le u\le M}
\left|L_d(u)-\left(\frac{u}{2}Z_d-\frac{u^2}{4}\right)\right|
\to0
\quad\text{in probability}.
\end{equation}
\end{lemma}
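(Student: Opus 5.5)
The plan is to expand $r_g(x)$ as a function of the small parameter $g$ for fixed $x$, keeping explicit control of how the remainder depends on $x$. Then I sum the expansion over coordinates and apply a law of large numbers together with the moment bound \eqref{eq:app-Zmom-gl}.

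\emph{Step 1: single-coordinate expansion.} Write $w=gA\in[0,gA_+]$. Define
\[
F(w,x)=(1+w)^{-1/2}\exp\{x^2w/(2(1+w))\},
\]
so that $r_g(x)=\E_A F(gA,x)$. For $|w|\le gA_+\le 1/2$, the function $\log F(w,x)=-\tfrac12\log(1+w)+\tfrac{x^2}{2}\,\tfrac{w}{1+w}$ is a polynomial in $x^2$ with coefficients analytic in $w$. Its expansion is
\[
\log F = w\,\tfrac{x^2-1}{2}+w^2\,\tfrac{1-2x^2}{4}+O\{w^3(1+x^2)\},
\]
and exponentiating gives
\[
F=1+w\alpha_1(x)+w^2\alpha_2(x)+R(w,x).
\]
One checks that $\alpha_2(x)=\tfrac{x^4-6x^2+3}{8}$ equals $\tfrac{\alpha_1^2}{2}+\tfrac{1-2x^2}{4}$. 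Taylor's theorem with integral remainder, using the bound $|\partial_w^3F|\le C(1+|x|^6)e^{Cwx^2}$ (derivatives of the exponential bring down at most $x^6$), gives
\[
|R|\le Cw^3(1+|x|^6)e^{Cwx^2}.
\]
Averaging over $A$ then yields
\[
r_g=1+\mu g\alpha_1+\mu_2g^2\alpha_2+O\{g^3(1+|x|^6)e^{Cgx^2}\}.
\]
Taking logarithms with $\log(1+y)=y-y^2/2+O(y^3)$ produces $B_2$ and the remainder bound \eqref{eq:app-rho-gl}. The only delicate point is that $y$ need not be small when $gx^2$ is large. I would handle this by noting that $r_g\ge(1+gA_+)^{-1/2}$ is bounded below, and that $\log r_g\le Cgx^2$. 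Consequently, on the region $gx^2\ge 1$ the claimed remainder bound holds trivially after enlarging $C$, since the factor $e^{Cgx^2}$ dominates. The expansion \eqref{eq:app-qg-exp} follows in the same way: the numerator of $q_g$ is $\E_A[\tfrac{w}{1+w}F]=\mu g+O\{g^2(1+x^2)e^{Cgx^2}\}$, and dividing by $r_g$, which is bounded below, preserves this form.

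\emph{Step 2: summation.} Set $g=u/(\mu\sqrt d)$ and write $L_d(u)=S_1+S_2+S_3$. The first sum is
\[
S_1=\frac{u}{\sqrt d}\sum_j\alpha_1(X_{dj})=\frac{u}{2}\cdot\frac{T_d-d}{\sqrt d}=\frac{u}{2}Z_d,
\]
which is exact. The second sum is $S_2=\frac{u^2}{\mu^2d}\sum_j B_2(X_{dj})$. Its convergence is the main step: I need $d^{-1}\sum_jB_2(X_{dj})\to -\mu^2/4$ in probability. When $\theta=0$ we have $\E\alpha_2(\varepsilon)=(3-6+3)/8=0$ and $\E\alpha_1(\varepsilon)^2=\Var(\varepsilon^2)/4=1/2$, so $\E B_2=-\mu^2/4$. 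With $\sup|\theta_{dj}|<\infty$, the shift changes each mean by a quantity bounded by $C(\theta_{dj}^2+|\theta_{dj}|)$. Because the polynomials are even, the linear term vanishes in expectation, so the mean shift is at most $C\theta_{dj}^2$, and the total shift is $O(\norm{\theta_d}^2/d)=O(d^{-1/2})\to0$ in both regimes. The variance of the average is $O(1/d)$ by the uniform sup bound on $\theta$, so Chebyshev's inequality gives the convergence. Hence $S_2\to -u^2/4$ uniformly on $u\le M$, since the dependence on $u$ is simply the factor $u^2$.

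The third sum is bounded by
\[
|S_3|\le C\,u^3d^{-3/2}\sum_j(1+|X_{dj}|^6)e^{Cu X_{dj}^2/\sqrt d}.
\]
On the event $\max_jX_{dj}^2\le 3\log d+C$, which has probability tending to one by a union bound using $\sup|\theta|<\infty$, the exponential factor is $1+o(1)$ uniformly in $u\le M$. Thus $S_3=O_{\Pbb}(d^{-1/2})$, since $d^{-1}\sum_j|X_{dj}|^6=O_{\Pbb}(1)$.

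Combining the three pieces gives \eqref{eq:app-boundary-gl}; uniformity in $u\in[0,M]$ holds because each piece factors as a power of $u$ times a statistic that does not depend on $u$. I expect Step 1, and specifically obtaining a uniform-in-$x$ remainder carrying the $e^{Cgx^2}$ factor, to be the main technical obstacle. Step 2 is routine once that bound is in hand.
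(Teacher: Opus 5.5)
Your proposal is correct and follows the paper's proof. Both arguments Taylor-expand the one-coordinate ratio uniformly in $x$ for small $g$, average over the bounded multiplier, and pass to $\log r_g$ using the lower bound $r_g\ge(1+gA_+)^{-1/2}$. After the summation step, the linear term is exactly $\frac u2 Z_d$, the quadratic average tends to $-\mu^2/4$ by a mean and $O(d^{-1})$ variance computation, and the cubic remainder is $O_{\Pbb}(d^{-1/2})$.

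The only variation is how you control the factor $e^{CgX_{dj}^2}$. You use the high-probability bound $\max_j X_{dj}^2\lesssim 3\log d$, which suffices for fixed $M$. The paper instead shrinks $\eta_0$ and uses Gaussian exponential moments of $H_d=\frac1d\sum_j(1+|X_{dj}|^6)e^{C\eta_0X_{dj}^2}$. This gives $|R_d(u)|\le Cu^3H_d/\sqrt d$ over the whole range $u\le\eta_0\mu\sqrt d$, a form that is reused later for the tail bound in the proof of Proposition~\ref{prop:bounded-mult-transfer}.
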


\begin{proof}[\textbf{Proof of Lemma~\ref{lem:mult-local-likelihood}}]
Set
\[
\psi_g(a,x):=(1+ga)^{-1/2}
\exp\left\{\frac{x^2ga}{2(1+ga)}\right\}.
\]
Because $0\le a\le A_+$, Taylor's theorem on a sufficiently small interval
$0\le g\le\eta_0$ gives, uniformly in $a$ and $x$,
\begin{equation}
\label{eq:app-psi-exp-gl}
\psi_g(a,x)
=1+a\alpha_1(x)g+a^2\alpha_2(x)g^2
+O\{g^3(1+|x|^6)e^{Cgx^2}\}.
\end{equation}
Taking expectation over $A$ yields
\[
r_g(x)=1+\mu\alpha_1(x)g+\mu_2\alpha_2(x)g^2
+O\{g^3(1+|x|^6)e^{Cgx^2}\}.
\]
Since $r_g(x)$ is bounded below by $(1+\eta_0A_+)^{-1/2}$, expanding
$\log r_g(x)$ gives \eqref{eq:app-logr-gl}--\eqref{eq:app-rho-gl}. Likewise,
\[
\E_A\left[\frac{gA}{1+gA}\psi_g(A,x)\right]
=\mu g+O\{g^2(1+x^2)e^{Cgx^2}\},
\]
and division by $r_g(x)$ proves \eqref{eq:app-qg-exp}.

It remains to sum the second-order and remainder terms. If
$X\sim N(\theta,1)$, then
\[
\E[X^4-6X^2+3]=\theta^4,
\qquad
\E[(X^2-1)^2]=\theta^4+4\theta^2+2.
\]
Therefore
\[
\frac1d\sum_{j=1}^d\E B_2(X_{dj})
=
\frac{\mu_2-\mu^2}{8d}\sum_j\theta_{dj}^4
-
\frac{\mu^2}{2d}\sum_j\theta_{dj}^2
-
\frac{\mu^2}{4}
\longrightarrow-\frac{\mu^2}{4}.
\]
Because $B_2$ has degree four and the means $\theta_{dj}$ are uniformly
bounded,
\[
\Var\left(\frac1d\sum_{j=1}^dB_2(X_{dj})\right)
=\frac1{d^2}\sum_{j=1}^d\Var\{B_2(X_{dj})\}
=O(d^{-1}).
\]
Consequently,
\begin{equation}
\label{eq:app-B2avg-gl}
\overline B_d:=\frac1d\sum_{j=1}^dB_2(X_{dj})
\longrightarrow-\frac{\mu^2}{4}
\quad\text{in }L^2.
\end{equation}
After decreasing $\eta_0$ if necessary, Gaussian exponential moments and
the bounded-coordinate assumption give
\[
\sup_d\E H_d<\infty,
\qquad
H_d:=\frac1d\sum_{j=1}^d
(1+|X_{dj}|^6)e^{C\eta_0X_{dj}^2},
\]
so
\begin{equation}
\label{eq:app-expavg-gl}
H_d=O_{\Pbb}(1).
\end{equation}
Substituting $g=u/(\mu\sqrt d)$ in \eqref{eq:app-logr-gl} gives
\begin{equation}
\label{eq:app-L-representation-gl}
L_d(u)
=
\frac{u}{2}Z_d+\frac{u^2}{\mu^2}\overline B_d+R_d(u),
\qquad
|R_d(u)|\le C\frac{u^3}{\sqrt d}H_d.
\end{equation}
Hence, for fixed $M$,
$$
\sup_{0\le u\le M}
\left|L_d(u)-\left(\frac{u}{2}Z_d-\frac{u^2}{4}\right)\right|
\le
\frac{M^2}{\mu^2}\left|\overline B_d+\frac{\mu^2}{4}\right|
+C\frac{M^3}{\sqrt d}H_d
\longrightarrow0
$$
in probability by \eqref{eq:app-B2avg-gl} and \eqref{eq:app-expavg-gl}.
This proves \eqref{eq:app-boundary-gl}.
\end{proof}

\begin{lemma}[Separation away from the boundary]
\label{lem:mult-fixed-g-separation}
For every $\eta>0$ and every fixed $N<\infty$, there is $a_\eta>0$ such that
\begin{equation}
\label{eq:app-sep-gl}
\Pbb_{\theta_d}\left(
\sup_{\eta\le g\le G}
\frac1d\sum_{j=1}^d\log r_g(X_{dj})\le-a_\eta
\right)
=1-O(d^{-N}).
\end{equation}
\end{lemma}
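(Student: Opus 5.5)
We need to show that, with probability $1-O(d^{-N})$, the averaged log marginal ratio $\frac1d\sum_j\log r_g(X_{dj})$ is bounded above by $-a_\eta$ simultaneously for all $g\in[\eta,G]$. The plan is to compute the expectation, show it is strictly negative uniformly on the compact set $[\eta,G]$, and then add a uniform concentration step.

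\emph{Step 1: a pointwise negative mean.} Fix $g>0$ and let $X\sim N(\theta,1)$ with $|\theta|\le\Theta:=\sup_{d,j}|\theta_{dj}|$. Write $r_g(x)=p_g(x)/\phi(x)$, where $p_g$ is the marginal density of $X$ under the mixture $N(0,1+gA)$ and $\phi$ is the $N(0,1)$ density. Then
\[
\E_\theta\log r_g(X)=\mathrm{KL}\bigl(N(\theta,1)\,\|\,\phi\bigr)-\mathrm{KL}\bigl(N(\theta,1)\,\|\,p_g\bigr)=\frac{\theta^2}{2}-\mathrm{KL}\bigl(N(\theta,1)\,\|\,p_g\bigr).
\]
A cleaner route uses the bound $\theta^2=O(1)$ only on average. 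Since $\sum_j\theta_{dj}^2=\norm{\theta_d}^2=O(\sqrt d)$ in both regimes, we have $\frac1d\sum_j\theta_{dj}^2\to0$. We therefore compare with the null: $\E_0\log r_g(X)=-\mathrm{KL}(\phi\,\|\,p_g)=:-2a(g)$. This is strictly negative for $g>0$, because $p_g\neq\phi$ whenever $\Pbb(A>0)>0$, which follows from $\mu>0$. Moreover $a$ is continuous on $[\eta,G]$, by dominated convergence using $0\le A\le A_+$. Hence $\inf_{[\eta,G]}a(g)=:2a_\eta>0$.

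Next we control the perturbation. For $|\theta|\le\Theta$ we have
\[
\bigl|\E_\theta\log r_g(X)-\E_0\log r_g(X)\bigr|\le C\theta^2,
\]
uniformly in $g\le G$. This holds because $\log r_g(x)$ is even in $x$ and bounded by $C(1+x^2)$, using
\[
-\tfrac12\log(1+GA_+)\le\log r_g(x)\le \frac{x^2GA_+}{2}.
\]
Then $\E_\theta f(X)$ is even and smooth in $\theta$, with second derivative bounded on $|\theta|\le\Theta$. Consequently
\[
\frac1d\sum_j\E\log r_g(X_{dj})\le-4a_\eta+\frac{C}{d}\norm{\theta_d}^2\le-3a_\eta
\]
for large $d$, uniformly in $g\in[\eta,G]$.

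\emph{Step 2: concentration at fixed $g$.} The summands $\log r_g(X_{dj})$ are independent, and their deviation from the mean is bounded by $C(1+X_{dj}^2)$, so they have uniformly bounded moments of every order. Rosenthal's (or Markov's) inequality with the $2N$-th moment gives
\[
\Pbb\Bigl(\Bigl|\frac1d\sum_j(\log r_g-\E\log r_g)\Bigr|>a_\eta\Bigr)=O(d^{-N}),
\]
uniformly in $g$.

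\emph{Step 3: uniformity over $g$.} This is the main obstacle: we must pass from fixed $g$ to the supremum over $[\eta,G]$ while keeping a polynomial tail. We use a Lipschitz bound. One has
\[
\partial_g\log r_g(x)=\frac{\partial_g r_g(x)}{r_g(x)},\qquad |\partial_g\psi_g(a,x)|\le C(1+x^2)\psi_g(a,x),
\]
so $|\partial_g\log r_g(x)|\le C(1+x^2)$ uniformly in $g\le G$. Thus $g\mapsto\frac1d\sum_j\log r_g(X_{dj})$ is Lipschitz with random constant $C\bigl(1+T_d/d\bigr)$. Since $\Pbb(T_d/d>3)=O(d^{-N})$ by the same moment bounds, this constant is at most $4C$ off an event of probability $O(d^{-N})$. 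Take a fixed finite grid of mesh $a_\eta/(4C)$ in $[\eta,G]$; its size does not depend on $d$. Apply Step 2 at each grid point and use a union bound. Off the exceptional events, every $g\in[\eta,G]$ satisfies
\[
\frac1d\sum_j\log r_g(X_{dj})\le-3a_\eta+a_\eta+a_\eta=-a_\eta,
\]
which gives \eqref{eq:app-sep-gl}.
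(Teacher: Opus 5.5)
Your proof is correct and follows the paper's argument essentially step for step. It has the same structure: a strictly negative null mean $-\mathrm{KL}(\phi\,\|\,p_g)$, made uniform by continuity on $[\eta,G]$; an $O(\norm{\theta_d}^2/d)$ mean shift from evenness plus a second-derivative bound; and fixed-$g$ Rosenthal bounds made uniform via a $d$-independent grid, a $C(1+x^2)$ Lipschitz envelope in $g$, and a union bound. The differences are cosmetic: you get the mean shift by differentiating the Gaussian kernel rather than bounding $\partial_x^2\log r_g$, and you apply the Lipschitz step to the uncentered average via $T_d/d$ rather than to the centered process.
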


\begin{proof}[\textbf{Proof of Lemma~\ref{lem:mult-fixed-g-separation}}]
For $Z\sim N(0,1)$, let
\[
K_0(g)=\E\log r_g(Z)
=-\operatorname{KL}\{N(0,1),m_g\},
\qquad
m_g(x)=\phi(x)r_g(x).
\]
The predictive density $m_g$ has variance $1+g\mu>1$ for $g>0$, so
$m_g\ne\phi$ and $K_0(g)<0$. Continuity on the compact interval
$[\eta,G]$ allows us to choose $a_\eta>0$ such that
\begin{equation}
\label{eq:app-KL-gap-gl}
\sup_{\eta\le g\le G}K_0(g)\le-3a_\eta.
\end{equation}

We first compare the means under the signal and null laws. Put
$F_g(a)=\E[\log r_g(Z+a)]$. If
$t_g(A)=gA/(1+gA)$ and $\E_{g,x}$ denotes expectation under weights
proportional to
\[
(1+gA)^{-1/2}e^{t_g(A)x^2/2},
\]
then
\[
\partial_x^2\log r_g(x)
=\E_{g,x}t_g(A)+x^2\Var_{g,x}\{t_g(A)\}
\le C(1+x^2)
\]
uniformly for $g\in[\eta,G]$. Since $\log r_g$ is even, $F_g'(0)=0$.
Taylor's theorem, over the bounded range containing all $\theta_{dj}$,
therefore gives
\[
\sup_{\eta\le g\le G}|F_g(a)-F_g(0)|\le Ca^2.
\]
It follows that
\begin{equation}
\label{eq:app-mean-shift-gl}
\sup_{\eta\le g\le G}
\left|
\frac1d\sum_{j=1}^d\E_{\theta_{dj}}\log r_g(X_{dj})-K_0(g)
\right|
\le\frac C d\norm{\theta_d}^2
=O(d^{-1/2}).
\end{equation}

It remains to control the centered empirical process
\[
S_d(g):=\frac1d\sum_{j=1}^d
\left[\log r_g(X_{dj})-\E\log r_g(X_{dj})\right].
\]
On $[\eta,G]$,
\[
|\log r_g(x)|+|\partial_g\log r_g(x)|\le C(1+x^2).
\]
Choose a fixed grid $\{g_k\}_{k=1}^K$ whose mesh is sufficiently small.
For every fixed integer $m$, Rosenthal's inequality gives
\[
\max_{1\le k\le K}\E|S_d(g_k)|^{2m}\le C_m d^{-m}.
\]
The same moment bound applied to the average Lipschitz envelope shows that,
outside an event of probability $O(d^{-m})$,
\[
|S_d(g)-S_d(g_k)|\le a_\eta/2
\]
whenever $g_k$ is the nearest grid point. Taking $m>N$ and applying the
union bound gives
\begin{equation}
\label{eq:app-unif-conc-gl}
\Pbb\left(\sup_{\eta\le g\le G}|S_d(g)|>a_\eta\right)
=O(d^{-N}).
\end{equation}
For all large $d$, \eqref{eq:app-KL-gap-gl} and
\eqref{eq:app-mean-shift-gl} imply that the mean empirical log likelihood is
at most $-2a_\eta$ uniformly in $g$. On the event in
\eqref{eq:app-unif-conc-gl}, it is therefore at most $-a_\eta$. This proves
\eqref{eq:app-sep-gl}.
\end{proof}

\begin{proof}[\textbf{Proof of Proposition~\ref{prop:bounded-mult-transfer}}]
Put
\[
b_0=\frac c2,
\qquad
U_d=\mu\sqrt d\,g,
\qquad
V_d=\E[U_d\mid X_d],
\qquad
\widetilde\delta_d(X_d)=\frac{V_d}{\sqrt d}X_d.
\]
We first identify the posterior of $U_d$, then compare the full posterior
mean with the radial rule $\widetilde\delta_d$, and finally evaluate its
risk.

\paragraph{Posterior localization and moments.}
After the change of variables $u=\mu\sqrt d\,g$, the posterior density of
$U_d$ is proportional to
\begin{equation}
\label{eq:app-U-posterior-gl}
u^{b_0-1}\ell\!\left(\frac{u}{\mu\sqrt d}\right)e^{L_d(u)}
\mathbf 1_{(0,\mu G\sqrt d)}(u),
\qquad
L_d(u)=\sum_{j=1}^d\log r_{u/(\mu\sqrt d)}(X_{dj}).
\end{equation}
We record a high-probability event on which the local approximation is
uniform enough for both posterior convergence and moment bounds. Fix
$\kappa=1/12$ and $\alpha=1/4$. After decreasing $\eta_0$ in
Lemma~\ref{lem:mult-local-likelihood}, choose
$K>\sup_d\E H_d+1$ and $0<\eta_1\le\eta_0$ such that
\[
\ell(g)\ge \ell(0)/2\quad(0\le g\le\eta_1),
\qquad
CK\eta_1\mu\le\frac1{16}.
\]
Let $\mathcal E_d$ be the intersection of
\[
|Z_d|\le d^\kappa,
\qquad
\left|\overline B_d+\frac{\mu^2}{4}\right|\le d^{-\alpha},
\qquad
H_d\le K,
\]
and the separation event in Lemma~\ref{lem:mult-fixed-g-separation} with
$\eta=\eta_1$. The fixed-order moment bounds used in the proofs of the two
lemmas, followed by Markov's inequality, give
\begin{equation}
\label{eq:app-good-event-gl}
\Pbb_{\theta_d}(\mathcal E_d^c)=O(d^{-6}).
\end{equation}
Indeed, the thresholds $d^\kappa$ and $d^{-\alpha}$ are polynomial, so one
only needs to take a sufficiently high fixed moment in Rosenthal's
inequality.

On $\mathcal E_d$, representation~\eqref{eq:app-L-representation-gl}
gives, for all large $d$,
\begin{equation}
\label{eq:app-tail-gl}
L_d(u)\le \frac{u}{2}Z_d-\frac18u^2,
\qquad 0\le u\le\eta_1\mu\sqrt d.
\end{equation}
Moreover, for every fixed $K_*<\infty$,
\begin{equation}
\sup_{0\le u\le K_*(1+Z_d^+)}
\left|L_d(u)-\left(\frac{u}{2}Z_d-\frac{u^2}{4}\right)\right| \le C\left\{d^{-\alpha}(1+d^{2\kappa})
+d^{-1/2}(1+d^{3\kappa})\right\}=o(1).
\label{eq:app-moving-gl}
\end{equation}
The displayed bound follows directly by substituting
$u\le K_*(1+Z_d^+)$ into~\eqref{eq:app-L-representation-gl}; our choices of
$\kappa$ and $\alpha$ make both powers of $d$ negative.

For $m=0,1,2$, write
\[
\mathcal N_{d,m}:=
\int_0^{\mu G\sqrt d}
u^{b_0-1+m}\ell\!\left(\frac{u}{\mu\sqrt d}\right)
e^{L_d(u)}\,du.
\]
Thus $V_d=\mathcal N_{d,1}/\mathcal N_{d,0}$ and
$\E[U_d^2\mid X_d]=\mathcal N_{d,2}/\mathcal N_{d,0}$.
To identify the first ratio, fix $M<\infty$ and work on
$\mathcal E_d\cap\{|Z_d|\le M\}$. For a fixed truncation point $R$,
Lemma~\ref{lem:mult-local-likelihood} and continuity of $\ell$ imply
\[
\int_0^R u^{b_0-1+m}\ell\!\left(\frac{u}{\mu\sqrt d}\right)e^{L_d(u)}du
-
\ell(0)\int_0^R u^{b_0-1+m}e^{Z_du/2-u^2/4}du
\to0
\]
in probability, for $m=0,1$. The two remaining ranges satisfy
\begin{align*}
\int_R^{\eta_1\mu\sqrt d}
 u^{b_0-1+m}\ell\!\left(\frac{u}{\mu\sqrt d}\right)e^{L_d(u)}du
&\le C\int_R^\infty u^{b_0-1+m}e^{Mu/2-u^2/8}du,\\
\int_{\eta_1\mu\sqrt d}^{\mu G\sqrt d}
 u^{b_0-1+m}\ell\!\left(\frac{u}{\mu\sqrt d}\right)e^{L_d(u)}du
&\le C d^{(b_0+m)/2}e^{-a_{\eta_1}d},
\end{align*}
by~\eqref{eq:app-tail-gl} and Lemma~\ref{lem:mult-fixed-g-separation},
respectively. The denominator is bounded away from zero on this event by
integrating over $0\le u\le1$. Letting first $d\to\infty$ and then
$R\to\infty$ gives, uniformly for $|Z_d|\le M$,
\[
V_d-
\frac{\int_0^\infty u^{b_0}e^{Z_du/2-u^2/4}\,du}
{\int_0^\infty u^{b_0-1}e^{Z_du/2-u^2/4}\,du}
\to0.
\]
Since $\{Z_d\}$ is tight, the restriction $|Z_d|\le M$ can be removed by
letting $M\to\infty$. Hence
\begin{equation}
\label{eq:app-V-gl}
V_d-h_c(Z_d)\to0
\qquad\text{in probability}.
\end{equation}

We next obtain the moment control needed to pass from
\eqref{eq:app-V-gl} to risk. Let
\[
I_b(z)=\int_0^\infty u^{b-1}e^{zu/2-u^2/4}\,du.
\]
For every $b>0$, there are $K_*>1$ and $C<\infty$ such that, uniformly in
$z\in\mathbb R$,
\begin{align}
\int_0^{K_*(1+z_+)}u^{b-1}e^{zu/2-u^2/4}\,du
&\ge C^{-1}I_b(z),
\label{eq:app-integral-lower-gl}\\
\int_{K_*(1+z_+)}^\infty u^{b+1}e^{zu/2-u^2/8}\,du
&\le C I_b(z).
\label{eq:app-integral-tail-gl}
\end{align}
For $z\ge0$ these inequalities follow by completing the square and taking
$K_*$ beyond the mode; for $z<0$ they follow after the scaling
$v=(1+z_-)u$.

Take $b=b_0$. On $\mathcal E_d$, the moving approximation and
\eqref{eq:app-integral-lower-gl} give
\begin{equation}
\label{eq:app-denominator-lower-gl}
\mathcal N_{d,0}\ge C^{-1}I_{b_0}(Z_d).
\end{equation}
To bound $\mathcal N_{d,2}$, split its integral at
$K_*(1+Z_d^+)$ and $\eta_1\mu\sqrt d$. The three pieces are bounded by
\[
C I_{b_0+2}(Z_d),
\qquad
C I_{b_0}(Z_d),
\qquad
C d^{(b_0+2)/2}e^{-a_{\eta_1}d},
\]
using~\eqref{eq:app-moving-gl},
\eqref{eq:app-tail-gl}--\eqref{eq:app-integral-tail-gl}, and fixed-$g$
separation, respectively. Since
$I_{b_0}(z)\ge C(1+z_-)^{-b_0}$, the last term is negligible after division
by~\eqref{eq:app-denominator-lower-gl}. Therefore, on $\mathcal E_d$,
\begin{align}
\E[U_d^2\mid X_d]
&\le C\left\{1+\frac{I_{b_0+2}(Z_d)}{I_{b_0}(Z_d)}\right\}\notag\\
&=C\{1+c+Z_dh_c(Z_d)\}
\le C\{1+(Z_d^+)^2\},
\label{eq:app-postmom-pointwise-gl}
\end{align}
where the identity follows by integration by parts and the last inequality
uses Lemma~\ref{lem:profile}(ii). On $\mathcal E_d^c$, the support bound
$g\le G$ gives $\E[U_d^2\mid X_d]\le\mu^2G^2d$. Consequently,
\begin{align*}
\E\left[\{\E[U_d^2\mid X_d]\}^4\right]
&\le C\E\left[1+(Z_d^+)^8\right]
+C d^4\Pbb(\mathcal E_d^c),
\end{align*}
so~\eqref{eq:app-Zmom-gl} and~\eqref{eq:app-good-event-gl} yield
\begin{equation}
\label{eq:app-postmom-L4-gl}
\sup_d\E_{\theta_d}
\left[\{\E[U_d^2\mid X_d]\}^4\right]<\infty.
\end{equation}
Jensen's
inequality consequently yields
\begin{equation}
\label{eq:app-V-L8-gl}
\sup_d\E|V_d|^8<\infty.
\end{equation}
The same denominator bound and fixed-$g$ separation give, on
$\mathcal E_d$,
\begin{equation}
\label{eq:app-posterior-tail-gl}
\Pi_d(g>\eta_1\mid X_d)
\le
\frac{C d^{b_0/2}e^{-a_{\eta_1}d}}{I_{b_0}(Z_d)}
\le C d^C e^{-ad}
\end{equation}
for some $a>0$, where the last inequality uses
$|Z_d|\le d^\kappa$ and $I_{b_0}(z)\ge C(1+z_-)^{-b_0}$.

\paragraph{Asymptotic radiality.}
By~\eqref{eq:app-qg-exp}, for $g\le\eta_1$,
\[
|q_g(X_{dj})-\mu g|
\le Cg^2(1+X_{dj}^2)e^{C\eta_1X_{dj}^2}.
\]
For $g>\eta_1$, both $q_g(X_{dj})$ and $\mu g$ are uniformly bounded.
Using~\eqref{eq:app-exact-postmean-gl} and splitting the posterior at
$\eta_1$ therefore gives, on $\mathcal E_d$,
\begin{equation}
|\delta_{dj}-\widetilde\delta_{dj}|
\le \frac{C}{d}|X_{dj}|(1+X_{dj}^2)e^{C\eta_1X_{dj}^2}
\ \E[U_d^2\mid X_d] +C|X_{dj}|\Pi_d(g>\eta_1\mid X_d).
\label{eq:app-coordinate-difference-gl}
\end{equation}
After decreasing $\eta_1$ if necessary, boundedness of the coordinates of
$\theta_d$ and Gaussian exponential moments imply
\begin{equation}
\label{eq:app-expavg-L2-gl}
\sup_d\E\left[
\left\{\frac1d\sum_{j=1}^d
X_{dj}^2(1+X_{dj}^2)^2e^{2C\eta_1X_{dj}^2}\right\}^2
\right]<\infty.
\end{equation}
Squaring the first term in~\eqref{eq:app-coordinate-difference-gl}, summing
over $j$, and applying Cauchy--Schwarz gives
\begin{align*}
&\frac{C}{d^2}\E\left[
\{\E[U_d^2\mid X_d]\}^2
\sum_{j=1}^d X_{dj}^2(1+X_{dj}^2)^2e^{2C\eta_1X_{dj}^2}
\right]\\
&\qquad\le \frac{C}{d^2}
\left[\E\{\E[U_d^2\mid X_d]\}^4\right]^{1/2}
\left[\E\left\{\sum_{j=1}^d
X_{dj}^2(1+X_{dj}^2)^2e^{2C\eta_1X_{dj}^2}\right\}^2\right]^{1/2}
=O(d^{-1})
\end{align*}
by~\eqref{eq:app-postmom-L4-gl} and~\eqref{eq:app-expavg-L2-gl}.
The second term is $o(d^{-1})$ by
\eqref{eq:app-posterior-tail-gl} and $T_d=O(d)$ on $\mathcal E_d$. Finally,
on $\mathcal E_d^c$ the squared discrepancy is bounded by $CT_d$, so
\[
\E[T_d\mathbf 1_{\mathcal E_d^c}]
\le (\E T_d^2)^{1/2}\Pbb(\mathcal E_d^c)^{1/2}=o(d^{-1}).
\]
Thus
\begin{equation}
\label{eq:app-nonradial-gl}
\E_{\theta_d}\norm{\delta_d-\widetilde\delta_d}^2=O(d^{-1}).
\end{equation}

\paragraph{Risk limit.}
Equation~\eqref{eq:app-V-gl}, continuity of $h_c$, and
$Z_d\Rightarrow Z_\beta$ give
$V_d\Rightarrow h_c(Z_\beta)$. The moment bound
\eqref{eq:app-V-L8-gl} supplies uniform integrability, and
$T_d/d=1+Z_d/\sqrt d\to1$ in probability. Therefore
\begin{equation}
\label{eq:app-radial-norm-gl}
\E\norm{\widetilde\delta_d}^2
=\E\left[V_d^2\frac{T_d}{d}\right]
\longrightarrow \E\left[h_c(Z_\beta)^2\right].
\end{equation}
Moreover,
\[
\theta_d^\top\widetilde\delta_d
=\frac{V_d}{\sqrt d}\norm{\theta_d}^2
+\frac{V_d}{\sqrt d}\theta_d^\top\varepsilon_d.
\]
The first term converges in expectation to
$\beta\E h_c(Z_\beta)$, while Cauchy--Schwarz and
\eqref{eq:app-V-L8-gl} give
\[
\left|\E\left(\frac{V_d}{\sqrt d}\theta_d^\top\varepsilon_d\right)\right|
\le \frac{(\E V_d^2)^{1/2}}{\sqrt d}\norm{\theta_d}=o(1).
\]
Consequently,
\[
\risk(\theta_d,\widetilde\delta_d)-\norm{\theta_d}^2
\longrightarrow
\E\left[h_c(Z_\beta)^2\right]-2\beta\E h_c(Z_\beta)
=L_c(\beta).
\]
Letting $e_d=\delta_d-\widetilde\delta_d$ and using
\eqref{eq:app-nonradial-gl},
\begin{align*}
|\risk(\theta_d,\delta_d)-\risk(\theta_d,\widetilde\delta_d)|
&\le \E\norm{e_d}^2
+2\{\E\norm{e_d}^2\}^{1/2}
\{\risk(\theta_d,\widetilde\delta_d)\}^{1/2}\\
&=o(1),
\end{align*}
because $\risk(\theta_d,\widetilde\delta_d)=O(\sqrt d)$. This proves the
critical-regime assertion. If $\norm{\theta_d}^2\to\nu$, then $\beta=0$
and Lemma~\ref{lem:profile}(iii) gives $L_c(0)=c$, so
$\risk(\theta_d,\delta_d)\to\nu+c$.
\end{proof}

\subsubsection{Auxiliary calculations for priors beyond a single global scale}

This subsection provides the auxiliary calculations for Section~\ref{subsec:beyondglobal}. Proposition~\ref{prop:bounded-mult-transfer} gives exact weak-signal and critical-regime transfer for bounded coordinate multipliers. The calculations below use that case as a benchmark and then identify how unbounded local tails, changing model-size probabilities, and dimension-dependent allocation vectors modify the relevant near-zero problem.

The bounded-multiplier proof has two ingredients. First, the marginal one-coordinate predictive ratio satisfies $\log r_g(x)=\mu\alpha_1(x)g+B_2(x)g^2+\rho_g(x)$, where $\mu=\E A$ and $\rho_g$ obeys the uniform bound in \eqref{eq:app-rho-gl}. Second, after summing over coordinates, the posterior for the common variance is governed by the local variable $u=\mu\sqrt d\,g$. This is what makes the multiplier distribution disappear from the limiting experiment. Priors outside the theorem fail to fit this template for one of three reasons: the local multiplier may have too much tail mass, the model-size probability may change with $d$ or be learned from the data, or a dimension-dependent allocation vector may distribute the common scale unevenly across coordinates.

\paragraph{Light-tailed unbounded local scales.}
Let $A\ge0$ denote the local variance multiplier in a global--local normal mixture. If $\E A<\infty$, then
\[
\frac{1}{g}\E\left[\frac{gA}{1+gA}\right]
=
\E\left[\frac{A}{1+gA}\right]
\to \E A,
\quad g\downarrow0.
\]
Thus the first-order effective scale is $g\E A+o(g)$, exactly the same scale as in the bounded-multiplier argument. This supports the same common-scale classification: a prior with positive finite density on $\tau$ has the SD-scale geometry, while a prior with positive finite density on $g=\tau^2$ has the variance-scale geometry. What remains outside the present theorem is not the first-order scale calculation, but the uniform posterior-risk control. With unbounded $A$, very large local multipliers can interact with large observations in the likelihood tails; a full theorem would need conditions that make those tail terms negligible uniformly over the critical-regime experiment.

\paragraph{Horseshoe-type local scales.}

For the horseshoe local-scale distribution, $A=\lambda^2$ with $\lambda\sim C^+(0,1)$, the finite-mean calculation is no longer valid because $\E A=\infty$. The relevant small-scale average can nevertheless be computed exactly:
\[
\E\left[\frac{g\lambda^2}{1+g\lambda^2}\right]
=
\frac{2}{\pi}\int_0^\infty
\frac{g\lambda^2}{1+g\lambda^2}\frac{1}{1+\lambda^2}\,d\lambda
=
\frac{\sqrt g}{1+\sqrt g}
=
\frac{\tau}{1+\tau}.
\]
Equivalently, if $K_j=(1+\tau^2\lambda_j^2)^{-1}$ is the usual shrinkage factor, then
\[
\E[1-K_j\mid \tau]=\frac{\tau}{1+\tau}\sim \tau,
\quad \tau\downarrow0.
\]
Hence the heavy local tail moves the near-zero scale from the variance coordinate $g$ to the SD coordinate $\tau$. This is why a half-Cauchy or half-$t$ prior on the global $\tau$ is aligned with the SD-scale geometry. At the same time, the local Cauchy tail is part of the leading local near-zero problem, so the exact bounded-multiplier constants $L_c(\beta)$ are not claimed for the full horseshoe risk.

This distinction also clarifies scale calibration. The exponent of $p(\tau)$ near zero specifies the local shape of the common scale, while the numerical width of that prior controls the prior expected amount of nonnegligible shrinkage. For the horseshoe, the prior mean of the total effective nonzero mass satisfies
\[
\sum_{j=1}^d \E[1-K_j\mid \tau]
=
 d\frac{\tau}{1+\tau}
\sim d\tau,
\quad \tau\downarrow0,
\]
so calibration of the width of the global $\tau$ prior is a separate sparsity-calibration problem.

\paragraph{Spike-and-slab priors.}

The fixed-$q$ point-mass spike-and-slab is already covered by Proposition~\ref{prop:bounded-mult-transfer} and its proof in Appendix~\ref{app:coord-mult-proof}, through the specialization $A_j=Z_j\sim\operatorname{Bernoulli}(q)$. The case requiring a separate calculation is the sparse regime in which the model-size probability is no longer fixed. If $q=q_d\to0$, the one-coordinate marginal ratio is
\[
r_g^{(q_d)}(x)
=
1+q_d\left\{(1+g)^{-1/2}
\exp\left(\frac{x^2g}{2(1+g)}\right)-1\right\}.
\]

For $0\le g\le\eta$ with $\eta>0$ sufficiently small, put
\[
s_g(x)=(1+g)^{-1/2}
\exp\left\{\frac{x^2g}{2(1+g)}\right\}.
\]
A second-order Taylor expansion in $g$, with the derivatives controlled by a
Gaussian exponential envelope, gives uniformly in $x$,
\[
|s_g(x)-1-g(x^2-1)/2|
\le Cg^2(1+x^4)e^{Cgx^2},
\qquad
|s_g(x)-1|
\le Cg(1+x^2)e^{Cgx^2}.
\]
Since $r_g^{(q_d)}(x)=1+q_d\{s_g(x)-1\}$ and
$|\log(1+y)-y|\le Cy^2$ uniformly for
$y\ge-\{1-(1+\eta)^{-1/2}\}$, it follows that
\begin{align*}
\log r_g^{(q_d)}(x)
&=
\frac{q_dg}{2}(x^2-1)+R_{d,g}(x),\\
|R_{d,g}(x)|
&\le
Cq_dg^2(1+x^4)e^{Cgx^2}
+Cq_d^2g^2(1+x^2)^2e^{2Cgx^2},
\end{align*}
uniformly in $x$, $q_d\in[0,1]$, and $0\le g\le\eta$.

Accordingly, the near-zero scale contains the product $q_dg$. The slab-scale exponent alone cannot determine leading risk when $q_d$ is triangular, assigned a hyperprior, or estimated empirically: the model-size prior changes how much of the common slab scale is actually exposed to the $d$ coordinates.

\paragraph{Dirichlet--Laplace and R2-D2 priors.}

Dirichlet--Laplace priors separate a common total scale from a dimension-dependent allocation vector. In the usual normal-means normalization,
\[
\tau\sim\operatorname{Gamma}(d a_d,1/2),
\]
so
\[
p_d(\tau)\propto \tau^{d a_d-1}
\quad (\tau\downarrow0).
\]
The common-scale index is therefore $c_d=d a_d$; the common choice $a_d=1/d$ gives the SD-scale geometry $c=1$. This statement classifies only the common total scale. The Dirichlet weights $(\phi_1,\ldots,\phi_d)$ determine how that total scale is allocated across coordinates and therefore form an additional dimension-dependent part of the risk problem.

For R2-D2 and related $R^2$-based shrinkage priors, apply Proposition~\ref{prop:commonhyperapp}(iii) to the common scale $\gamma$, with $a=A$ and $b=B$. If $R^2=w_0\gamma^2/(1+w_0\gamma^2)$ and $R^2\sim\operatorname{Beta}(A,B)$, then
\[
p(\gamma)
=
\frac{2w_0^A}{\mathrm B(A,B)}\gamma^{2A-1}(1+w_0\gamma^2)^{-(A+B)},
\]
so the common-scale index is $c=2A$. Hence $A=1/2$ gives the SD-scale class and $A=1$ gives the variance-scale class. As with Dirichlet--Laplace, this classification does not remove the need to analyze the allocation prior or model-fit component.

\section{Further structure of the critical-regime gap}
\label{app:shellgap}
This appendix records several additional facts about
\[
\Delta(\beta):=L_1(\beta)-L_2(\beta),
\]
the critical-regime comparison between the SD-flat rule and the harmonic benchmark. These facts sharpen the interpretation of the critical regime. In particular, they yield the upper-tail expansion for $\Delta(\beta)$ used in the main text and show that the harmonic benchmark has smaller centered risk for all sufficiently large signals in the critical regime.

Lemma~\ref{lem:profile}(iii) gives, for every $c>0$ and every $\beta\ge0$,
\[
q_c(\beta)=c+\beta m_c(\beta),
\qquad
L_c(\beta)=c-\beta m_c(\beta).
\]
Consequently,
\[
\Delta(\beta)=-1+\beta\,\mathcal D(\beta),
\qquad
\mathcal D(\beta):=m_2(\beta)-m_1(\beta).
\]

\begin{proposition}[Exact simplification of the critical-regime constants]
\label{prop:deltasimplify}
For every $c>0$ and every $\beta\ge 0$,
\[
q_c(\beta)=c+\beta m_c(\beta)
\quad\text{and hence}\quad
L_c(\beta)=c-\beta m_c(\beta).
\]
Therefore
\[
\Delta(\beta)=-1+\beta\,\mathcal D(\beta),
\quad
\mathcal D(\beta):=m_2(\beta)-m_1(\beta).
\]
\end{proposition}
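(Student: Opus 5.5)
This is essentially a restatement of Lemma~\ref{lem:profile}(iii), so the plan is to recall that argument and then do the algebra for the difference. The first step is the Riccati identity \eqref{eq:ricatti_id}, $h_c'(z)=\tfrac c2+\tfrac z2 h_c(z)-\tfrac12 h_c(z)^2$. It follows from the relation $I_b'=\tfrac12 I_{b+1}$ together with the integration-by-parts recursion $I_{b+2}(z)=zI_{b+1}(z)+2bI_b(z)$, where $b=c/2>0$ is what kills the boundary term at $u=0$. The second step is Gaussian integration by parts against $Z_\beta\sim N(\beta,2)$. Since the density satisfies $(z-\beta)\varphi_\beta=-2\varphi_\beta'$, we get $\E[(Z_\beta-\beta)h_c(Z_\beta)]=2\E[h_c'(Z_\beta)]$.

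The only technical point is justifying that identity. We need integrability of $Z_\beta h_c(Z_\beta)$, $h_c'(Z_\beta)$ and $h_c(Z_\beta)^2$, and we need the boundary terms $h_c\varphi_\beta|_{-A}^{A}$ to vanish. Both follow from the growth bounds of Lemma~\ref{lem:profile}(ii): $0<h_c(z)\le \sqrt c+z_+$, hence $h_c'(z)=O(1+z^2)$, while $\varphi_\beta$ has Gaussian tails. I do not expect this to be a real obstacle, since it is already handled in Lemma~\ref{lem:profile}.

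Substituting the Riccati identity gives
\[
\E[Z_\beta h_c]-\beta m_c(\beta)=c+\E[Z_\beta h_c]-q_c(\beta),
\]
so $q_c(\beta)=c+\beta m_c(\beta)$. Inserting this into the definition \eqref{eq:L_func}, $L_c=q_c-2\beta m_c$, yields $L_c(\beta)=c-\beta m_c(\beta)$.

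Finally, taking $c=1$ and $c=2$ gives
\[
\Delta(\beta)=L_1(\beta)-L_2(\beta)=(1-\beta m_1(\beta))-(2-\beta m_2(\beta))=-1+\beta\{m_2(\beta)-m_1(\beta)\}=-1+\beta\mathcal D(\beta).
\]
This is the claimed formula.
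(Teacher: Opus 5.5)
Your proposal is correct and matches the paper's proof. It combines the Riccati identity \eqref{eq:ricatti_id} with the Gaussian integration-by-parts identity $\E[(Z_\beta-\beta)h_c(Z_\beta)]=2\E[h_c'(Z_\beta)]$ (justified by the growth bounds of Lemma~\ref{lem:profile}(ii)) to get $q_c(\beta)=c+\beta m_c(\beta)$, and then obtains $L_c$ and $\Delta$ by the same algebra.
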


\begin{proof}[\textbf{Proof of Proposition~\ref{prop:deltasimplify}}]
Equation \eqref{eq:ricatti_id} from Lemma~\ref{lem:profile} gives
\[
h_c'(z)=\frac c2+\frac z2 h_c(z)-\frac12 h_c(z)^2.
\]
Since $Z_\beta\sim N(\beta,2)$, Stein's identity yields
\[
\E[(Z_\beta-\beta)h_c(Z_\beta)]=2\E[h_c'(Z_\beta)].
\]
Substituting the Riccati formula into the right-hand side gives
\[
\E[Z_\beta h_c(Z_\beta)]-\beta m_c(\beta)
=
c+\E[Z_\beta h_c(Z_\beta)]-q_c(\beta),
\]
which simplifies to
\[
q_c(\beta)=c+\beta m_c(\beta).
\]
The remaining identities are immediate.
\end{proof}

\begin{proposition}[Pointwise ordering]
\label{prop:pointwisegap}
For every $z\in\R$,
\[
h_2(z)>h_1(z).
\]
Consequently,
\[
\mathcal D(\beta)>0
\quad\text{for every }\beta\ge 0.
\]
\end{proposition}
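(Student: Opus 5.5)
The plan is a monotone-likelihood-ratio comparison between the two laws $q_{1,z}$ and $q_{2,z}$. This is the same covariance device used in the proof of Lemma~\ref{lem:propercompare}, applied here with an increasing weight instead of a decreasing one.

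Fix $z\in\R$ and write $w_z(u):=\exp(zu/2-u^2/4)$. By definition, $q_{1,z}$ has density proportional to $u^{-1/2}w_z(u)$ and $q_{2,z}$ has density proportional to $w_z(u)=u^{1/2}\cdot u^{-1/2}w_z(u)$. So $q_{2,z}$ is $q_{1,z}$ reweighted by the strictly increasing function $W(u):=u^{1/2}$. With $U\sim q_{1,z}$, this gives
\[
h_2(z)=\frac{\E_{q_{1,z}}[U\,W(U)]}{\E_{q_{1,z}}[W(U)]},
\qquad
h_1(z)=\E_{q_{1,z}}[U].
\]
All moments involved are finite and the denominator is strictly positive, because of the Gaussian factor in $w_z$ and integrability of $u^{-1/2}$ at zero. (These are the same integrals $I_b(z)$ used earlier.) Hence
\[
h_2(z)-h_1(z)=\frac{\operatorname{Cov}_{q_{1,z}}(U,W(U))}{\E_{q_{1,z}}[W(U)]}.
\]

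Next I will show the covariance is strictly positive. Let $U'$ be an independent copy of $U$. Then
\[
\operatorname{Cov}(U,W(U))=\tfrac12\E\bigl[(U-U')\{W(U)-W(U')\}\bigr].
\]
The integrand is nonnegative because $W$ is increasing. It is strictly positive whenever $U\neq U'$, because $W$ is strictly increasing. Since $q_{1,z}$ has a density on $(0,\infty)$, $\Pbb(U\neq U')=1$, so the expectation is strictly positive. This gives $h_2(z)>h_1(z)$ for every $z$.

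For the consequence, Lemma~\ref{lem:profile}(ii) gives $0<h_c(z)\le \sqrt c+z_+$. Both $h_1(Z_\beta)$ and $h_2(Z_\beta)$ are therefore integrable for $Z_\beta\sim N(\beta,2)$. It follows that $\mathcal D(\beta)=\E[h_2(Z_\beta)-h_1(Z_\beta)]$ is the expectation of an almost surely strictly positive integrable variable, hence $\mathcal D(\beta)>0$.

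I expect no real obstacle. The only point needing care is strictness, which the independent-copy argument handles because $q_{1,z}$ is nondegenerate.
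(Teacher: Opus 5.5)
Your proposal is correct and follows the paper's route: the paper also notes that $q_{2,z}/q_{1,z}\propto u^{1/2}$ is strictly increasing, invokes monotone likelihood ratio ordering to compare the means, and then averages over $Z_\beta$. Your independent-copy covariance argument, together with the integrability check from Lemma~\ref{lem:profile}(ii), simply writes out the strict-MLR and averaging steps that the paper asserts without detail.
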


\begin{proof}[\textbf{Proof of Proposition~\ref{prop:pointwisegap}}]
For fixed $z$, let
\[
q_{c,z}(u)\propto u^{c/2-1}\exp\!\left(\frac{zu}{2}-\frac{u^2}{4}\right),
\quad u>0,
\]
so that $h_c(z)=\E_{q_{c,z}}[U]$. For $c=1$ and $c=2$,
\[
\frac{q_{2,z}(u)}{q_{1,z}(u)}\propto u^{1/2},
\]
which is strictly increasing in $u$. Thus $q_{2,z}$ dominates $q_{1,z}$ in monotone likelihood
ratio order, and the expectation of the increasing function $u\mapsto u$ is strictly larger under
$q_{2,z}$ than under $q_{1,z}$. Averaging over $Z_\beta$ gives $\mathcal D(\beta)>0$.
\end{proof}

\begin{proposition}[Large-$z$ expansion of the critical-regime profiles]
\label{prop:hlarge}
For every fixed $c>0$,
\[
h_c(z)
=
z+\frac{c-2}{z}+\frac{(c-2)(4-c)}{z^3}+O(z^{-5}),
\quad z\to\infty.
\]
In particular,
\[
h_2(z)-h_1(z)=z^{-1}+3z^{-3}+O(z^{-5}).
\]
\end{proposition}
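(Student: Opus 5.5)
The plan is to avoid Laplace asymptotics of $I_b$ altogether and use the Riccati equation \eqref{eq:ricatti_id} of Lemma~\ref{lem:profile}(i),
\[
h_c' = \tfrac c2 + \tfrac z2 h_c - \tfrac12 h_c^2,
\]
as a fixed-point equation for the large-$z$ expansion.

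\textbf{Step 1 (leading order).} I would first show $h_c(z)=z+O(z^{-1})$. For $z>0$, complete the square in the exponent: $zu/2-u^2/4=-(u-z)^2/4+z^2/4$. So $q_{c,z}$ is a Gaussian $N(z,2)$ tilted by $u^{c/2-1}$ and restricted to $(0,\infty)$. The restriction removes mass that is $O(e^{-z^2/8})$. A direct expansion of $u^{c/2-1}=z^{c/2-1}(1+(u-z)/z)^{c/2-1}$ on $|u-z|\le z/2$ gives $\E U = z+O(1/z)$. It also gives an asymptotic series in odd powers $z^{-1},z^{-3},\dots$: the Gaussian moments of $u-z$ produce only even powers of $(u-z)$, each divided by the matching power of $z$.

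From this, $h_c$ admits an asymptotic expansion
\[
h_c(z)=z+a_1z^{-1}+a_3z^{-3}+O(z^{-5}),
\]
with $h_c'$ expanding as the termwise derivative. The derivative claim holds because $h_c'$ is itself a ratio of the same Gaussian-type integrals; equivalently, $h_c'=\tfrac12\Var_{q_{c,z}}(U)$, which has its own Laplace series. Alternatively, once $h_c=z+O(1/z)$ is known, the Riccati equation itself shows $h_c'$ is controlled, and the coefficients can be identified without differentiating an asymptotic series.

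\textbf{Step 2 (matching coefficients).} Write $h=z+w$. Substituting into the Riccati equation gives
\[
1+w'=\tfrac c2-\tfrac12 zw-\tfrac12 w^2,
\qquad\text{so}\qquad
w=\frac{c-2-2w'-w^2}{z}.
\]
Insert $w=a_1z^{-1}+a_3z^{-3}+O(z^{-5})$, so that $w'=-a_1z^{-2}+O(z^{-4})$ and $w^2=a_1^2z^{-2}+O(z^{-4})$. Then
\[
w=\frac{c-2}{z}+\frac{2a_1-a_1^2}{z^3}+O(z^{-5}).
\]
Hence $a_1=c-2$ and $a_3=a_1(2-a_1)=(c-2)(4-c)$, which is the claim. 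For the difference $h_2-h_1$: the case $c=2$ gives $a_1=a_3=0$, and $c=1$ gives $a_1=-1$, $a_3=-3$. Therefore
\[
h_2-h_1=z^{-1}+3z^{-3}+O(z^{-5}).
\]

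\textbf{Step 3 (the $O(z^{-5})$ remainder).} To make the remainder rigorous without assuming a full series, I would bootstrap. Suppose $w=O(z^{-1})$ and $w'=O(z^{-2})$. The relation $w=(c-2-2w'-w^2)/z$ then gives $w=(c-2)/z+O(z^{-3})$. Feeding this back requires control of $w'$ at the next order. That control comes from differentiating the identity $w'=(\ldots)'$, or more simply from the Laplace expansion of $\Var_{q_{c,z}}(U)=2h_c'$.

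\textbf{Main obstacle.} The hard part is justifying the derivative asymptotics: bounds of the form $h_c'(z)=1+O(z^{-2})$, and at the next order $1-(c-2)z^{-2}+O(z^{-4})$. I would handle this by computing the first four central moments of $q_{c,z}$ directly from the tilted-Gaussian representation, with explicit Taylor remainders on $|u-z|\le z/2$ and exponentially small tails elsewhere. That yields the series for $h_c$ and $h_c'$ to the needed order. The Riccati step then serves only as a clean way to identify the coefficients.
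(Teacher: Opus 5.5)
Your argument is correct, and it identifies the coefficients differently from the paper. The paper does all the work in what is your Step 1. It completes the square, discards $u<z/2$, and expands $(1+w/z)^{a-1}$ to fifth order against $e^{-w^2/4}$. This gives explicitly $I_a(z)=2\sqrt\pi\,e^{z^2/4}z^{a-1}\bigl[1+(a-1)(a-2)z^{-2}+\tfrac12(a-1)(a-2)(a-3)(a-4)z^{-4}+O(z^{-6})\bigr]$ with $a=c/2$, and the paper then divides the series for $I_{a+1}$ by that for $I_a$.

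You use the same tilted-Gaussian expansion only to show that an odd-power expansion exists; for that, vanishing of the odd Gaussian moments suffices. You then let the Riccati identity $w=(c-2-2w'-w^2)/z$ fix the coefficients, and your values $a_1=c-2$, $a_3=a_1(2-a_1)=(c-2)(4-c)$ are right. This saves the binomial and Gaussian-moment arithmetic and the series division. As you set it up, however, it costs derivative asymptotics for $h_c'$, obtained from a further expansion of $\Var_{q_{c,z}}(U)$. That is more bookkeeping than the paper's direct ratio.

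The cost is avoidable. The Riccati equation gives $w'$ algebraically in terms of $w$, so you can integrate rather than differentiate. From $w=a_1z^{-1}+a_3z^{-3}+O(z^{-5})$, the equation gives $w'=(c/2-1-a_1/2)-\tfrac12(a_3+a_1^2)z^{-2}+O(z^{-4})$. Since $w\to0$, the constant must vanish. Then $w(z)=-\int_z^\infty w'(s)\,ds=\tfrac12(a_3+a_1^2)z^{-1}+O(z^{-3})$, which yields $a_3=2a_1-a_1^2$ with no expansion of $h_c'$ needed.

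One step further would make your route genuinely independent of the paper's fifth-order Laplace expansion. Set $\tilde h:=z+a_1z^{-1}+a_3z^{-3}$ and $v=h_c-\tilde h$. Then $v$ solves $v'=-\tfrac12(h_c+\tilde h-z)v-r$ with $r=O(z^{-4})$. This equation is damped at a rate of order $z$ once $h_c\ge z/2$. A Duhamel bound then gives $v=O(z^{-5})$ from the Riccati equation plus only a crude a priori lower bound on $h_c$.
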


\begin{proof}[\textbf{Proof of Proposition~\ref{prop:hlarge}}]
Write $a:=c/2$ and recall
\[
I_a(z)=\int_0^\infty u^{a-1}\exp\!\left(\frac{zu}{2}-\frac{u^2}{4}\right)\,du,
\quad
h_c(z)=\frac{I_{a+1}(z)}{I_a(z)}.
\]
Completing the square gives
\[
I_a(z)=e^{z^2/4}\int_0^\infty u^{a-1}e^{-(u-z)^2/4}\,du.
\]
Split the integral at $u=z/2$. On $(0,z/2)$,
\[
e^{-(u-z)^2/4}\le e^{-z^2/16},
\]
so
\[
\int_0^{z/2}u^{a-1}e^{-(u-z)^2/4}\,du = O\!\left(z^a e^{-z^2/16}\right).
\]
Thus the lower piece is exponentially negligible.

On $[z/2,\infty)$, write $u=z+w$. Then
\[
I_a(z)
=
e^{z^2/4} z^{a-1}\int_{-z/2}^{\infty}
\left(1+\frac{w}{z}\right)^{a-1}e^{-w^2/4}\,dw.
\]
Fix $\eta\in(1/2,1)$. The contribution of $|w|>z^\eta$ is exponentially small, so it suffices to
consider $|w|\le z^\eta$, where $|w|/z\le z^{\eta-1}\to 0$. A fifth-order Taylor expansion yields
\[
\left(1+\frac{w}{z}\right)^{a-1}
=
\sum_{k=0}^5 \binom{a-1}{k}\left(\frac{w}{z}\right)^k
+
O\!\left(\frac{|w|^6}{z^6}\right),
\]
uniformly on $|w|\le z^\eta$. Because the omitted tails are exponentially small, we may integrate
termwise over $\mathbb R$ and use the Gaussian moments
\[
\int_{\mathbb R}e^{-w^2/4}\,dw=2\sqrt\pi,
\quad
\int_{\mathbb R}w^2 e^{-w^2/4}\,dw=4\sqrt\pi,
\quad
\int_{\mathbb R}w^4 e^{-w^2/4}\,dw=24\sqrt\pi.
\]
The odd moments vanish, so
\[
I_a(z)
=
2\sqrt\pi\, e^{z^2/4} z^{a-1}
\left[
1+\frac{(a-1)(a-2)}{z^2}
+\frac{(a-1)(a-2)(a-3)(a-4)}{2z^4}
+O(z^{-6})
\right].
\]
Replacing $a$ by $a+1$ gives
\[
I_{a+1}(z)
=
2\sqrt\pi\, e^{z^2/4} z^{a}
\left[
1+\frac{a(a-1)}{z^2}
+\frac{a(a-1)(a-2)(a-3)}{2z^4}
+O(z^{-6})
\right].
\]
Taking the ratio and simplifying the coefficients yields
\[
h_c(z)
=
z+\frac{c-2}{z}+\frac{(c-2)(4-c)}{z^3}+O(z^{-5}),
\]
as claimed. The formula for $h_2-h_1$ is immediate.
\end{proof}

\begin{proof}[\textbf{Proof of Corollary~\ref{cor:deltatail}}]
The risk-gap convergence follows by subtracting the two centered limits in Theorem~\ref{thm:main}. By Lemma~\ref{lem:profile}(iii), $L_c(0)=c$, and therefore $\Delta(0)=L_1(0)-L_2(0)=-1$.

Let $Z_\beta=\beta+W$ with $W\sim N(0,2)$, and let
\[
A_\beta:=\{|W|\le \beta/2\}.
\]
By Proposition~\ref{prop:hlarge},
\[
h_2(z)-h_1(z)=z^{-1}+3z^{-3}+O(z^{-5}),
\quad z\to\infty.
\]
On $A_\beta$, $Z_\beta\in[\beta/2,3\beta/2]$, so
\[
h_2(Z_\beta)-h_1(Z_\beta)
=
Z_\beta^{-1}+3Z_\beta^{-3}+O(\beta^{-5}).
\]
Also, Lemma~\ref{lem:profile}(ii) implies
\[
|h_2(z)-h_1(z)|\le C(1+|z|),
\]
so the complement contributes only exponentially small mass:
\[
\E\!\left[|h_2(Z_\beta)-h_1(Z_\beta)|\mathbf 1_{A_\beta^c}\right]
=
O(e^{-c_0\beta^2})
\]
for some $c_0>0$.

On $A_\beta$, Taylor expansion in $W/\beta$ gives
\[
(\beta+W)^{-1}
=
\beta^{-1}-\beta^{-2}W+\beta^{-3}W^2
-\beta^{-4}W^3+O(\beta^{-5}|W|^4),
\]
and
\[
(\beta+W)^{-3}
=
\beta^{-3}-3\beta^{-4}W+O(\beta^{-5}W^2).
\]
The event $A_\beta$ and the law of $W$ are symmetric, so the odd terms have zero truncated expectation. Using $\E[W^2]=2$ and $\Pbb(A_\beta^c)=O(e^{-c_0\beta^2})$ therefore gives
\[
\E[Z_\beta^{-1}\mathbf 1_{A_\beta}]
=
\beta^{-1}+2\beta^{-3}+O(\beta^{-5}),
\quad
\E[Z_\beta^{-3}\mathbf 1_{A_\beta}]
=
\beta^{-3}+O(\beta^{-5}).
\]
Therefore
\[
\mathcal D(\beta)
=
\E[h_2(Z_\beta)-h_1(Z_\beta)]
=
\beta^{-1}+5\beta^{-3}+O(\beta^{-5}).
\]
Now Lemma~\ref{lem:profile}(iii) gives
\[
\Delta(\beta)=-1+\beta\,\mathcal D(\beta)=5\beta^{-2}+O(\beta^{-4}),
\]
so $\Delta(\beta)>0$ for all sufficiently large $\beta$.

Finally, continuity of $\Delta$ follows from dominated convergence and the linear-growth bound from
Lemma~\ref{lem:profile}(ii). Since $\Delta(0)=-1$, the intermediate value
theorem yields at least one zero $\beta_*\in(0,\infty)$.
\end{proof}

\end{document}

%% file: sim_tables/tab_regime_check_main_20260507.tex
\begin{table}[t!]
\caption{Selected exact finite-$d$ checks for the three asymptotic regimes}
\label{tab:regime-check-main}
\begin{center}
\small
\begin{tabular}{cccccc}
\toprule
Regime & $d$ & calibration & $c$ & finite-$d$ value & asymptotic target \\
\midrule
\multicolumn{6}{l}{\textit{Weak:} $R_d(\lambda;c)$} \\
weak & 2000 & $\lambda=0$ & 1 & 1.000 & 1.000 \\
weak & 2000 & $\lambda=0$ & 2 & 2.000 & 2.000 \\
weak & 2000 & $\lambda=1$ & 1 & 1.981 & 2.000 \\
weak & 2000 & $\lambda=1$ & 2 & 2.971 & 3.000 \\
weak & 2000 & $\lambda=4$ & 1 & 4.922 & 5.000 \\
weak & 2000 & $\lambda=4$ & 2 & 5.882 & 6.000 \\
\midrule
\multicolumn{6}{l}{\textit{Critical:} $R_d(\beta\sqrt d;c)-\beta\sqrt d$} \\
critical & 5000 & $\beta=1.0$ & 1 & -0.272 & -0.274 \\
critical & 5000 & $\beta=1.0$ & 2 & 0.250 & 0.244 \\
critical & 5000 & $\beta=\beta_*\approx 2.08$ & 1 & -3.034 & -3.105 \\
critical & 5000 & $\beta=\beta_*\approx 2.08$ & 2 & -3.018 & -3.105 \\
critical & 5000 & $\beta=3.0$ & 1 & -6.952 & -7.219 \\
critical & 5000 & $\beta=3.0$ & 2 & -7.229 & -7.521 \\
\midrule
\multicolumn{6}{l}{\textit{Strong:} $R_d(\rho d;c)/d$} \\
strong & 200 & $\rho=0.25$ & 1 & 0.208 & 0.208 \\
strong & 200 & $\rho=0.25$ & 2 & 0.206 & 0.208 \\
strong & 200 & $\rho=1$ & 1 & 0.506 & 0.506 \\
strong & 200 & $\rho=1$ & 2 & 0.506 & 0.506 \\
strong & 200 & $\rho=4$ & 1 & 0.803 & 0.803 \\
strong & 200 & $\rho=4$ & 2 & 0.803 & 0.803 \\
\bottomrule
\end{tabular}
\end{center}

\setlength{\baselineskip}{4mm}
\end{table}

%% file: sim_tables/tab_transfer_priors_main.tex
\begin{table}[t!]
\caption{Representative single global-scale hyperpriors at $d=5000$}
\label{tab:transfer-priors-main}
\begin{center}
\small
\begin{tabular}{lccc}
\toprule
Prior & $R_d(0;p)$ & $R_d(\sqrt d;p)-\sqrt d$ & $R_d(3\sqrt d;p)-3\sqrt d$ \\
\midrule
\multicolumn{4}{l}{\textit{$c=1$ class target: } $1$, $L_1(1)=-0.274$, $L_1(3)=-7.219$} \\
Half-Cauchy$(1)$ & 0.977 & -0.292 & -6.952 \\
Half-normal$(1)$ & 0.988 & -0.282 & -6.953 \\
Truncated flat on $\tau$ & 1.000 & -0.272 & -6.952 \\
Beta-$R^2$ $(a=\tfrac12,b=1)$ & 0.965 & -0.302 & -6.952 \\
\midrule
\multicolumn{4}{l}{\textit{$c=2$ class target: } $2$, $L_2(1)=0.244$, $L_2(3)=-7.521$} \\
Gamma on $g$ $(a=1,b=1)$ & 1.963 & 0.215 & -7.248 \\
Beta-$R^2$ $(a=1,b=1)$ & 1.929 & 0.184 & -7.263 \\
Truncated flat on $g$ & 2.000 & 0.250 & -7.229 \\
\bottomrule
\end{tabular}
\end{center}
{\footnotesize {\em Notes}: The displayed values are exact finite-\(d\) checks. In the second and third risk columns, the arguments \(\sqrt d\) and \(3\sqrt d\) refer to signal energy \(\norm{\theta_d}^2\). Within each class, the near-zero exponent of the SD-scale density determines the asymptotic target.}
\setlength{\baselineskip}{4mm}
\end{table}

%% file: sim_tables/tab_beyond_global_twoarch_20260507.tex
\begin{table}[t!]
\caption{Many-weak weak-signal comparison for two architectures beyond a single global scale}
\label{tab:beyond-global-twoarch}
\begin{center}
\small
\begin{tabular}{ccccccc}
\toprule
 & \multicolumn{3}{c}{Global-local (half-Cauchy locals)} & \multicolumn{3}{c}{Spike-and-slab ($q_d=m_d/d$)} \\
\cmidrule(lr){2-4}\cmidrule(lr){5-7}
$\lambda$ & $c=1$ & $c=2$ & Diff & $c=1$ & $c=2$ & Diff \\
\midrule
0 & 1.01 $(0.07)$ & 1.91 $(0.10)$ & 0.90 & 0.72 $(0.04)$ & 1.45 $(0.06)$ & 0.72 \\
1 & 0.94 $(0.07)$ & 1.82 $(0.09)$ & 0.88 & 0.68 $(0.05)$ & 1.38 $(0.06)$ & 0.69 \\
2 & 1.05 $(0.08)$ & 1.99 $(0.10)$ & 0.94 & 0.80 $(0.06)$ & 1.50 $(0.07)$ & 0.71 \\
4 & 0.97 $(0.09)$ & 1.83 $(0.11)$ & 0.86 & 0.62 $(0.05)$ & 1.26 $(0.06)$ & 0.64 \\
\bottomrule
\end{tabular}
\end{center}
{\footnotesize {\em Notes}: The table reports weak-regime excess risk \(\widehat R-\lambda\), where \(\lambda=\norm{\theta_d}^2\) denotes signal energy, under the many-weak sparse design with \(d=500\), \(m_d=\lfloor d^{3/4}\rfloor\), alternating signs on the first \(m_d\) coordinates, and \(q_d=m_d/d\) in the spike-and-slab architecture. The \(c=1\) columns use a half-Cauchy prior on the common global or slab SD scale; the \(c=2\) columns replace that prior component by an exponential prior on the corresponding variance scale. Parentheses show Monte Carlo standard errors, and ``Diff'' denotes the within-architecture increase in excess risk when the common scale changes from the \(c=1\) class to the \(c=2\) class.}
\setlength{\baselineskip}{4mm}
\end{table}